\newif\ifarxiv
\theoremstyle{plain}
{\itshape}{\rmfamily}
{\bfseries}{\itshape}
\theoremstyle{plain}
\newtheorem{lemma}{Lemma}
\theoremstyle{plain}
\theoremstyle{plain}
\newtheorem{definition}{Definition}
\theoremstyle{plain}
\newtheorem{theorem}{Theorem}
\newcommand{\Oh}{\mathcal{O}}
\newcommand{\vi}{\ensuremath{\mathsf{vi}}\xspace}
\newcommand{\tdp}{\ensuremath{\mathsf{tdp}}\xspace}
\newcommand{\nd}{\ensuremath{\mathsf{nd}}\xspace}
\newcommand{\cc}{\ensuremath{\mathsf{cc}}\xspace}
\newcommand{\vc}{\ensuremath{\mathsf{vc}}\xspace}
\newcommand{\bdd}[1][1]{\ensuremath{{#1}\textnormal{-}\mathsf{bdd}}\xspace}
\newcommand{\pvc}[1][4]{\ensuremath{{#1}\textnormal{-}\mathsf{pvc}}\xspace}
\newcommand{\core}[1][2]{\ensuremath{{#1}\textnormal{-}\mathsf{core}}\xspace}
\newcommand{\fvs}{\ensuremath{\mathsf{fvs}}\xspace}
\newcommand{\cvd}{\ensuremath{\mathsf{cvd}}\xspace}
\newcommand{\dco}{\ensuremath{\mathsf{dco}}\xspace}
\newcommand{\coc}[1][\ell]{\ensuremath{{#1}\textnormal{-}\mathsf{coc}}\xspace}
\newcommand{\tw}{\ensuremath{\mathsf{tw}}\xspace}
\newcommand{\sw}{\ensuremath{\mathsf{sw}}\xspace}
\newcommand{\mw}{\ensuremath{\mathsf{mw}}\xspace}
\newcommand{\md}{\ensuremath{\mathsf{md}}\xspace}
\newcommand{\conn}{\ensuremath{\mathsf{conn}}\xspace}
\newcommand{\cut}{\ensuremath{\mathsf{cut}}\xspace}
\newcommand{\mT}{\mathcal{T}}
\newcommand{\mv}{\mathcal{V}}
\newcommand{\poly}{\ensuremath{\operatorname{poly}}}
\newcommand{\mc}{\mathcal{C}}
\newcommand{\appref}[1]{\hyperref[proof:#1]{\appsymb}}
\DeclareMathOperator{\cl}{cl}
\begin{document}

\title{The Parameter Report:\\ An Orientation Guide for Data-Driven Parameterization}
\pagestyle{plain}

\author[1]{Christian Komusiewicz\orcidlink{0000-0003-0829-7032}}
\author[1,2]{Nils Morawietz\thanks{Supported by the French ANR, project ANR-22-CE48-0001 (TEMPOGRAL).}\orcidlink{0000-0002-7283-4982}}
\author[1]{Frank~Sommer\thanks{Supported by the Alexander von Humboldt Foundation.}\orcidlink{0000-0003-4034-525X}}
\author[1]{Luca Pascal Staus\thanks{Supported by the Carl Zeiss Foundation, Germany, 
within the project ‘‘Interactive Inference’’.}\orcidlink{0009-0004-3020-1011}}

\affil[1]{\small Institute of Computer Science, Friedrich Schiller University Jena, Germany}
\affil[2]{\small LaBRI, Université de Bordeaux, France}

\date{}
\maketitle

\begin{abstract}
A strength of parameterized algorithmics is that each problem can be parameterized by an essentially inexhaustible set of parameters. Usually, the choice of the considered parameter is informed by the theoretical relations between parameters with the general goal of achieving FPT-algorithms for smaller and smaller parameters. However, the FPT-algorithms for smaller parameters usually have higher running times and it is unclear whether the decrease in the parameter value or the increase in the running time bound dominates in real-world data. This question cannot be answered from purely theoretical considerations and any answer requires knowledge on typical parameter values.
  
To provide a data-driven guideline for parameterized complexity studies of graph problems, we present the first comprehensive comparison of parameter values for a set of benchmark graphs originating from real-world applications. Our study covers degree-related parameters, such as maximum degree or degeneracy, neighborhood-based parameters such as neighborhood diversity and modular-width, modulator-based parameters such as vertex cover number and feedback vertex set number, and the treewidth of the graphs. 

Our results may help assess the significance of FPT-running time bounds on the solvability of real-world instances. For example, the vertex cover number~$\vc$ of~$n$-vertex graphs is often only slightly below~$n/2$. Thus, a running time bound of~$\Oh(2^{\vc})$ is only slightly better than a running time bound of~$\Oh(1.4^{n})$. In contrast, the treewidth~$\tw$ is almost always below~$n/3$ and often close to~$n/10$, making a running time of~$\Oh(2^{\tw})$ much more practical on real-world instances.

We make our implementation and full experimental data openly available\footnote{The source code of all solvers, all graph data and our experimental results are all publicly available at \url{https://www.fmi.uni-jena.de/en/19723/parameter-report}.}. In particular, this provides the first implementations for several graph parameters such as 4-path vertex cover number and vertex integrity. 
\end{abstract}
\section{Introduction}
\label{sec:intro}

In the quest to cope with computational intractability, parameterized algorithmics offers an approach for designing exact algorithms with practically useful running time guarantees. The aim is to design FPT-algorithms, that is, algorithms with running time~$f(k)\cdot \poly(n)$ where~$k$ is a problem-specific parameter and~$n$ denotes the overall input size. 
Now, if~$k$ is small on real-world instances and if~$f$ does not grow too quickly then these algorithms have a practically feasible running time. The classic example for this approach is the NP-hard \textsc{Vertex Cover} problem where the input is a graph~$G$ and a number~$k$ and the task is to decide whether~$G$ contains a set of at most~$k$ vertices that cover all edges. \textsc{Vertex Cover} can be easily seen to be solvable in~$2^k\cdot \poly(n)$~time by a simple branching algorithm. After achieving such a first FPT-algorithm for a problem at hand, the daily trade of researchers in parameterized algorithms is now to chip away from the rock of intractability by providing better and better FPT-algorithms~\cite{DF99}.

There are two pathways to achieve this: The first is to improve FPT-algorithms for the parameter~$k$ at hand which means first and foremost to improve the superpolynomial running time part~$f(k)$. For \textsc{Vertex Cover} this pathway has been thoroughly explored, with the current best FPT-algorithm having a running time of~$\Oh(1.26^k + n)$~\cite{DN24}.  The second pathway is to consider smaller and smaller parameters: After developing an FPT-algorithm for some parameter~$k$, one may try to find an FPT-algorithm for some parameter~$k'$ which is never larger than~$k$. For \textsc{Vertex Cover}, such a parameter is for example the treewidth~$\tw$ of~$G$ for which an algorithm with running time~$\Oh(2^{\tw}\cdot n)$ is known.\footnote{Here, one assumes that one is given a tree decomposition of width~$\tw$.} From a purely theoretical standpoint, the latter algorithm is usually deemed preferable: For \textsc{Vertex Cover}, the parameterization by the solution size bound~$k$ is essentially a parameterization by the vertex cover number of~$G$, denoted~$\vc$. This number is the size of a smallest vertex cover of~$G$. Due to the known parameter relation~$\vc\ge \tw$, any FPT-algorithm for the treewidth trivially implies an FPT-algorithm for the vertex cover number~\vc. 
This line of research in parameterized algorithms is commonly referred to as multivariate algorithmics~\cite{BHKN14,FJR13,FLM+09,Kom16,KN12} or structural parameterization~\cite{BCJS16,GKO21,HKNS15,JK11}. There are at least two potential pitfalls with this approach: First, at least for graph problems, the space of known parameterizations is huge and ever-growing. Hence, it is infeasible to thoroughly study all parameterizations and essentially every choice of a collection of parameters in an algorithmic study is in some sense arbitrary. Second, the improvement in the parameter dimension often comes at the cost of an increased running time. In the \textsc{Vertex Cover} example, the running time bound for treewidth is larger than for the vertex cover number. Hence, it is not clear whether the theoretical improvement in the parameter dimension carries over to better algorithms in practice or whether the running times actually get worse. 

To avoid both pitfalls, we propose a data-driven approach, where the actual values of parameters in real-world instances guide the navigation through parameter space: When going from a parameterization~$k$ to a parameterization~$k'$, we may then choose the parameterization which has smaller values in practice so any obtained FPT-algorithms will be more relevant in practice. Moreover, when knowing the typical relation between the values of two parameters in real-world instances, we may better compare known running time bounds.

\paragraph{Our Contribution.} To make a first step towards this data-driven approach, we present a study on the values of 21 popular graph parameters on a benchmark set of 144 graphs from real-world applications. The~$21$ parameters are as follows:
\begin{itemize}
	\item Degree-related: maximum degree ($\Delta$), $h$-index, degeneracy ($d$), $\core[2]$, and $\core[3]$, \hbox{$c$-closure} ($c$), and weak-$c$-closure ($\gamma$).
	\item Neighborhood-based: neighborhood diversity ($\nd$), Dilworth number ($\nabla$), modular-width ($\mw$), and split-width ($\sw$).
	\item Modulator-based: vertex cover number  ($\vc$), $1$-bounded-degree deletion number  ($\bdd[1]$), $2$-bounded-degree deletion number  ($\bdd[2]$), $4$-path vertex cover number  ($\pvc[4]$),  cluster vertex deletion number ($\cvd$), distance to cograph ($\dco$), vertex-integrity~$(\vi)$ and feedback vertex set number ($\fvs$).
	\item Treewidth ($\tw$) and treedepth~($\tdp$).
\end{itemize}

We present various ways to compare the parameter values in an algorithmic setting and describe some scenarios how the parameter values may be used to guide the exploration of parameter space. Our results indicate, for example, that the running time bound of~$\Oh(2^\tw\cdot n)$ for the \textsc{Vertex Cover} problem is better than the~$\Oh(1.26^\vc+n)$-time bound.  We make the source code of the implementation of the algorithms for computing the considered graph parameters and all the experimental data fully available\footnote{For several parameters, for example for the treewidth, we use the currently best openly available solvers.} to facilitate the computation of parameter values by the parameterized algorithmics community. Since the exact computation of some parameters is hard, we split our evaluation for these parameters in two parts: First, we consider those 59 instances where all parameters can be computed exactly. For the remaining 85 instances, we resort to heuristically computed parameter values whenever the exact algorithms failed.  

Moreover, we analyze the parameter values that can be obtained via two approaches for fusing neighborhood diversity~$\nd$~\cite{Lampis20} and modular-width~$\mw$~\cite{BJ00,HK23,KM22} with essentially any other parameter~$k$ which gives a parameter that is simultaneously upper-bounded by~$\nd$ and~$k$ and~$\mw$ and~$k$, respectively. 

This work is structured as follows. In \Cref{sec:prelim} we introduce some graph notation and define all graph parameters under consideration. In \Cref{sec:eval} we present the evaluation for the exact parameter values. In \Cref{sec:eval-heuristic} we present the evaluation for the heuristically computed parameter values.  
Some details concerning the algorithms for computing the parameters are given in~\Cref{sec:algos}. 
 
\section{Preliminaries}
\label{sec:prelim}

\subsection{Graph Notation}

For~$x\in\mathds{N}$ by~$[x]$ we denote the set~$\{1,2,\ldots, x\}$.
We use standard graph notation according to Diestel~\cite{Diestel16}.
An \emph{undirected graph} is a tuple~$G \coloneqq  (V, E)$ where~$V$ is the set of \emph{vertices} and~$E\subseteq \{\{u,v\}\subseteq V: u\ne v\}$
is the set of \emph{edges}.
We also use~$V(G)$ and~$E(G)$ to denote the vertices and edges of~$G$, respectively.
We let~$n\coloneqq |V(G)|$ and~$m\coloneqq |E(G)|$.
For a vertex~$v$ of~$G$, we denote by~$N_G(v) \coloneqq \{w\in V(G): \{v,w\}\in E(G)\}$ the~\emph{open neighborhood} of~$v$.
Let~$X,Z\subseteq V(G)$.
By~$N_G(X)\coloneqq (\cup_{v\in X} N_G(v)) \setminus X$ we denote the \emph{open neighborhood of~$X$} and by~$N_G[X]\coloneqq N_G(X)\cup X$ we denote the \emph{closed neighborhood of~$X$}.
By~$E_G(X,Z)\coloneqq \{\{x,z\}\in E(G) : x\in X \text{ and } z\in Z\}$ we denote the set of edges \emph{between}~$X$ and~$Z$. 
Moreover, we use~$E_G(X)$ as a shorthand for~$E_G(X,X)$.
We omit the subscript if~$G$ is clear from context.
By~$G[X]\coloneqq (X,E_G(X))$ we denote the \emph{subgraph of~$G$ induced by}~$X$.
A graph~$G'\coloneqq (V',E')$ is a \emph{subgraph} of~$G$ if~$V'\subseteq V(G)$ and~$E'\subseteq E(G[V'])$.
We let~$G - X\coloneqq G[V\setminus X]$ denote the subgraph obtained by \emph{removing} the vertices of~$X$ from~$G$.

A sequence~$P\coloneqq (v_1,\dots, v_p)$ of vertices is a \emph{path} in~$G$ if~$\{v_i,v_{i+1}\}\in E(G)$ for each~$i\in[p-1]$ and no vertex appears more than once.
Its \emph{length} is~$p-1$.
For a path~$P\coloneqq (v_1,\dots, v_p)$ in~$G$, we denote by~$V(P) \coloneqq \{v_i: i\in [p]\}$ the vertices of~$P$ and by~$E(P) \coloneqq \{\{v_i,v_{i+1}\} : i\in [p-1]\}$.
Furthermore, a path~$P$ in~$G$ is an~\emph{induced path} if~$E(G[V(P)])=E(P)$.
Two vertices~$u$ and~$v$ are \emph{connected} if there exists a path~$P$ such that~$u,v\in V(P)$.
A \emph{connected component} of~$G$ is an inclusion maximal induced subgraph of~$G$ where any two vertices are connected to each other.
Let~$G$ be a graph and let~$X\subseteq V(G)$. 
By~$\conn(X)$ we denote the \emph{vertex connectivity} of the graph~$G[X]$, that is, $\conn(X)$ is the smallest number~$r$, such that for some~$Z\subseteq X$ of size~$r$, the graph~$G[X\setminus Z]$ has more than one connected component.
A vertex set~$X$ is a \emph{clique} if each two distinct vertices in~$X$ are adjacent. A vertex~$v$ is \emph{simplicial} if~$N(v)$ is a clique.

\subsection{Graph Parameters}

\subsubsection{Degree-Related}

Let~$v \in V(G)$.
We denote the \emph{degree} of~$v$ by~$\deg_G(v)\coloneqq |N(v)|$.
The \emph{maximum degree} and \emph{minimum degree} of $G$ are~$\Delta(G) \coloneqq  \max_{v \in V(G)} \deg_G(v)$ and~$\delta(G) \coloneqq  \min_{v \in V(G)} \deg_G(v)$, respectively.
The \emph{degeneracy} of~$G$ is $d(G) \coloneqq  \max_{S \subseteq V(G)} \delta_{G[S]}$~\cite{LW70}.
The~\emph{${h}$-index} of a graph~$G$, denoted~$h(G)$, is the largest integer~$h$ such that~$G$ has at least~$h$ vertices of degree at least~$h$~\cite{ES12}.
The \emph{closure number}~$\cl_G(v)$ of a vertex~$v$ is~$\max_{u\in V(G)\setminus N[v]}|N(v)\cap N(u)|$.
We say that~$G$ is \emph{$c$-closed} if~$\cl_G(v)< c$ for each vertex~$v\in V(G)$.
Furthermore, the \emph{closure number} of a graph $G$, denoted~$\cl(G)$, is the smallest integer~$c$ such that~$G$
is $c$-closed~\cite{FRSWW20}.
We say that~$G$ is weakly~$\gamma$-closed if every induced subgraph~$G'$ of~$G$ has a vertex~$v\in V(G')$ such that~$\cl_{G'}(v)<\gamma$.
The \emph{weak closure number} of a graph $G$, denoted~$\gamma(G)$ is the smallest integer~$\gamma$ such that~$G$ is weakly~$\gamma$-closed~\cite{FRSWW20}.

The \emph{$\core[k]$} of~$G$ is the maximal induced subgraph~$H$ of~$G$ such that~$\delta(H) \ge k$~\cite{Seidman83}.
For~$k\le d$, the $\core[k]$ is nonempty and for~$k>d$, the $\core[k]$ has size~0. We consider the cases~$k=2$ and~$k=3$ and let~$\core[2]$ and~$\core[3]$ denote the number of vertices of the corresponding cores.

\subsubsection{Neighborhood-Based Parameters}
Here, we define the considered parameters which aim at quantifying the complexity of the family of different neighborhoods in a graph.

Two vertices~$u,w\in V(G)$ have the \emph{same type} if~$N(u)\setminus\{w\}=N(w)\setminus\{u\}$. Vertices of the same type are also called \emph{twins}.
The \emph{neighborhood diversity}~$\nd(G)$ of~$G$ is the smallest number~$\ell$ such that there is a partition~$(V_1, \ldots, V_\ell)$ of~$V(G)$, where in each subset~$V_i$, all vertices have the same type~\cite{Lampis12}.

The \emph{Dilworth number}~$\nabla(G)$ of~$G$ is the size of a largest set of vertices~$D$ such that for each two vertices~$u$ and~$w$ in $D$ we have~$N(u)\not\subseteq N(w)$ and~$N(w)\not\subseteq N(u)$~\cite{Dilworth50,BKS+24}.

\paragraph{Modular-width.}
A \emph{modular decomposition} of a graph~$G=(V,E)$ is a pair~$(\mT,\beta)$ consisting of a rooted tree~$\mT=(\mv, \mathcal{A})$ and a function~$\beta$ that maps each node~$x\in \mv$ to a graph~$\beta(x)$.
If~$x$ is a leaf of~$\mT$, then~$\beta(x)$ contains a single vertex of~$V$ and for each vertex~$v\in V$, there is exactly one leaf~$\ell$ of~$\mathcal{T}$ such that the graph~$\beta(\ell)$ consists only of~$v$.
If~$x$ is not a leaf node, then the vertex set of~$\beta(x)$ is exactly the set of child nodes of~$x$ in~$\mathcal{T}$.
Moreover, let~$V_x$ denote the set of vertices of~$V$ contained in leaf nodes of the subtree rooted in~$x$.
Formally, $V_x$ is defined as~$V(\beta(\ell))$ for leaf nodes~$\ell$ and recursively defined as~$\bigcup_{y\in V(\beta(x))} V_y$ for each non-leaf node~$x$.
Moreover, we define~$G_x = (V_x,E_x) \coloneqq   G[V_x]$.
A modular decomposition has the property that for each non-leaf node~$x$ and any pair of distinct nodes~$y\in V(\beta(x))$ and~$z\in V(\beta(x))$, $y$ and~$z$ are adjacent in~$\beta(x)$ if and only if every vertex in~$V_y$ is adjacent in~$G$ to every vertex in~$V_z$, similarly,~$y$ and~$z$ are not adjacent if and only if there is no edge in~$G$ between a vertex in~$V_y$ and a vertex in~$V_z$.
Hence, it is impossible that there are vertex pairs~$(v_1,w_1)\in V_y \times V_z$ and~$(v_2,w_2)\in V_y \times V_z$ such that~$v_1$ is adjacent to~$w_1$ and~$v_2$ is not adjacent to~$w_2$.

We call~$\beta(x)$ the~\emph{quotient graph} of~$x$.
The~\emph{width of a modular decomposition} is the size of a largest vertex set of any quotient graph and the \emph{modular-width} of a graph~$G$, denoted by~$\mw(G)$, is the minimal width of any modular decomposition of~$G$~\cite{GLO13}.

\paragraph{Splitwidth.}
A~\emph{split} of a graph~$G=(V,E)$ is a partition~$(V_1, V_2)$ of~$V$ with~$|V_1| \geq 2$ and~$|V_2|\geq 2$ such that all vertices in~$V_1$ with at least one neighbor in~$V_2$ have the same neighborhood in~$V_2$.
In other words, there are sets~$V_1'\subseteq V_1$ and~$V_2'\subseteq V_2$ such that~$N(v)\cap V_2 = V'_2$ for each~$v\in V'_1$ and~$N(w)\cap V_2 = \emptyset$ for each~$w\in V_1\setminus V'_1$.
If there is no split for~$G$, we call~$G$~\emph{prime}.
Let~$(V_1,V_2)$ be a split of~$G$.
A~\emph{simple decomposition of~$G$ with respect to~$(V_1,V_2)$} consists of two graphs~$G_1$ and~$G_2$ where~$G_i = (W_i, E_i)$ for~$i\in \{1,2\}$ such that~$W_i = V_i \cup \{x\}$ for some vertex~$x$ which is not contained in~$V$, $G_i[V_i] = G[V_i]$ and~$x$ is adjacent to exactly the vertices of~$V'_i$ in~$G_i$.
The vertex~$x$ is called a~\emph{marker} vertex.
Conversely, two graphs~$G_1=(V_1,E_1)$ and~$G_2=(V_2,E_2)$ with~$V_1\cap V_2 = \{x\}$ can be \emph{composed} into a graph~$G \coloneqq  (V,E)$ as follows: 
The graph~$G$ is the union of~$G_1$ and~$G_2$ without the marker vertex~$x$ plus the edges between each neighbor of~$x$ in~$G_1$ and each neighbor of~$x$ in~$G_2$.
Formally, $V \coloneqq  (V_1 \cup V_2) \setminus \{x\}$ and~$E\coloneqq E_{G_1}(V_1\setminus \{x\}) \cup E_{G_2}(V_2\setminus \{x\}) \cup \{\{v_1, v_2\}: v_1 \in N_{G_1}(x), v_2 \in N_{G_2}(x)\}$.
Note that~$(V_1\setminus \{x\}, V_2\setminus\{x\})$ is a split for~$G$ and~$G_1$ and~$G_2$ are a simple decomposition of~$G$ with respect to~$(V_1\setminus \{x\}, V_2\setminus\{x\})$.
A~\emph{split decomposition}~$(\mathcal{T},\beta)$ of a graph~$G$ consists of an undirected tree~$\mathcal{T}=(\mv, \mathcal{E})$ and a function~$\beta$ that maps each of the nodes~$x$ of~$\mv$ to a prime graph~$\beta(x)$ such that
\begin{itemize}
\item $|V(\beta(x)) \cap V(\beta(y))| = 1$ if~$x$ and~$y$ are adjacent in~$\mT$,
\item $V(\beta(x)) \cap V(\beta(y)) = \emptyset$ if~$x$ and~$y$ are non-adjacent in~$\mT$, and
\item $G$ is equivalent to the graph obtained from recursively composing the graphs of all adjacent node pairs in~$\mT$.
\end{itemize} 
The~\emph{width of a split decomposition} is the size of the largest vertex set of any prime graph and the \emph{splitwidth},~denoted by~$\sw(G)$, of a graph~$G$ is the minimal width of any split decomposition of~$G$~\cite{Cun82}.

\subsubsection{Modulator-Based Parameters}\label{sec:subset_params}
A~\emph{graph property~$\Pi$} is a collection of graphs.
Let~$G$ be a graph and let~$\Pi$ be a graph property.
We call a vertex set~$S\subseteq V(G)$ a~\emph{modulator of~$G$ to~$\Pi$} if~$G-S$ is contained in~$\Pi$. 
In the following, we define several~\emph{modulator parameters} we analyze in this work, that is, graph parameters that are defined as the size of a smallest modulator to some fixed graph property~$\Pi$.

\paragraph{Vertex cover number.}
The~\emph{vertex cover number} of a graph~$G$, denoted by~$\vc(G)$, is the size of a smallest modulator of~$G$ to the graph property that consists of all edgeless graphs.

\paragraph{Bounded-degree deletion.}
Let~$r> 0$ be an integer.
The~\emph{$r$-bounded-degree deletion number} of a graph~$G$, denoted by~$\bdd[r](G)$, is the size of a smallest modulator of~$G$ to the graph property that consists of all graphs of maximum degree at most~$r$~\cite{BBNU12,GKO21,KHMN09}.

\paragraph{$d$-path vertex cover number.}
Let~$d> 0$ be an integer.
The~\emph{$d$-path vertex cover number} of a graph~$G$, denoted by~$\pvc[d](G)$, is the size of a smallest modulator of~$G$ to the graph property that consists of all graphs that contain no path of length~$d$~\cite{BKKS11}.

\paragraph{Feedback vertex set number.}
The~\emph{feedback vertex set number} of a graph~$G$, denoted by~$\fvs(G)$, is the size of a smallest modulator of~$G$ to the graph property that consists of all acyclic graphs~\cite{BBF99,IK21,LN22}.

\paragraph{Cluster vertex deletion number.}
The~\emph{cluster vertex deletion number} of a graph~$G$, denoted by~$\cvd(G)$, is the size of a smallest modulator of~$G$ to the cluster graphs which are the graphs where each connected component is a clique~\cite{DK12,HKMN10}.

\paragraph{Distance to Cographs.}
The~\emph{distance to cograph} of a graph~$G$, denoted by~$\dco(G)$, is the size of a smallest modulator of~$G$ to the cographs which are defined as the graphs that do not contain a path on four vertices $(P_4)$ as induced subgraph or, equivalently, the graphs with modular-width~2~\cite{Cai03a,NG10}.

\paragraph{Vertex Integrity.} For a graph~$H$, let $\cc(H)$ denote the number of vertices in a largest connected component of $H$. The \emph{vertex integrity} of~$G$ is defined as~$\vi(G) \coloneqq \min_{X \subseteq V(G)}(|X|+ \cc(G-X))$~\cite{BES87,DDH16,GHK+24}. Any set~$X$ with~$\vi(G) = |X| + \cc(G - X)$ is called a \emph{vi-set} of~$G$.

\subsubsection{Treewidth and Treedepth}
Finally, we give the definition for treewidth and the related parameter treedepth.
\paragraph{Treewidth.}
A \emph{tree decomposition} of a graph~$G=(V,E)$ is a pair~$(\mT,\beta)$ consisting of a rooted tree~$\mT=(\mv, \mathcal{A})$ and a function~$\beta\colon  \mv \to 2^V$ such that
\begin{enumerate}
\item for each vertex~$v$ of~$V$, there is at least one node~$x\in \mv$ with~$v\in \beta(x)$,
\item for each edge~$\{u,v\}$ of~$E$, there is at least one node~$x\in \mv$ such that~$\beta(x)$ contains~$u$ and~$v$, and
\item for each vertex~$v\in V$, the subgraph~$\mT[\mv_v]$ is connected, where~$\mv_v \coloneqq   \{x\in \mv: v\in \beta(x)\}$.
\end{enumerate}
We call~$\beta(x)$ the~\emph{bag} of~$x$.
The~\emph{width of a tree decomposition} is the size of the largest bag minus one and the \emph{treewidth} of a graph~$G$, denoted by~$\tw(G)$, is the minimal width of any tree decomposition of~$G$~\cite{Halin1976,RS86}.

\paragraph{Treedepth.}
A \emph{treedepth decomposition} of~$G$ is a rooted tree~$T$ using the same vertex set~$V(G)$ such that for each edge~$\{u,w\}\in E(G)$, either~$u$ is an ancestor of~$w$ or~$w$ is an ancestor of~$u$ in~$T$. 
The \emph{depth} of~$T$ is the maximum number of vertices on any root-to-leaf path. 
The \emph{treedepth}~$\tdp(G)$ of~$G$ is the minimum depth among all treedepth decompositions~\cite{NM06}.

\section{Exact Parameter Values}
\label{sec:eval}

\subsection{Experimental Setup}

Our experiments were performed on a server with two Intel(R) Xeon(R) Gold 6526Y CPUs with 2.80 GHz,~$16$~cores, and~$1024$~GB~RAM. Each individual experiment was performed on one core and was allowed to use up to~$28$~GB RAM and run for up to~$24$~hours. Our Python Code was executed using Python 3.13.5. To solve the ILPs we used Gurobi 12.0.1~\cite{gurobi}. To compile and run the existing solvers that were implemented in Java we used OpenJDK 21.0.8. To compile the existing solvers that were implemented in C or C++ we used CMake 3.28.3 and GCC 13.3.0.

The source code of all solvers, all graph data and our experimental results are all publicly available at 
\begin{quote}
\url{https://www.fmi.uni-jena.de/en/19723/parameter-report}.
\end{quote}

More precisely, the source code of all solvers is available at
\begin{quote}
  \url{https://git.uni-jena.de/algo-engineering/param-report}
\end{quote}
while the  graph data and our experimental results are all available at
\begin{quote}
  \url{https://git.uni-jena.de/algo-engineering/data/graph-repo}.
\end{quote}
  For archival purposes, the data set and code corresponding to the current version of this work are also available via Zenodo~\cite{KMSS25}. 

\subsection{Data Set}
We collected real-world graphs from various applications and repositories such as Konect~\cite{Kunegis13}, Matrix Market~\cite{BPR+97}, Network Repository~\cite{RA15} or SNAP~\cite{snapnets} with a focus on obtaining graphs of moderate size that are indeed undirected (and not just underlying graphs of directed graphs). The graphs arise for example in social network analysis, bioinformatics, scientific computing, or route planning. For most of the graphs, a detailed description of the content is given in the above-mentioned data repository.  
 Overall, we collected~$144$ graphs with~$20$ of them being weighted graphs (where we ignored the weights) and~$13$ being ego networks where we took the largest connected component of the graph induced by the open neighborhood of some vertex in a larger graph.\footnote{In social network analysis, the term ego network refers to the graph in the neighborhood of a vertex and it is a very common analysis step to work on ego networks. We chose the largest connected component because most parameters can be calculated independently on connected components, so the largest connected component is more indicative of instance difficulty than the whole graph.}  
 
 We partitioned these graphs into two categories as follows: For 59 graphs we were able to compute all parameter values exactly; we refer to them as the \emph{easy} graphs. The remaining 85 graphs for which at least one parameter value could not be computed exactly are referred to as the \emph{hard} graphs. A complete list of the easy and hard graphs can be found in~\Cref{sec:instances} in \Cref{table-instance-spec-easy} and \Cref{table-instance-spec-hard}, respectively.

\subsection{Single Parameters}
\paragraph{Absolute Parameter Values.} Arguably the most important aspect that we would like to consider are the values of the parameters in our study. Some statistics for these values and also for the relation between these parameters and~$n$ are presented in \Cref{fig:violin-plots,tab:sats2}. 

\begin{figure*}[t!]
        \subfloat[Degree-related parameters]{%
            \includegraphics[width=.49\linewidth]{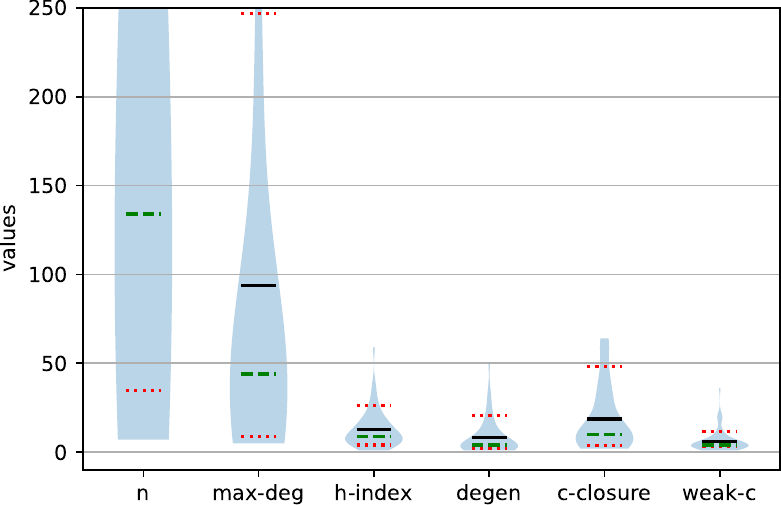}%
            \label{subfig:a}%
        }\hfill
        \subfloat[Neighborhood-based parameters]{%
            \includegraphics[width=.49\linewidth]{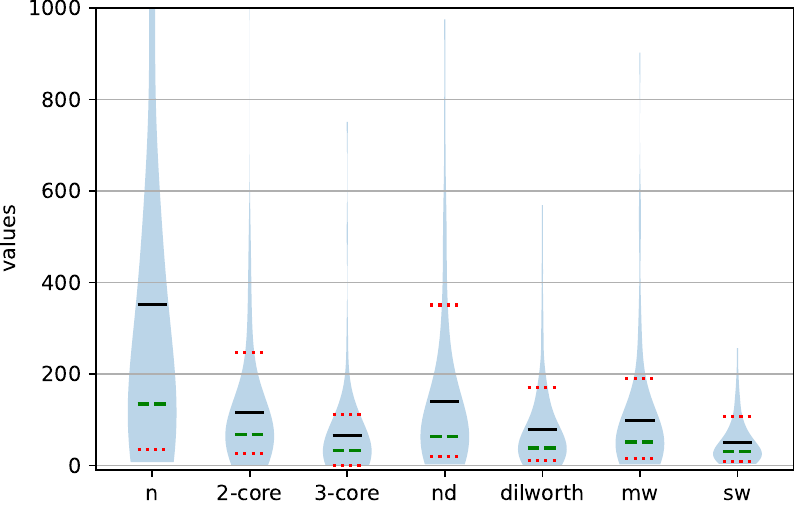}%
            \label{subfig:b}%
        }\\[4ex]
       \centering  
        \subfloat[Modulator and tree-based parameters]{%
            \includegraphics[width=.49\linewidth]{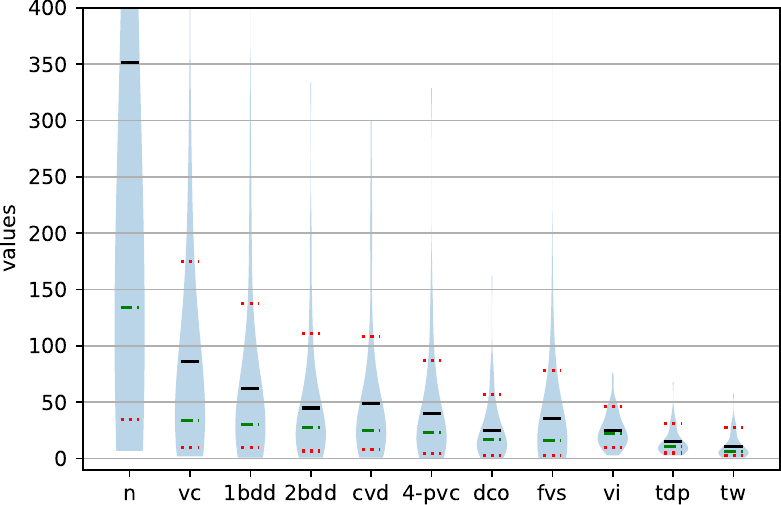}%
            \label{subfig:c}%
        }\hfill
        \caption{Distribution of the absolute parameter values on the 59 easy instances.
        The lines indicate the 10th percentile (dotted red), the average (solid black), the median (dashed green), and the 90th percentile (dotted red), respectively.
        The shape of the violin visualizes the value distribution.
      }
        \label{fig:violin-plots}
    \end{figure*}

\begin{table}[t]
\tiny
\caption{Average, median, and 90th percentile of~$k$ and~$k/n$ for all parameters~$k$.}
\centering
\label{tab:sats2}
\begin{tabular}{l r r r r r r}
\toprule
$k$ & avg. $k$ & median $k$ & 90th p. $k$ & avg. $k/n$ & median $k/n$ & 90th p. $k/n$ \\
\midrule
$n$ & 351.7 & 134.0 & 1\,058.2 & 1.00 & 1.00 & 1.00 \\
\midrule
$\Delta$ & 93.9 & 44.0 & 247.0 & 0.39 & 0.36 & 0.93 \\
$h$-index & 12.8 & 9.0 & 26.2 & 0.13 & 0.08 & 0.32 \\
$d$ & 8.3 & 4.0 & 20.4 & 0.09 & 0.04 & 0.25 \\
$c$ & 18.7 & 10.0 & 48.4 & 0.18 & 0.06 & 0.49 \\
$\gamma$ & 6.1 & 4.0 & 11.4 & 0.07 & 0.03 & 0.20 \\
\midrule
$\core$ & 115.3 & 67.0 & 247.2 & 0.60 & 0.74 & 1.00 \\
$\core[3]$ & 66.0 & 33.0 & 110.8 & 0.44 & 0.51 & 1.00 \\
\midrule
$\nd$ & 139.4 & 63.0 & 350.4 & 0.61 & 0.71 & 1.00 \\
$\nabla$ & 77.9 & 38.0 & 170.6 & 0.37 & 0.33 & 0.71 \\
$\mw$ & 97.4 & 51.0 & 189.6 & 0.51 & 0.56 & 1.00 \\
$\sw$ & 49.4 & 31.0 & 107.2 & 0.40 & 0.17 & 0.99 \\
\midrule
$\vc$ & 85.8 & 34.0 & 175.2 & 0.39 & 0.43 & 0.73 \\
$\bdd$ & 62.3 & 30.0 & 137.6 & 0.31 & 0.32 & 0.59 \\
$\bdd[2]$ & 44.9 & 28.0 & 111.4 & 0.24 & 0.21 & 0.52 \\
$\cvd$ & 48.9 & 25.0 & 108.4 & 0.25 & 0.24 & 0.45 \\
$\pvc$ & 39.7 & 23.0 & 87.4 & 0.24 & 0.23 & 0.54 \\
$\dco$ & 25.0 & 17.0 & 56.6 & 0.16 & 0.11 & 0.34 \\
\midrule
$\fvs$ & 35.5 & 16.0 & 78.0 & 0.22 & 0.19 & 0.55 \\
$\vi$ & 25.1 & 22.0 & 46.2 & 0.23 & 0.15 & 0.53 \\
$\tdp$ & 14.8 & 11.0 & 31.0 & 0.17 & 0.09 & 0.44 \\
$\tw$ & 10.8 & 6.0 & 28.0 & 0.13 & 0.06 & 0.31 \\
\bottomrule
\end{tabular}

\end{table}

We can make the following observations. First, the maximum degree~$\Delta$ has reasonable parameter values in the median, but there are some instances with prohibitively high values. All the degree-based parameters that are upper-bounded by~$\Delta$ have among the best statistics of all considered parameters. Unsurprisingly, degeneracy and weak-$\gamma$-closure have very low parameter values for most instances.

The 2-core is rather large for most of the instances in terms of absolute values and in terms of the relation to~$n$. The 3-core is often substantially smaller than the 2-core so it is motivated to consider this parameter whenever we have positive results for parameterization by the 2-core size.

For the neighborhood-based parameters, the values are rather large when compared to the other parameter groups. The neighborhood diversity \nd is the largest of these parameters and to provide a running time improvement, an FPT-algorithm for~$\nd$ would need to have essentially the same bound as a known algorithm for~$n$. In comparison, the Dilworth number~$\nabla$ and the splitwidth~\sw are more motivated from the perspective of parameter values. The statistics of the parameter values for splitwidth~\sw{} are even better than those for the Dilworth number~$\nabla$; modular-width is better than neighborhood diversity~\nd{} but worse than splitwidth~\sw{} and Dilworth number~$\nabla$.   

The modulator-based parameters are typically smaller than the neighborhood-based parameters. 
As maybe expected, the vertex cover number has rather large values. 
In comparison, \bdd is considerably smaller, and \cvd and \bdd[2] are mostly the same and even smaller than \bdd. In turn, the other parameters in that group, \dco and \pvc,  have substantially smaller values than \bdd[2] and~\cvd. Thus, from a data-driven standpoint these parameters should receive more consideration. 
Parameterization by \vc should be considered only as a first step and gives a substantial speed-up only if the running time bound for \vc is roughly the same as for $n$.
Maybe surprisingly, the vertex cover has roughly similar values to the Dilworth number and is overall somewhat smaller than the maximum degree.

The feedback vertex set number~\fvs is of course substantially larger than~\tw on some instances, but it also regularly takes on small values. Thus, one may consider developing FPT-algorithms for~\fvs even when the problem is FPT for treewidth, for example because such algorithms may avoid dynamic programming which is typical for treewidth algorithms and has several drawbacks including exponential space requirements and a tendency that the worst-case running time is often met in practice.

\paragraph{Klam-Values.}

A classic idea for computing the range of feasible parameter values for an FPT-algorithm are Klam-values, proposed by Downey and Fellows~\cite{DF99}. For a given FPT-algorithm with running time~$f(k)\cdot \poly(n)$, this is the largest value of~$k$ such that~$f(k)\le 10^{20}$. Surely, this is a simplification of the actual border of tractability: the choice of $10^{20}$ may seem arbitrary, the polynomial running time factor is ignored, and we assume that the actual running times are close to the proven worst-case running time bound. Nevertheless, this value serves as a rough estimate of the range of tractable parameters. 

For a range of typical running times~$f(k)$, we count how many of the $59$ easy instances would be solvable for the observed parameter values. The results are shown in \Cref{table-para-statistic-klam}. One can see for example, that a slightly subexponential running time~$2^{k/\log k}$ would give feasible algorithms for almost all instance-parameter combinations. For a running time of~$2^k$, still at least half of the instances are solvable for the majority of the parameters. For a running time of~$k^k$, one needs to consider either the smaller degree-based parameters or \tw to solve half of the instances. Maybe somewhat surprisingly, more than half of the instances can be solved when the parameter is the treewidth and the running time is doubly exponential. 

The table also reveals a danger when comparing two parameters based on the relation to~$n$ as done above: for splitwidth, the average value of~$\sw/n$ is~$0.40$, whereas for the Dilworth number~$\nabla$ the average value is~$0.37$. The situation is similar for the median and 90th percentile of~$k/n$. Given these values, one would prefer parameterization by the Dilworth number. However,
 \Cref{table-para-statistic-klam} shows that splitwidth will lead to more solved instances for every single one of the considered running time bounds. This is due to the fact that statistics on the distribution of $k/n$ may be influenced by many instances where~$k$ and~$n$ are so large that they are clearly outside of the realm of tractable instances. In other words, smaller values of $k/n$ may indicate that we get improved running times, but this improvement is possibly made for instances that are too hard to be solved exactly by FPT-algorithms for the parameter~$k$.

For the \textsc{Vertex Cover} problem, this table can also be used to decide whether the running time bound of~$\Oh(1.26^\vc+n)$ or the bound of~$\Oh(2^\tw\cdot n)$ is better on this data: We see that~$2^\tw$ is below~$10^{20}$ for all instances while this is barely not the case for~$2^{\vc/ \log \vc}$ which is smaller than~$1.26^\vc$ for the observed parameter values.\footnote{The advantage of~$\tw$ persists also when directly comparing the precise running time bounds including the dependence on~$n$: the bound $2^\tw\cdot n$ is below~$10^{20}$ for all $59$ instances while~$1.26^{\vc}+n$ is below ~$10^{20}$ for $53$ instances.} Hence, under the stated assumptions, the algorithm for treewidth would be preferable.    
 \begin{table}[t]
\tiny
\caption{Number of benchmark instances with~$k$ being below the Klam-values for different running time bounds~$f(k)$. The total number of instances is 59.}
\label{table-para-statistic-klam}
\centering
\begin{tabular}{l r r r r r r r r}
\toprule
parameter & $2^{k/\log(k)}$ & $\sqrt{2}^k$ & $2^k$ & $4^k$ & $k!$ & $k^k$ & $2^{k^2}$ & $2^{(2^k)}$ \\
\midrule
$\Delta$ & 59 & 47 & 38 & 27 & 19 & 14 & 6 & 4 \\
$h$-index & 59 & 59 & 59 & 56 & 51 & 44 & 27 & 19 \\
$d$ & 59 & 59 & 59 & 57 & 53 & 50 & 39 & 39 \\
$c$ & 59 & 59 & 59 & 48 & 41 & 37 & 24 & 19 \\
$\gamma$ & 59 & 59 & 59 & 58 & 58 & 54 & 48 & 46 \\
\midrule
$\core$ & 58 & 49 & 29 & 11 & 4 & 4 & 2 & 1 \\
$\core[3]$ & 58 & 54 & 42 & 31 & 25 & 21 & 17 & 15 \\
\midrule
$\nd$ & 57 & 42 & 30 & 13 & 7 & 5 & 3 & 3 \\
$\nabla$ & 59 & 49 & 39 & 25 & 15 & 14 & 3 & 3 \\
$\mw$ & 58 & 49 & 36 & 20 & 13 & 8 & 4 & 3 \\
$\sw$ & 59 & 54 & 45 & 31 & 20 & 15 & 8 & 5 \\
\midrule
$\vc$ & 58 & 48 & 39 & 29 & 17 & 13 & 5 & 4 \\
$\bdd$ & 59 & 53 & 42 & 33 & 21 & 17 & 6 & 5 \\
$\bdd[2]$ & 59 & 55 & 48 & 37 & 28 & 19 & 9 & 6 \\
$\cvd$ & 59 & 54 & 48 & 36 & 26 & 18 & 7 & 5 \\
$\pvc$ & 59 & 56 & 49 & 39 & 27 & 21 & 11 & 9 \\
$\dco$ & 59 & 58 & 54 & 44 & 36 & 29 & 16 & 11 \\
\midrule
$\fvs$ & 59 & 57 & 49 & 43 & 37 & 31 & 20 & 15 \\
$\vi$ & 59 & 59 & 58 & 45 & 29 & 19 & 6 & 4 \\
$\tdp$ & 59 & 59 & 58 & 55 & 48 & 41 & 21 & 10 \\
$\tw$ & 59 & 59 & 59 & 57 & 49 & 47 & 36 & 32 \\
\bottomrule
\end{tabular}

\end{table}

\paragraph{Annotating Parameter Hierarchies.}
\def\nddilmed{0.95}
\def\nddilnp{1.00}

\def\ndmwmed{0.58}
\def\ndmwnp{0.75}

\def\mwswmed{0.82}
\def\mwswnp{1.00}

\def\maxdegcmed{0.42}
\def\maxdegcnp{0.77}

\def\maxdeghidmed{0.36}
\def\maxdeghidnp{0.76}

\def\cgammamed{0.50}
\def\cgammanp{0.80}

\def\hiddegmed{0.57}
\def\hiddegnp{0.95}

\def\deggammamed{1.00}
\def\deggammanp{1.50}

\def\vchidmed{0.28}
\def\vchidnp{0.69}

\def\vcbddmed{0.80}
\def\vcbddnp{1.00}

\def\bddcvdmed{1.00}
\def\bddcvdnp{1.00}

\def\bddtbddmed{0.78}
\def\bddtbddnp{0.98}

\def\bddpvcmed{0.73}
\def\bddpvcnp{0.91}

\def\bddfvsmed{0.58}
\def\bddfvsnp{0.95}

\def\bddvimed{0.80}
\def\bddvinp{1.11}

\def\tbddtwmed{0.35}
\def\tbddtwnp{0.97}

\def\cvddcomed{0.60}
\def\cvddconp{0.81}

\def\pvcdcomed{0.85}
\def\pvcdconp{1.00}

\def\pvctdpmed{0.70}
\def\pvctdpnp{1.25}

\def\pvctwmed{0.44}
\def\pvctwnp{1.00}

\def\fvstwmed{0.62}
\def\fvstwnp{1.00}

\def\vitdpmed{0.63}
\def\vitdpnp{1.00}

\def\tdptwmed{0.64}
\def\tdptwnp{0.90}

\def\twdegmed{0.75}
\def\twdegnp{1.00}

\def\twocorefvsmed{0.25}
\def\twocorefvsnp{0.64}

\def\twocorethreecoremed{0.64}
\def\twocorethreecorenp{1.00}


\usetikzlibrary{shapes}

\tikzstyle{para}=[rectangle,draw=black,minimum height=.8cm,minimum width=1.3cm,fill=gray!10,rounded corners=1mm, on grid]

\newcommand{\tworows}[2]{\begin{tabular}{c}{#1}\\{#2}\end{tabular}}
\newcommand{\distto}[1]{\tworows{Distance to}{#1}}

\definecolor{r3}{rgb}{1, 0.4, 0.3}
\definecolor{g}{rgb}{0.2, 0.9, 0.3}
\definecolor{y2}{rgb}{1, 1, 0.3}

\newcommand{\centerBox}[2]{\begin{tabular}{c}
$#2$\\ 
$#1$
\end{tabular}}

\begin{figure}[t]
\centering
\small
\begin{tikzpicture}[node distance=2*0.75cm and 3.7*0.38cm, every node/.style={scale=0.57}]
\linespread{1}
\node[para,fill=g,xshift=1.5cm] (vc) {\vc};


\node[para,fill=g,xshift=7cm] (twocore)  {\core[2]};

\node[para,fill=g,xshift=1cm] (threecore) [below= of twocore] {\core[3]};
\draw (twocore) edge[bend left=20] node [midway, fill=white, inner sep = 0pt] {\centerBox{\twocorethreecorenp}{\twocorethreecoremed}}(threecore) ;

\node[para,fill=g,xshift=2cm] (fvs) [below left= of threecore] {\fvs};

\node[para,fill=g] (bdd) [below= of vc] {\bdd};
\draw (vc) -- (bdd) node [midway, fill=white, inner sep = 0pt] {\centerBox{\vcbddnp}{\vcbddmed}} ;

\node[para,fill=g,xshift=-3.5cm] (dcl) [below= of bdd] {\cvd};
\draw (bdd) edge[bend right=10] node [midway, fill=white, inner sep = 0pt] {\centerBox{\bddcvdnp}{\bddcvdmed}} (dcl) ;

\node[para,fill=g,xshift=-0.1cm] (tbdd) [left= of fvs] {\bdd[2]};
\draw (bdd) edge[bend left=10] node [midway, fill=white, inner sep = 0pt] {\centerBox{\bddtbddnp}{\bddtbddmed}}  (tbdd);

\node[para,fill=g,xshift=-1cm] (fourpvc) [below= of bdd] {\pvc};
\draw (bdd) edge[bend right=10]  node [midway, fill=white, inner sep = 0pt] {\centerBox{\bddpvcnp}{\bddpvcmed}} (fourpvc);

\node[para,fill=g,xshift=0.1cm] (vi) [left= of tbdd] {\vi};
\draw (bdd) edge node [midway, fill=white, inner sep = 0pt] {\centerBox{\bddvinp}{\bddvimed}} (vi) ;

\node[para,fill=g, xshift=-1.0cm] (tdp) [below= of vi] {\tdp};
\draw (vi) edge [bend left=10] node [midway, fill=white, inner sep = 0pt] {\centerBox{\vitdpnp}{\vitdpmed}} (tdp) ;

\node[para,fill=g,xshift=-10cm] (nd) {\nd};

\node[para,fill=g] (dwn) [below left= of nd] {$\nabla$};
\draw (nd) edge[bend right=20] node [midway, fill=white, inner sep = 0pt] {\centerBox{\nddilnp}{\nddilmed}} (dwn) ;

\node[para,fill=g,,xshift=-1.75cm] (mdw) [below right= of nd] {\mw};
\draw (nd) -- (mdw) node [midway, fill=white, inner sep = 0pt] {\centerBox{\ndmwnp}{\ndmwmed}} ;
\node[para,fill=g] (spw) [below=  of mdw] {\sw};
\draw (mdw) -- (spw) node [midway, fill=white, inner sep = 0pt] {\centerBox{\mwswnp}{\mwswmed}} ;

\node[para,fill=g,xshift=0.5cm] (distco) [below= of dcl] {\dco};
\draw (dcl) edge[bend right=10] node [midway, fill=white, inner sep = 0pt] {\centerBox{\cvddconp}{\cvddcomed}}(distco);
\draw (fourpvc) -- (distco) node [midway, fill=white, inner sep = 0pt] {\centerBox{\pvcdconp}{\pvcdcomed}} ;


\draw (bdd) edge[bend left=10] node [midway, fill=white, inner sep = 0pt] {\centerBox{\bddfvsnp}{\bddfvsmed}} (fvs);
\draw (twocore) edge[bend right=30] node [midway, fill=white, inner sep = 0pt] {\centerBox{\twocorefvsnp}{\twocorefvsmed}}  (fvs);

\node[para, xshift=1cm,fill=g] (tw) [below= of tdp]  {\tw};
\draw (tbdd) edge [bend left=20]  node [midway, fill=white, inner sep = 0pt] {\centerBox{\tbddtwnp}{\tbddtwmed}} (tw);
\draw (fourpvc) edge [bend right=20] node [midway, fill=white, inner sep = 0pt] {\centerBox{\pvctdpnp}{\pvctdpmed}} (tdp) ;
\draw (tdp) edge node [midway, fill=white, inner sep = 0pt] {\centerBox{\tdptwnp}{\tdptwmed}} (tw) ;

\draw (fvs) edge[bend left=20]  node [midway, fill=white, inner sep = 0pt] {\centerBox{\fvstwnp}{\fvstwmed}} (tw);

\node[para,fill=g,xshift=-5cm,] (mxd)  {$\Delta$};

\node[para,fill=g,xshift=0.2cm] (ccl) [below left= of mxd] {$c$};
\draw [bend right=20](mxd) -- (ccl) node [midway, fill=white, inner sep = 0pt] {\centerBox{\maxdegcnp}{\maxdegcmed}};
\node[para,fill=g,xshift=3cm] (hid) [below= of ccl] {$h$-index};
\draw (vc) -- (hid) node [midway, fill=white, inner sep = 0pt] {\centerBox{\vchidnp}{\vchidmed}} ;
\draw (mxd) --  (hid) node [midway, fill=white, inner sep = 0pt] {\centerBox{\maxdeghidnp}{\maxdeghidmed}} ;

\node[para,fill=g,xshift=1.5cm,yshift=-4.0cm] (deg) [below= of hid] {$d$};
\draw (hid) -- (deg) node [midway, fill=white, inner sep = 0pt] {\centerBox{\hiddegnp}{\hiddegmed}} ;
\draw  (tw) edge  node [midway, fill=white, inner sep = 0pt] {\centerBox{\twdegnp}{\twdegmed}} (deg);

\node[para,fill=g] (gcl) [below = of deg] {$\gamma$};
\draw (ccl) edge[bend right=20] node [midway, fill=white, inner sep = 0pt] {\centerBox{\cgammanp}{\cgammamed}} (gcl);
\draw (deg) -- (gcl) node [midway, fill=white, inner sep = 0pt] {\centerBox{\deggammanp}{\deggammamed}} ;

\end{tikzpicture}
\caption{An annotated parameter hierarchy. For an edge from parameter~$k$ above to parameter~$\ell$ below, the upper label of an edge is the median value of~$\ell/k$, the lower label of an edge is the 90th percentile value of~$\ell/k$. 
Note that the 90th percentile of~$\vi$ is larger than that of~$\bdd$ since in the former we additionally add the component size.}
	\label{fig-results}
\end{figure}


A common strategy for exploring the parameterization space is to use diagrams that represent parameter relations as a guideline~\cite{Jan13,SW19,Tran22}\footnote{See also the websites \url{https://manyu.pro/assets/parameter-hierarchy.html} and \url{https://vaclavblazej.github.io/parameters/}.}. In this diagram, parameters with large values are located in the upper part whereas parameters with small values are located in the lower part. An edge between two parameters represents a relation between them. Roughly speaking this means that the lower parameter is guaranteed to be smaller than the higher parameter. The concrete definition of what it means to be smaller may differ from diagram to diagram. Here, we consider a very strict definition, where an edge from a parameter~$k$ to a parameter~$\ell$ below means that in every graph~$G$, we have~$k(G)\ge\ell(G) +\Oh(1)$. The additive constant usually only concerns offsets that are created by design choices in the parameter definition. For example, for the parameters degeneracy~$d$ and weak closure $\gamma$, we have~$d\le \gamma+1$ but~$d\le \gamma$ is not always true.  

When pondering which parameters to consider for a particular problem, the process is often to start with large parameters in the hierarchy and then to proceed downwards as long as one still obtains FPT-algorithms. There may be, however, multiple options for proceeding downwards, as can be seen in \Cref{fig-results}. The choice of the next parameterization may also benefit from a data-driven approach: We should try to get FPT-results for the parameter whose values give the biggest improvement relative to the values of the current parameter, for which we have shown an FPT-algorithm. To provide an estimate of this improvement, we propose to annotate the parameter hierarchy by statistics on the relation between the parameters. The annotations for the considered parameters are shown in \Cref{fig-results}. Note that for the weak closure~$\gamma$ and the degeneracy~$d$, the value of the 90th percentile of~$\gamma/d$ is larger than~1. This is due to the above-mentioned additive constant in the bound~$d\le \gamma+1$: 10 of the instances have degeneracy 2 and weak-closure~3.

An example scenario, where the annotations may be helpful is as follows. Assume that we have shown an FPT-algorithm for~\bdd. Then, according to \Cref{fig-results}, one should consider \fvs or \pvc next since in our set of graphs they provide the biggest improvement.
In contrast, \cvd is not such a good choice since for half of the instances we essentially have~$\cvd(G)=\bdd(G)$.

\subsection{Multiple Parameters}
We also consider the influence of combining non-comparable parameters in different ways.

\def\avgcvd{48.9}
\def\medcvd{25.0}
\def\percvd{108.4}
\def\avgbdd{44.9}
\def\medbdd{28.0}
\def\perbdd{111.4}
\def\avgpvc{39.7}
\def\medpvc{23.0}
\def\perpvc{87.4}
\def\avgdco{25.0}
\def\meddco{17.0}
\def\perdco{56.6}
\def\avgfvs{35.5}
\def\medfvs{16.0}
\def\perfvs{78.0}
\def\avgcvdbdd{\textbf{38.7}}
\def\medcvdbdd{\textbf{20.0}}
\def\percvdbdd{\textbf{108.2}}
\def\avgcvdpvc{\textbf{33.5}}
\def\medcvdpvc{\textbf{20.0}}
\def\percvdpvc{\textbf{83.4}}
\def\avgbddpvc{\textbf{38.6}}
\def\medbddpvc{\textbf{20.0}}
\def\perbddpvc{87.4}
\def\avgcvdfvs{\textbf{25.3}}
\def\medcvdfvs{\textbf{15.0}}
\def\percvdfvs{\textbf{52.0}}
\def\avgbdddco{\textbf{24.9}}
\def\medbdddco{17.0}
\def\perbdddco{56.6}
\def\avgbddfvs{\textbf{31.2}}
\def\medbddfvs{\textbf{15.0}}
\def\perbddfvs{\textbf{67.8}}
\def\avgpvcfvs{\textbf{31.3}}
\def\medpvcfvs{\textbf{15.0}}
\def\perpvcfvs{\textbf{68.6}}
\def\avgdcofvs{\textbf{18.5}}
\def\meddcofvs{\textbf{11.0}}
\def\perdcofvs{\textbf{42.2}}

\usetikzlibrary{shapes}

\tikzstyle{para}=[rectangle,draw=black,minimum height=.8cm,fill=gray!10,rounded corners=1mm, on grid]

\begin{figure}[t]
\centering
\small
\begin{tikzpicture}[node distance=2*0.45cm and 3.7*0.38cm, every node/.style={scale=0.57}]
\linespread{1}

\node[para,fill=g,xshift=1.5cm, text width=1.0cm,align=center, yshift=0.9cm] (cvd) {\cvd \\ {\avgcvd} \\ {\medcvd} \\ {\percvd}};

\node[para,fill=g,xshift=6cm, text width=1.0cm,align=center, yshift=0.9cm] (bdd)  {\bdd[2] \\ {\avgbdd} \\ {\medbdd} \\ {\perbdd}};

\node[para,fill=g,xshift=10.5cm, text width=1.0cm,align=center, yshift=0.9cm] (pvc)  {\pvc \\ {\avgpvc} \\ {\medpvc} \\ {\perpvc}};

\node[para,fill=g,xshift=14.5cm, text width=1.0cm,align=center, yshift=0.9cm] (dco)  {\dco \\ {\avgdco} \\ {\meddco} \\ {\perdco}};

\node[para,fill=g,xshift=19cm, text width=1.0cm,align=center, yshift=0.9cm] (fvs)  {\fvs \\ {\avgfvs} \\ {\medfvs} \\ {\perfvs}};

\node[para,fill=g,xshift=0cm,yshift=-4cm, text width=1.9cm,align=center] (cvdbdd) {$\cvd + \bdd[2]$\\ {\avgcvdbdd} \\ {\medcvdbdd} \\ {\percvdbdd}};
\draw (cvd) to (cvdbdd);
\draw (bdd) to (cvdbdd);

\node[para,fill=g,xshift=3cm,yshift=-4cm, text width=1.9cm,align=center] (cvdpvc) {$\cvd + \pvc$\\ {\avgcvdpvc} \\ {\medcvdpvc} \\ {\percvdpvc}};
\draw (cvd) to (cvdpvc);
\draw (pvc) to (cvdpvc);

\node[para,fill=g,xshift=6cm,yshift=-4cm, text width=1.9cm,align=center] (bddpvc) {$\bdd[2] + \pvc$\\ {\avgbddpvc} \\ {\medbddpvc} \\ {\perbddpvc}};
\draw (bdd) to (bddpvc);
\draw (pvc) to (bddpvc);

\node[para,fill=g,xshift=9cm,yshift=-4cm, text width=1.9cm,align=center] (cvdfvs) {$\cvd + \fvs$\\ {\avgcvdfvs} \\ {\medcvdfvs} \\ {\percvdfvs}};
\draw (cvd) to (cvdfvs);
\draw (fvs) to (cvdfvs);

\node[para,fill=g,xshift=12cm,yshift=-4cm, text width=1.9cm,align=center] (bdddco) {$\bdd[2] + \dco$\\ {\avgbdddco} \\ {\medbdddco} \\ {\perbdddco}};
\draw (bdd) to (bdddco);
\draw (dco) to (bdddco);

\node[para,fill=g,xshift=15cm,yshift=-4cm, text width=1.9cm,align=center] (bddfvs) {$\bdd[2] + \fvs$\\ {\avgbddfvs} \\ {\medbddfvs} \\ {\perbddfvs}};
\draw (bdd) to (bddfvs);
\draw (fvs) to (bddfvs);

\node[para,fill=g,xshift=18cm,yshift=-4cm, text width=1.9cm,align=center] (pvcfvs) {$\pvc + \fvs$\\ {\avgpvcfvs} \\ {\medpvcfvs} \\ {\perpvcfvs}};
\draw (pvc) to (pvcfvs);
\draw (fvs) to (pvcfvs);

\node[para,fill=g,xshift=21cm,yshift=-4cm, text width=1.9cm,align=center] (dcofvs) {$\dco + \fvs$\\ {\avgdcofvs} \\ {\meddcofvs} \\ {\perdcofvs}};
\draw (dco) to (dcofvs);
\draw (fvs) to (dcofvs);

\end{tikzpicture}
\caption{An annotated parameter hierarchy for the pairwise minimum of \cvd, \bdd[2], \pvc, \dco, and~\fvs. 
For a parameter~$k$ the three values in its box are the average, median, and 90th percentile, respectively.
A value for a combined parameter is bold if it is smaller than the minimum of the respective values for the two individual parameters.}
	\label{fig-best-of}
\end{figure}


\paragraph{Best-of-Parameterization.} A first implicit type of combination is to take the minimum of two parameters. The scenario for this combination is that whenever we have two different FPT-algorithms, one for the parameter~$k$ and another one for the parameter~$\ell$, we could run for each instance the one whose parameter value is smaller. In other words, we have an FPT-algorithm for the minimum of the two parameters. Now in the algorithm design process, we would like to find parameter combinations where this minimum parameter is particularly worthwhile.

To give an example of such combinations, we consider the pairwise minimum of \cvd, \dco, \bdd[2], \pvc, and~\fvs.
\Cref{fig-best-of} shows the results.
For example, the minimum combination of \cvd with any of the three unrelated parameters in the diagram is notably smaller than either single parameter. Hence, the respective parameters complement each other nicely, with \cvd being small on some instances where the other parameters are large.
In contrast, combining \bdd[2] and \dco only gives a small improvement in the average, not on the median or 90th percentile. So either both parameters take on similar values in both instances, or \fvs is in most cases better than~\bdd[2]. If we instead combine \dco with \fvs, then we obtain the overall best parameter combination.

\paragraph{Combined Parameters.}

\begin{table}[t]
\tiny
\caption{Parameter combination of $\tw$ with an unrelated parameter~$k$. In the table we report statistics on the maximum of~$\tw$ and~$k$.}
\label{table-max-para-comb}
\centering

\begin{tabular}{l r r r r r r}
\toprule
 & \multicolumn{2}{c}{avg.} & \multicolumn{2}{c}{median} & \multicolumn{2}{c}{90th p.} \\[1ex]
$k$ & \multicolumn{1}{c}{$\max(k,\tw)$} & \multicolumn{1}{c}{$k$} & \multicolumn{1}{c}{$\max(k,\tw)$} & \multicolumn{1}{c}{$k$} & \multicolumn{1}{c}{$\max(k,\tw)$} & \multicolumn{1}{c}{$k$} \\
\midrule
$\tw$ & 10.8 & 10.8 & 6.0 & 6.0 & 28.0 & 28.0 \\
$\Delta$ & 93.9 & 93.9 & 44.0 & 44.0 & 247.0 & 247.0 \\
$h$-index & 13.2 & 12.8 & 9.0 & 9.0 & 28.0 & 26.2 \\
$c$ & 19.1 & 18.7 & 10.0 & 10.0 & 48.4 & 48.4 \\
$\core$ & 115.3 & 115.3 & 67.0 & 67.0 & 247.2 & 247.2 \\
$\core[3]$ & 66.5 & 66.0 & 33.0 & 33.0 & 110.8 & 110.8 \\
$\nabla$ & 79.0 & 77.9 & 38.0 & 38.0 & 170.6 & 170.6 \\
$\cvd$ & 49.7 & 48.9 & 28.0 & 25.0 & 108.4 & 108.4 \\
\bottomrule
\end{tabular}

\end{table}

A common phenomenon in parameterized complexity is that if a problem is W-hard with respect to some parameter, then adding a further parameter leads to an FPT-algorithm.
Here, we consider the combination of treewidth~\tw with one unrelated parameter~$k$.
For each such parameter we compute~$\max(\tw(G), k(G))$ on all graphs.
The results are shown in \Cref{table-max-para-comb}.
One can see that the value of~$\max(k,\tw)$ is almost identical to~$k$.
This implies that although \tw is unrelated to parameters such as~$\Delta$ or~$\nabla$, in practice \tw is usually smaller than these parameters.
Additionally, when designing an FPT-algorithm for one of these parameters, one can utilise the additional structure of a tree decomposition without significantly increasing the combined parameter value.

\subsection{Parameterizations Based on Quotient Graphs}
As observed in the analysis of the single parameters, the neighborhood-based parameterizations neighborhood diversity and modular-width are often not small enough to lead to practical FPT running times. Still, for many problems two vertices with similar neighborhood are, informally speaking, redundant and we would like to have parameters that are small whenever we have many such redundancies. We propose a systematic way to define such parameterizations by basing them on the quotient graphs of the modular decomposition or on the \emph{twin-quotient graph}. 

\paragraph{Twin-Quotient Graph-Based Parameters.} Recall that in the definition of neighborhood diversity, two vertices~$u$ and~$v$ are twins if~$N(u)\setminus \{v\}=N(v)\setminus \{u\}$. The twin relation is an equivalence relation. For a vertex~$u$, let~$[u]$ denote the equivalence class (also called \emph{twin class}) of~$u$.
\begin{definition}
  The \emph{twin-quotient graph} of a graph~$G$ is the graph~$G_\nd$ with vertex set~$V(G_\nd)\coloneqq \{[u]: u\in V(G)\}$ and~$E(G_\nd)\coloneqq \{\{[u],[v]\} : [u]\neq [v] \land \{u,v\}\in E(G) \}$. 
\end{definition}
Informally, the twin-quotient graph has one vertex of every twin class and two such vertices~$t$ and~$t'$ are adjacent if the members of~$t$ are adjacent to the members of~$t'$.  Note that the twin-quotient graph of~$G$ is an induced subgraph of~$G$. We use the notation~$G_\nd$ to highlight that the neighborhood diversity is essentially the number of vertices of that graph. Now for any parameter~$k$ we define a corresponding parameter~$k_\nd$ which is defined just as~$k$ but on the twin-quotient graph. This is formalized as follows.

\begin{definition}Let~$k:\mathcal{G}\to \mathds{N}$ be a parameter. Then, the parameter~$k_\nd:\mathcal{G}\to \mathds{N}$ is defined via $k_\nd(G)\coloneqq  k(G_\nd)$.
\end{definition}
In general, it is not guaranteed that~$k_\nd(G)\le k(G)$ holds for every graph. Most common parameters and all parameters considered in this work, however, behave monotonely with respect to taking induced subgraphs. More precisely, they do not increase whenever we delete a vertex from~$G$. For such parameters,~$k_\nd(G)\le k(G)$ holds since~$G_\nd$ is an induced subgraph of~$G$. This parameterization has been considered for example by~Lampis~\cite{Lampis20} who studied~$\tw_{\nd}$ for vertex coloring.

\paragraph{Modular-Width-Based Parameters.}
For the modular-width, the parameter definition is more involved. Recall that the modular-width is defined over modular decompositions. These are pairs~$(\mT,\beta)$ consisting of a rooted tree~$\mT=(\mv, \mathcal{A})$ and a function~$\beta$ that maps each node~$x\in \mv$ to its quotient graph~$\beta(x)$.  The modular-width is the maximum number of vertices of any~$\beta(x)$. The idea now is simply to consider any parameter~$k$ instead of the number of vertices. Formally, this leads to the following.

\begin{definition}
  Let~$k:\mathcal{G}\to \mathds{N}$ be
  a parameter. 
  Moreover, for a given graph~$G$ and a modular decomposition~$(\mT,\beta)$ of~$G$, we define~$k_\mw((\mT,\beta))\coloneqq\max_{x\in V(\mT)} k(\beta(x))$.
  Then, the parameter~$k_\mw:\mathcal{G}\to \mathds{N}$ is defined as $k_\mw(G)\coloneqq  k_\mw((\mT_G,\beta_G))$, where~$(\mT_G,\beta_G)$ denotes a modular
  decomposition of~$G$ that minimizes~$k_\mw((\mT_G,\beta_G))$. 
\end{definition}

Such a  modular decomposition-based parameter was first considered by Bodlaender and Jansen~\cite{BJ00} who studied~$\tw_\md$ and also more recently by Hegerfeld and Kratsch~\cite{HK23}. Another application was given in the context of parameterized local search where an FPT-algorithm for~$\Delta$ was improved to one for~$\Delta_\mw$~\cite{KM22}.

Note that with these definitions, we have~$n_\nd(G)=\nd(G)$ and~$n_\mw(G)=\mw(G)$.  Since we consider parameters~$k$ that are never larger than~$n$, each parameter~$k_\nd(G)$ is simultaneously upper-bounded by~$k$ \emph{and} by~$\nd(G)$. 
Moreover, as the twin-quotient graph of~$G$ is a supergraph of every prime graph in every modular decomposition of~$G$, we further obtain that~$k_\mw\le k_\nd$.

We now describe how we find the right modular decomposition of a graph, that is, the modular decomposition that defines the parameter~$k_\mw$, efficiently.

\begin{lemma}
Let~$k:\mathcal{G}\to \mathds{N}$ be a parameter for which the value does not increase by taking induced subgraphs.
Then, for each graph~$G$, we can compute a modular decomposition~$(\mT_G,\beta_G)$ of~$G$ in linear time, such that~$k_\mw((\mT_G,\beta_G)) = k_\mw(G)$. 
\end{lemma}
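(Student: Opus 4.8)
The plan is to construct the optimal decomposition by computing the canonical modular decomposition of~$G$ and then maximally refining its \emph{degenerate} nodes, and to prove that the result is a \emph{minimizer} of~$k_\mw$. Recall that in the canonical modular decomposition every internal node is labelled \emph{prime}, \emph{series} (quotient a clique), or \emph{parallel} (quotient edgeless). First I would compute this canonical decomposition in linear time, a well-known result (see~\cite{GLO13}). Then I would transform it into~$(\mT_G,\beta_G)$ by replacing each series node with~$p\ge 3$ children by an arbitrary binary tree of series nodes, and each parallel node with~$p\ge 3$ children by a binary tree of parallel nodes, leaving prime nodes untouched. Every grouping of the children of a degenerate node is again a module of~$G$ (a union of parts of a join is joined to everything outside, and dually for a disjoint union), so~$(\mT_G,\beta_G)$ is a valid modular decomposition; since the canonical tree has~$\Oh(n)$ nodes and each expansion adds a number of new nodes linear in its number of children, this step also runs in linear time. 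By construction the quotient graphs of~$(\mT_G,\beta_G)$ are exactly the prime quotients~$Q_P$ of the canonical tree, a~$K_2$ for every series node, and a~$\overline{K_2}$ (two isolated vertices) for every parallel node, plus single-vertex leaves. Writing~$U$ for the maximum of~$k$ over this collection, monotonicity gives $k_\mw((\mT_G,\beta_G)) = U$, hence $k_\mw(G)\le U$.

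The remaining and main task is the matching lower bound: \emph{every} modular decomposition~$(\mT',\beta')$ of~$G$ satisfies $k_\mw((\mT',\beta'))\ge U$. Together with the previous paragraph this yields $k_\mw(G)=U=k_\mw((\mT_G,\beta_G))$, so~$(\mT_G,\beta_G)$ is a minimizer. Throughout I would use the standard fact that every node set~$V_x$ of a modular decomposition is a module of~$G$. The series and parallel contributions are the easy part: if the canonical tree has a series node, then~$G$ has two adjacent vertices~$u,v$ lying in different parts of that node, and their least common ancestor~$x'$ in~$\mT'$ places~$u,v$ into distinct child-modules, so~$\beta'(x')$ contains an edge and $k(\beta'(x'))\ge k(K_2)$; the parallel case is symmetric with a non-edge and~$\overline{K_2}$.

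The crux is showing that each prime quotient~$Q_P$ reappears, and here I would exploit the rigidity of prime graphs. Let~$M$ be the module of the prime node~$P$ with maximal strong submodules~$M_1,\dots,M_q$, so that~$Q_P\cong G[M]/\{M_1,\dots,M_q\}$ is prime; picking a representative~$v_i\in M_i$ gives $G[\{v_1,\dots,v_q\}]\cong Q_P$. The key claim is that in~$\mT'$ the least common ancestor~$x'$ of~$v_1,\dots,v_q$ places these~$q$ vertices into~$q$ \emph{distinct} child-modules. Indeed, if~$v_i,v_j$ fell into the same child-module~$V_y$, then $R_y := \{v_1,\dots,v_q\}\cap V_y$ would be a module of~$Q_P$, since every vertex outside the module~$V_y$ is adjacent to all or none of~$V_y$, hence to all or none of~$R_y$; primality of~$Q_P$ then forces~$|R_y|\le 1$, the only alternative~$R_y=\{v_1,\dots,v_q\}$ contradicting that~$x'$ is the least common ancestor. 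With the~$v_i$ in distinct children, adjacency in~$\beta'(x')$ between these children equals adjacency of the~$v_i$ in~$G$, so~$Q_P$ is an induced subgraph of~$\beta'(x')$ and $k(\beta'(x'))\ge k(Q_P)$. Maximizing over all prime, series, and parallel nodes gives $k_\mw((\mT',\beta'))\ge U$. I expect this module-intersection-and-primality step to be the main obstacle, as it is where the full generality of the non-canonical decomposition~$(\mT',\beta')$ must be controlled; the construction and its linear-time analysis are comparatively routine once a linear-time canonical decomposition is assumed.
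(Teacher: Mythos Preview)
Your proposal is correct and follows essentially the same route as the paper: compute the canonical modular decomposition in linear time and argue that every prime quotient must occur as an induced subgraph of some quotient in \emph{any} modular decomposition, which together with the monotonicity of~$k$ yields optimality. The paper's proof simply asserts this induced-subgraph property as a fact and implicitly treats all quotients as prime, whereas you make the binarization of series and parallel nodes explicit and supply the module-intersection-and-primality argument in full; so your version is a more careful execution of the same idea rather than a different approach.
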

\begin{proof}
We can compute in linear time~\cite{DBLP:conf/icalp/TedderCHP08} a modular decomposition~$(\mT,\beta)$ of~$G$ for which each quotient graph~$\beta(x)$ on at least two vertices is~\emph{prime}, that is, the modular width of each quotient graph~$\beta(x)$ equals the number of vertices of~$\beta(x)$.
Let~$(\mT',\beta')$ be any other modular decomposition of~$G$, then for each~$x \in V(\mT)$, there is a~$y\in V(\mT')$, such that~$\beta(x)$ is isomorphic to an induced subgraph of~$\beta'(y)$.
In particular for~$x^* = \arg \min_{x\in V(\mT)} k(\beta(x))$, there is some~$y\in V(\mT')$ where~$\beta(x^*)$ is isomorphic to an induced subgraph of~$\beta'(y)$.
As parameter~$k$ does not increase by taking induced subgraphs, this implies that~$k(\beta(x^*))\leq k(\beta'(y))$, which implies that~$k_\mw((\mT,\beta)) \leq k_\mw((\mT',\beta'))$.
Hence, $k_\mw((\mT,\beta)) = k_\mw(G)$.
\end{proof}

\begin{table}[t]
\tiny
\caption{Values of the parameterization based on twin-quotient graphs,~$k_\nd$, and modular decompositions,~$k_\mw$, relative to the original parameter~$k$.}
\label{table-md-q-easy}
\centering

\begin{tabular}{l r r r r r r}
\toprule
 & \multicolumn{2}{c}{avg.} & \multicolumn{2}{c}{median} & \multicolumn{2}{c}{90th p.} \\[1ex]
 & \multicolumn{1}{c}{$k_{\nd} / k$}& \multicolumn{1}{c}{$k_{\mw} / k$} & \multicolumn{1}{c}{$k_{\nd} / k$} & \multicolumn{1}{c}{$k_{\mw} / k$} & \multicolumn{1}{c}{$k_{\nd} / k$} & \multicolumn{1}{c}{$k_{\mw} / k$} \\
\midrule
$n$ & 0.61 & 0.51 & 0.71 & 0.56 & 1.00 & 1.00 \\
\midrule
$\Delta$ & 0.60 & 0.59 & 0.71 & 0.71 & 1.00 & 1.00 \\
$h$-index & 0.82 & 0.80 & 0.88 & 0.88 & 1.00 & 1.00 \\
$d$ & 0.88 & 0.87 & 1.00 & 1.00 & 1.00 & 1.00 \\
$c$ & 0.73 & 0.72 & 0.87 & 0.87 & 1.00 & 1.00 \\
$\gamma$ & 0.96 & 0.96 & 1.00 & 1.00 & 1.00 & 1.00 \\
\midrule
$\core$ & 0.69 & 0.62 & 0.71 & 0.64 & 1.00 & 1.00 \\
$\core[3]$ & 0.56 & 0.53 & 0.70 & 0.58 & 1.00 & 1.00 \\
\midrule
$\nd$ & 0.90 & 0.75 & 1.00 & 0.94 & 1.00 & 1.00 \\
$\nabla$ & 0.89 & 0.75 & 1.00 & 0.96 & 1.00 & 1.00 \\
$\mw$ & 1.00 & 1.00 & 1.00 & 1.00 & 1.00 & 1.00 \\
$\sw$ & 0.99 & 0.98 & 1.00 & 1.00 & 1.00 & 1.00 \\
\midrule
$\vc$ & 0.83 & 0.69 & 0.99 & 0.91 & 1.00 & 1.00 \\
$\bdd$ & 0.74 & 0.66 & 0.83 & 0.75 & 1.00 & 1.00 \\
$\bdd[2]$ & 0.71 & 0.64 & 0.79 & 0.75 & 1.00 & 1.00 \\
$\cvd$ & 0.81 & 0.70 & 0.93 & 0.91 & 1.00 & 1.00 \\
$\pvc$ & 0.86 & 0.74 & 1.00 & 0.98 & 1.00 & 1.00 \\
$\dco$ & 0.94 & 0.81 & 1.00 & 1.00 & 1.00 & 1.00 \\
\midrule
$\fvs$ & 0.77 & 0.70 & 0.88 & 0.86 & 1.00 & 1.00 \\
$\vi$ & 0.85 & 0.81 & 0.92 & 0.89 & 1.00 & 1.00 \\
$\tdp$ & 0.93 & 0.92 & 1.00 & 1.00 & 1.00 & 1.00 \\
$\tw$ & 0.91 & 0.90 & 1.00 & 1.00 & 1.00 & 1.00 \\
\bottomrule
\end{tabular}

\end{table}

\paragraph{Parameter Values.}
Table~\ref{table-md-q-easy} shows the values of these two variants of each parameter relative to the corresponding parameter of the original graph.
Overall we can see the biggest improvement for the maximum degree $\Delta$ which suggests that high degree vertices are generally connected to a lot of vertices with similar neighborhoods. Other parameters with sizable improvements are the closure number $c$, the $\core$ and $\core[3]$, the 1- and 2-bounded-degree deletion numbers, and the feedback vertex set number~$\fvs$: Here the median values decrease by up to $40\%$ while the median values for most other parameters decrease by less than $10\%$. Hence, for these parameters it seems worthwhile to lift tractability results from the standard parameters to their quotient-graph-based counterparts.

\begin{table}[t]
\tiny
\caption{Average, median, and 90th percentile value of the percentage increase from the exact values to the heuristic results on the set of $59$ instances where all of these values could be calculated exactly.}
\label{tab:stats_heuristics_comparison}
\centering
\begin{tabular}{l r r r}
\toprule
$k$ & avg. $k$ & median $k$ & 90th p. $k$ \\
\midrule
$\vc$ & 0.9 & 0.0 & 3.1 \\
$\bdd$ & 1.8 & 0.0 & 7.2 \\
$\bdd[2]$ & 4.3 & 2.2 & 13.1 \\
$\cvd$ & 3.4 & 0.0 & 12.7 \\
$\pvc$ & 11.3 & 1.5 & 29.5 \\
$\dco$ & 5.1 & 4.5 & 14.4 \\
\midrule
$\fvs$ & 1.3 & 0.0 & 4.8 \\
$\vi$ & 5.7 & 0.0 & 20.2 \\
$\tdp$ & 0.0 & 0.0 & 0.0 \\
$\tw$ & 0.0 & 0.0 & 0.0 \\
\bottomrule
\end{tabular}

\end{table}

\section{Heuristic Parameters}
\label{sec:eval-heuristic}
To obtain an overview for a larger set of instances, we resorted to heuristics when the exact algorithms failed to compute the parameter. Using this, we obtained results on the 85 remaining graphs which, as expected, are mostly among the larger instances in our benchmark set. The parameters for which we used heuristics are $\vc$, $\bdd$, $\bdd[2]$, $\cvd$, $\pvc$, $\dco$, $\fvs$, $\vi$, $\tdp$, and $\tw$. For $\tw$ we use the winning solver from the 2017~PACE challenge~\cite{DBLP:journals/jco/Tamaki19}, for $\tdp$ we use the second-place solver from the 2020~PACE challenge~\cite{Strass20}. These are anytime solvers that continue trying to find better solutions until they receive a stop signal. We set a timeout for~$30$ minutes on each instance. For the remaining parameter computations we use simple greedy algorithms. To see how good the heuristics are, we calculated the percentage increase from the exact values to the heuristic results on the set of $59$ graphs where we were able to calculate all parameters exactly. These results are listed in Table~\ref{tab:stats_heuristics_comparison}; overall, the quality is good enough to allow for consideration of these heuristic parameter values.

\subsection{Absolute Parameter Values.} The distribution of the parameter values is illustrated in~\Cref{fig:violin-plots-heuristics}. 
\begin{figure*}[t!]
  \subfloat[Degree-related parameters]{%
    \includegraphics[width=.49\linewidth]{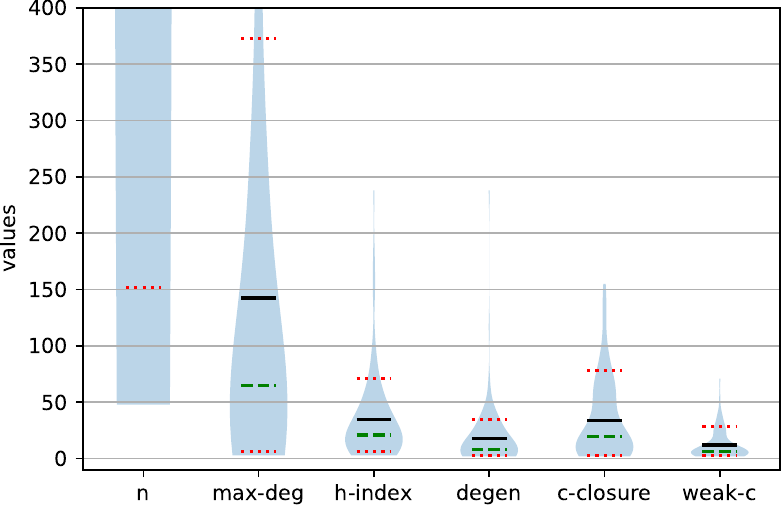}%
    \label{subfig:a}%
  }\hfill
  \subfloat[Neighborhood-based parameters]{%
    \includegraphics[width=.49\linewidth]{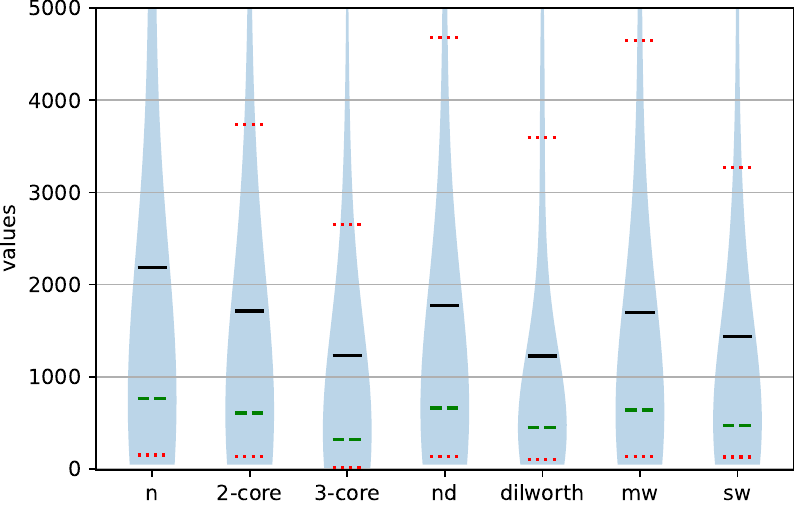}%
    \label{subfig:b}%
  }\\[4ex]
  
\centering  \subfloat[Modulator- and tree-based parameters]{%
    \includegraphics[width=.49\linewidth]{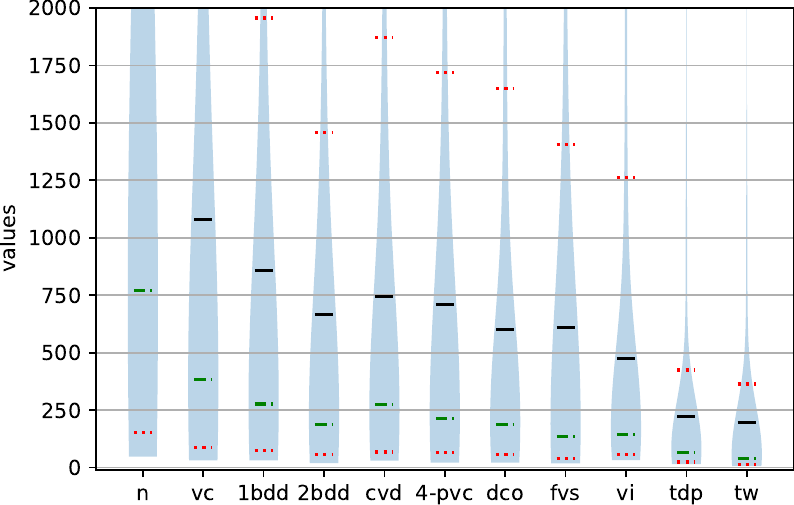}%
    \label{subfig:c}%
  }\hfill
  \caption{Distribution of the absolute parameter values on the 85~hard graphs.
    The thick lines indicate the 10th percentile (dotted red), the average (solid black), the median (dashed green), and the 90th percentile (dotted red), respectively.
    The shape of the violin visualizes the value distribution. 
  }
  \label{fig:violin-plots-heuristics}
\end{figure*}
In the degree-related parameter group, the overall picture is roughly the same as for the easy instance set. All parameters except the maximum degree take on reasonably small values for a large majority of the instances. The values of the maximum degree are also small for roughly half of the instances. The relative differences between~$n$ and the parameter values are now larger, which indicates that this parameter group grows sublinearly with~$n$.  
In the neighborhood-based parameter group, the values are quite large, and the relative differences between~$n$ and the parameter values are smaller than for the easy instances.
Intuitively, this means that the structure captured by these parameters is less pronounced in larger graphs. The two smallest parameters are Dilworth number and splitwidth, with the Dilworth number now being slightly smaller than splitwidth; both parameter values are usually too high to be useful, however.
Finally, in the modulator-based parameter group, the values are larger than in the degree-based group and smaller than in the neighborhood-based group. The values of the parameters~$\bdd[2]$,~$\cvd$,~$\pvc$,~$\dco$,and~$\fvs$ are roughly similar. The treewidth and treedepth are considerably smaller and their distribution is now clearly distinguishable from the other parameters. 
Moreover, the median values of both treewidth and treedepth are below~50 which shows that they are also practically very useful.

\subsection{Klam-Values}
\label{sec:klam}
 \begin{table}[t]
\tiny
\caption{Number of benchmark instances with~$k$ being below the Klam-values for different running time bounds~$f(k)$. 
The total number of instances is 85.}
\label{table-para-statistic-klam-heuristic}
\centering
\begin{tabular}{l r r r r r r r r}
\toprule
parameter & $2^{k/\log(k)}$ & $\sqrt{2}^k$ & $2^k$ & $4^k$ & $k!$ & $k^k$ & $2^{k^2}$ & $2^{(2^k)}$ \\
\midrule
$\Delta$ & 83 & 57 & 43 & 32 & 28 & 21 & 12 & 9 \\
$h$-index & 85 & 81 & 76 & 58 & 43 & 38 & 15 & 10 \\
$d$ & 85 & 84 & 82 & 73 & 61 & 58 & 46 & 41 \\
$c$ & 85 & 83 & 71 & 54 & 46 & 40 & 27 & 23 \\
$\gamma$ & 85 & 85 & 84 & 78 & 67 & 64 & 49 & 43 \\
\midrule
$\core$ & 43 & 8 & 2 & 0 & 0 & 0 & 0 & 0 \\
$\core[3]$ & 55 & 20 & 12 & 11 & 11 & 9 & 9 & 9 \\
\midrule
$\nd$ & 41 & 9 & 2 & 0 & 0 & 0 & 0 & 0 \\
$\nabla$ & 45 & 14 & 4 & 0 & 0 & 0 & 0 & 0 \\
$\mw$ & 42 & 9 & 2 & 0 & 0 & 0 & 0 & 0 \\
$\sw$ & 46 & 10 & 2 & 0 & 0 & 0 & 0 & 0 \\
\midrule
$\vc$ & 59 & 19 & 6 & 2 & 0 & 0 & 0 & 0 \\
$\bdd$ & 63 & 27 & 6 & 3 & 0 & 0 & 0 & 0 \\
$\bdd[2]$ & 66 & 33 & 11 & 4 & 1 & 0 & 0 & 0 \\
$\cvd$ & 64 & 30 & 7 & 3 & 0 & 0 & 0 & 0 \\
$\pvc$ & 66 & 30 & 11 & 3 & 0 & 0 & 0 & 0 \\
$\dco$ & 66 & 34 & 13 & 4 & 0 & 0 & 0 & 0 \\
\midrule
$\fvs$ & 67 & 41 & 18 & 7 & 1 & 0 & 0 & 0 \\
$\vi$ & 70 & 40 & 12 & 2 & 0 & 0 & 0 & 0 \\
$\tdp$ & 77 & 59 & 44 & 15 & 5 & 1 & 0 & 0 \\
$\tw$ & 78 & 65 & 53 & 34 & 24 & 16 & 3 & 0 \\
\bottomrule
\end{tabular}

\end{table}
To get a clearer view of the relation between growth of~$f$ and the number of feasible instances for each parameter, we again make use of Klam-Values. Table~\ref{table-para-statistic-klam-heuristic} shows the results. As suggested by the parameter value distributions, degree-based parameterizations and treewidth and treedepth are the most promising. For treedepth and treewidth, already slightly superexponential running time bounds such as~$k^k$ are too high for a large majority of the instances. To guarantee acceptable running times for modulator-based parameters such as the vertex cover number, one essentially can only afford single-exponential running times~$\alpha^k$ with a small base~$\alpha$.

\subsection{Parameterizations based on Quotient Graphs}
\label{sec:quotient}
\begin{table}[t]
\tiny
\caption{Values of the parameterization based on twin-quotient graphs,~$k_\nd$, and modular decompositions,~$k_\mw$, relative to the original parameter~$k$.}
\label{table-md-q-hard}
\centering

\begin{tabular}{l r r r r r r}
\toprule
 & \multicolumn{2}{c}{avg.} & \multicolumn{2}{c}{median} & \multicolumn{2}{c}{90th p.} \\[1ex]
 & \multicolumn{1}{c}{$k_{\nd} / k$}& \multicolumn{1}{c}{$k_{\mw} / k$} & \multicolumn{1}{c}{$k_{\nd} / k$} & \multicolumn{1}{c}{$k_{\mw} / k$} & \multicolumn{1}{c}{$k_{\nd} / k$} & \multicolumn{1}{c}{$k_{\mw} / k$} \\
\midrule
$n$ & 0.88 & 0.86 & 0.96 & 0.95 & 1.00 & 1.00 \\
\midrule
$\Delta$ & 0.91 & 0.91 & 1.00 & 1.00 & 1.00 & 1.00 \\
$h$-index & 0.96 & 0.96 & 1.00 & 1.00 & 1.00 & 1.00 \\
$d$ & 0.95 & 0.95 & 1.00 & 1.00 & 1.00 & 1.00 \\
$c$ & 0.95 & 0.95 & 1.00 & 1.00 & 1.00 & 1.00 \\
$\gamma$ & 0.99 & 0.98 & 1.00 & 1.00 & 1.00 & 1.00 \\
\midrule
$\core$ & 0.93 & 0.92 & 0.99 & 0.98 & 1.00 & 1.00 \\
$\core[3]$ & 0.84 & 0.83 & 0.99 & 0.98 & 1.00 & 1.00 \\
\midrule
$\nd$ & 0.98 & 0.97 & 1.00 & 1.00 & 1.00 & 1.00 \\
$\nabla$ & 0.98 & 0.96 & 1.00 & 1.00 & 1.00 & 1.00 \\
$\mw$ & 1.00 & 1.00 & 1.00 & 1.00 & 1.00 & 1.00 \\
$\sw$ & 1.00 & 1.00 & 1.00 & 1.00 & 1.00 & 1.00 \\
\midrule
$\vc$ & 0.94 & 0.92 & 0.99 & 0.99 & 1.00 & 1.00 \\
$\bdd$ & 0.94 & 0.93 & 0.97 & 0.99 & 1.00 & 1.00 \\
$\bdd[2]$ & 0.94 & 0.93 & 0.99 & 0.99 & 1.00 & 1.00 \\
$\cvd$ & 0.97 & 0.95 & 1.00 & 0.99 & 1.00 & 1.00 \\
$\pvc$ & 0.96 & 0.95 & 1.00 & 1.00 & 1.01 & 1.00 \\
$\dco$ & 1.00 & 0.98 & 1.00 & 1.00 & 1.01 & 1.00 \\
\midrule
$\fvs$ & 0.94 & 0.92 & 0.99 & 1.00 & 1.02 & 1.00 \\
$\vi$ & 0.96 & 0.97 & 0.98 & 0.99 & 1.01 & 1.00 \\
$\tdp$ & 0.99 & 0.99 & 1.00 & 1.00 & 1.00 & 1.00 \\
$\tw$ & 0.99 & 0.99 & 1.00 & 1.00 & 1.03 & 1.01 \\
\bottomrule
\end{tabular}

\end{table}

Finally, we consider the potential for improvements that can be obtained by using the parameters that are defined on twin-quotient graphs or on the quotient graphs of the modular decomposition. The results are shown in~\Cref{table-md-q-hard}. As could be expected from the large values of neighborhood diversity~$\nd$ and modular-width~$\mw$ on the hard instances, the improvement of~$k_{\nd}$ and~$k_{\mw}$ over~$k$ is usually only very small and virtually nonexistent for half of the instances. For some parameter combinations, it seems even that~$k_{\nd}>k$ but this is only because the parameter values are computed heuristically on some instances.  
\section{Conclusion}
\label{sec:conclusion}

We evaluated 21 popular graph parameters on 144 real-world graphs.
On one hand, some parameters such as neighborhood diversity~\nd or modular-width~\mw are surprisingly large for most graphs.
Thus, in order to improve upon~$f(n)$-time algorithms with the help of these parameters, one essentially needs to design algorithms with running time~$f(\nd)$ or~$f(\mw)$.
Of course studying these parameters is still well motivated if, for example, the resulting algorithms use interesting techniques.

On the other hand, treewidth~\tw, treedepth~\tdp and degree-based parameters that are smaller than the maximum degree~$\Delta$ such as $h$-index or degeneracy~$d$ are quite small on most of the graphs.
Hence, from a data-driven perspective the parameterized algorithmics community should focus on developing efficient FPT-algorithms for these parameters. The modulator-based parameters sit somewhere in between the large and small parameters. 
The best ones of these parameters, such as distance to cographs and feedback vertex set, take on promising values on some instances but these are much fewer than for \tw, \tdp, and the degree-based category. 

In addition, we considered combinations of treewidth with other parameters and showed that the combination of~$h$-index and treewidth is, from the data perspective, a good parameter combination that could be considered for problems that are hard when parameterized by treewidth alone. We also considered approaches of combining modular-width and neighborhood diversity with other parameters~$k$ by computing~$k$ on the modular decompositions and twin-quotient graphs, respectively. For the easier graphs, these two parameters gave some modest improvements over~$k$. In contrast, for the harder graphs the parameter values essentially stayed the same. 

In summary, our study shows that for our group of hard graphs (which are still modestly large compared to the massive graphs arising in many applications), the considered parameterizations are not fully satisfactory, and that there is still a need for other approaches to capture real-world structure. In other words, adding further parameters is a clear goal for future parameter reports. In particular, it would be interesting to extend the neighborhood-based category by smaller parameters such as twinwidth which had been excluded since current solvers for these parameters were not cabaple of solving a sufficient number of instances.

Another avenue for future work is to extend the data set on which we perform the study. Considering even larger graphs is possible but in our opinion less interesting at the moment, as we already have some modestly large graphs that have quite large values for all considered parameters. Instead, the goal will be to add further graphs of the considered sizes so that the number and diversity of graphs has increased sufficiently to distinguish the application background. For example, we may ask whether the treewidth of social networks is smaller or larger than the treewidth of infrastructure networks. With our current data set the single categories are, however, too small to provide a trustworthy answer to such questions.

One may also consider more sophisticated descriptive statistics for comparing parameters. For example, the currently considered median and percentile values of the relation~$k/n$ ignore whether an instance with~$k/n=0.1$ was large or small. In a similar direction, one could think of more sophisticated  models for estimating the tractable parameter value range other than the Klam-values. 

Finally, it would be interesting to perform similar studies for other classes of inputs such as directed graphs, strings, or Boolean formulas.

\section*{Acknowledgements}

We thank Michael Rohleder (Friedrich Schiller University Jena) for helping us with the implementation of the greedy algorithms for the heuristic parameter computations.

\bibliographystyle{plain}
\bibliography{refs}

@book{DF99,
  author       = {Rodney G. Downey and
                  Michael R. Fellows},
  title        = {Parameterized Complexity},
  series       = {Monographs in Computer Science},
  publisher    = {Springer},
  year         = {1999},
  url          = {https://doi.org/10.1007/978-1-4612-0515-9},
  doi          = {10.1007/978-1-4612-0515-9}
}

@inproceedings{Strass20,
  author       = {Ben Strasser},
  editor       = {Yixin Cao and
                  Marcin Pilipczuk},
  title        = {{PACE} Solver Description: Tree Depth with FlowCutter},
  booktitle    = {Proceedings of the 15th International Symposium on Parameterized and Exact Computation ({IPEC}~'20)},
  series       = {LIPIcs},
  pages        = {32:1--32:4},
  volume       = 180,
  publisher    = {Schloss Dagstuhl - Leibniz-Zentrum f{\"{u}}r Informatik},
  year         = 2020,
  url_          = {https://doi.org/10.4230/LIPIcs.IPEC.2020.32},
  doi_          = {10.4230/LIPICS.IPEC.2020.32},
  timestamp_    = {Fri, 21 Nov 2025 14:06:54 +0100},
  biburl_       = {https://dblp.org/rec/conf/iwpec/Strasser20.bib},
  bibsource_    = {dblp computer science bibliography, https://dblp.org}
}

@article{fulkerson1956,
  title={Note on Dilworth’s decomposition theorem for partially ordered sets},
  author={Fulkerson, Delbert R},
  journal={Proceedings of the American Mathematical Society},
  volume={7},
  number={4},
  pages={701--702},
  year={1956}
}

@article{FRS03,
  author       = {Stefan Felsner and
                  Vijay Raghavan and
                  Jeremy P. Spinrad},
  title        = {Recognition Algorithms for Orders of Small Width and Graphs of Small
                  Dilworth Number},
  journal      = {Order},
  volume       = {20},
  number       = {4},
  pages        = {351--364},
  year         = {2003},
  url          = {https://doi.org/10.1023/B:ORDE.0000034609.99940.fb},
  doi          = {10.1023/B:ORDE.0000034609.99940.FB}
}

@article{ES12,
  author    = {David Eppstein and
               Emma S. Spiro},
  title     = {The $h$-Index of a Graph and its Application to Dynamic Subgraph Statistics},
  journal   = {Journal of Graph Algorithms and Applications},
  volume    = {16},
  number    = {2},
  pages     = {543--567},
  year      = {2012}
}

@article{FJR13,
  author       = {Michael R. Fellows and
                  Bart M. P. Jansen and
                  Frances A. Rosamond},
  title        = {Towards fully multivariate algorithmics: Parameter ecology and the deconstruction of computational complexity},
  journal      = {European Journal of Combinatorics},
  volume       = {34},
  number       = {3},
  pages        = {541--566},
  year         = {2013},
}

@article{FRSWW20,
  author    = {Jacob Fox and
               Tim Roughgarden and
               C. Seshadhri and
               Fan Wei and
               Nicole Wein},
  title     = {Finding Cliques in Social Networks: A New Distribution-Free Model},
  journal   = {{SIAM} Journal on Computing},
  volume    = {49},
  number    = {2},
  pages     = {448--464},
  year      = {2020}
}

@inproceedings{KM22,
  author       = {Christian Komusiewicz and
                  Nils Morawietz},
  editor_       = {Holger Dell and
                  Jesper Nederlof},
  title        = {Parameterized Local Search for Vertex Cover: When Only the Search Radius Is Crucial},
  booktitle    = {Proceedings of the 17th International Symposium on Parameterized and Exact Computation ({IPEC}~'22)},
  series       = {LIPIcs},
  volume       = {249},
  pages        = {20:1--20:18},
  publisher    = {Schloss Dagstuhl - Leibniz-Zentrum f{\"{u}}r Informatik},
  year         = {2022},
}

@article{LW70,
  title={$k$-{D}egenerate graphs},
  author={Lick, Don R. and White, Arthur T.},
  journal={Canadian Journal of Mathematics},
  volume={22},
  number={5},
  pages={1082--1096},
  year={1970},
  publisher={Cambridge University Press}
}

@inproceedings{KN12,
  author       = {Christian Komusiewicz and
                  Rolf Niedermeier},
  editor_       = {Branislav Rovan and
                  Vladimiro Sassone and
                  Peter Widmayer},
  title        = {New Races in Parameterized Algorithmics},
  booktitle    = {Proceedings of the 37th International
                  Symposium on Mathematical Foundations of Computer Science ({MFCS}~'12)},
  series       = {Lecture Notes in Computer Science},
  volume       = {7464},
  pages        = {19--30},
  publisher    = {Springer},
  year         = {2012},
}

@article{Seidman83,
  title={Network structure and minimum degree},
  author={Seidman, Stephen B.},
  journal={Social networks},
  volume={5},
  number={3},
  pages={269--287},
  year={1983},
  publisher={Elsevier}
}

@article{BES87,
  title={Vulnerability in graphs--a comparative survey},
  author={Barefoot, Curtis A. and Entringer, Roger and Swart, Henda},
  journal={Journal of Combinatorial Mathematics and Combinatorial Computing},
  volume={1},
  number={38},
  pages={13--22},
  year={1987}
}

@article{Lampis12,
  author       = {Michael Lampis},
  title        = {Algorithmic Meta-theorems for Restrictions of Treewidth},
  journal      = {Algorithmica},
  volume       = {64},
  number       = {1},
  pages        = {19--37},
  year         = {2012},
  url          = {https://doi.org/10.1007/s00453-011-9554-x},
  doi          = {10.1007/S00453-011-9554-X}
}

@article{Dilworth50,
 author = {R. P. Dilworth},
 journal = {Annals of Mathematics},
 number = {1},
 pages = {161--166},
 publisher = {Annals of Mathematics},
 title = {A Decomposition Theorem for Partially Ordered Sets},
 volume = {51},
 year = {1950}
}

@article{NM06,
  author       = {Jaroslav Nesetril and
                  Patrice Ossona de Mendez},
  title        = {Tree-depth, subgraph coloring and homomorphism bounds},
  journal      = {European Journal of Combinatorics},
  volume       = {27},
  number       = {6},
  pages        = {1022--1041},
  year         = {2006},
  url          = {https://doi.org/10.1016/j.ejc.2005.01.010},
  doi          = {10.1016/J.EJC.2005.01.010}
}

@article{Halin1976,
  title={{$S$}-functions for graphs},
  author={Halin, Rudolf},
  journal={Journal of Geometry},
  volume={8},
  pages={171--186},
  year={1976},
  publisher={Springer}
}

@article{RS86,
  author       = {Neil Robertson and
                  Paul D. Seymour},
  title        = {Graph Minors. {II.} Algorithmic Aspects of Tree-Width},
  journal      = {Journal of Algorithms},
  volume       = {7},
  number       = {3},
  pages        = {309--322},
  year         = {1986},
  url          = {https://doi.org/10.1016/0196-6774(86)90023-4},
  doi          = {10.1016/0196-6774(86)90023-4}
}

@article{BHKN14,
  author       = {Laurent Bulteau and
                  Falk H{\"{u}}ffner and
                  Christian Komusiewicz and
                  Rolf Niedermeier},
  title        = {Multivariate Algorithmics for {NP}-Hard String Problems},
  journal      = {Bulletin of the {EATCS}},
  volume       = 114,
  year         = 2014,
  url          = {http://eatcs.org/beatcs/index.php/beatcs/article/view/310},
  timestamp    = {Fri, 12 Feb 2021 13:39:57 +0100},
  biburl       = {https://dblp.org/rec/journals/eatcs/BulteauHKN14.bib},
  bibsource    = {dblp computer science bibliography, https://dblp.org}
}

@article{Kom16,
  author       = {Christian Komusiewicz},
  title        = {Multivariate Algorithmics for Finding Cohesive Subnetworks},
  journal      = {Algorithms},
  volume       = {9},
  number       = {1},
  pages        = {21},
  year         = {2016},
  url          = {https://doi.org/10.3390/a9010021},
  doi          = {10.3390/A9010021},
}

@article{NG10,
title={Bounded search tree algorithms for parametrized cograph deletion: Efficient branching rules by exploiting structures of special graph classes},
  author={Nastos, James and Gao, Yong},
  journal={Discrete Mathematics, Algorithms and Applications},
  volume={4},
  number={01},
  pages={1250008},
  year={2012},
  publisher={World Scientific}
}

@article{Cai03a,
  author       = {Leizhen Cai},
  title        = {Parameterized Complexity of Vertex Colouring},
  journal      = {Discrete Applied Mathematics},
  volume       = {127},
  number       = {3},
  pages        = {415--429},
  year         = {2003},
}

@article{HKMN10,
  author       = {Falk H{\"{u}}ffner and
                  Christian Komusiewicz and
                  Hannes Moser and
                  Rolf Niedermeier},
  title        = {Fixed-Parameter Algorithms for Cluster Vertex Deletion},
  journal      = {Theory of Computing Systems},
  volume       = {47},
  number       = {1},
  pages        = {196--217},
  year         = {2010},
  url          = {https://doi.org/10.1007/s00224-008-9150-x},
  doi          = {10.1007/S00224-008-9150-X},
  timestamp    = {Mon, 26 Oct 2020 08:37:55 +0100},
  biburl       = {https://dblp.org/rec/journals/mst/HuffnerKMN10.bib},
  bibsource    = {dblp computer science bibliography, https://dblp.org}
}

@inproceedings{DK12,
  author       = {Martin Doucha and
                  Jan Kratochv{\'{\i}}l},
  title        = {Cluster Vertex Deletion: {A} Parameterization between Vertex Cover
                  and Clique-Width},
  booktitle    = {Proceedings of the 37th International Symposium on Mathematical Foundations of Computer Science ({MFCS}~'12)},
  series       = {Lecture Notes in Computer Science},
  volume       = {7464},
  pages        = {348--359},
  publisher    = {Springer},
  year         = {2012},
  url          = {https://doi.org/10.1007/978-3-642-32589-2\_32},
  doi          = {10.1007/978-3-642-32589-2\_32}
}

@article{BKS+24,
  author       = {Ren{\'{e}} van Bevern and
                  Artem M. Kirilin and
                  Daniel A. Skachkov and
                  Pavel V. Smirnov and
                  Oxana Yu. Tsidulko},
  title        = {Serial and parallel kernelization of Multiple Hitting Set parameterized
                  by the Dilworth number, implemented on the {GPU}},
  journal      = {Journal of Computer and System Sciences},
  volume       = {139},
  pages        = {103479},
  year         = {2024},
  url          = {https://doi.org/10.1016/j.jcss.2023.103479},
  _doi          = {10.1016/J.JCSS.2023.103479},
  timestamp    = {Tue, 28 Nov 2023 20:06:03 +0100},
  biburl       = {https://dblp.org/rec/journals/jcss/BevernKSST24.bib},
  bibsource    = {dblp computer science bibliography, https://dblp.org}
}

@article{BCJS16,
  author       = {Henning Bruhn and
                  Morgan Chopin and
                  Felix Joos and
                  Oliver Schaudt},
  title        = {Structural Parameterizations for Boxicity},
  journal      = {Algorithmica},
  volume       = {74},
  number       = {4},
  pages        = {1453--1472},
  year         = {2016},
  url          = {https://doi.org/10.1007/s00453-015-0011-0},
  doi          = {10.1007/S00453-015-0011-0},
}

@phdthesis{Tran22,
  title={Expanding the Graph Parameter Hierarchy},
  author={Tran, Duc Long},
  year=2022,
  type = {Bachelor's thesis},
  school={TU Berlin}
}

@phdthesis{Jan13,
  title={The power of data reduction: Kernels for fundamental graph problems},
  author={Jansen, Bart M. P.},
  year=2013,
  school={Utrecht University}
}

@misc{SW19,
  title={The graph parameter hierarchy},
  author={Sorge, Manuel and Weller, Mathias},
  note={\url{https://manyu.pro/assets/parameter-hierarchy.pdf}},
  year=2019
}

@article{HKNS15,
  author       = {Sepp Hartung and
                  Christian Komusiewicz and
                  Andr{\'{e}} Nichterlein and
                  Ondrej Such{\'{y}}},
  title        = {On structural parameterizations for the 2-club problem},
  journal      = {Discrete Applied Mathematics},
  volume       = 185,
  pages        = {79--92},
  year         = 2015,
  url          = {https://doi.org/10.1016/j.dam.2014.11.026},
  doi          = {10.1016/J.DAM.2014.11.026},
}

@inproceedings{JK11,
  author       = {Bart M. P. Jansen and
                  Stefan Kratsch},
  editor_       = {D{\'{a}}niel Marx and
                  Peter Rossmanith},
  title        = {On Polynomial Kernels for Structural Parameterizations of Odd Cycle
                  Transversal},
  booktitle    = {Proceedings of the 6th International Symposium on Parameterized and Exact Computation ({IPEC}~'11)},
  series       = {Lecture Notes in Computer Science},
  volume       = {7112},
  pages        = {132--144},
  publisher    = {Springer},
  year         = {2011},
  url          = {https://doi.org/10.1007/978-3-642-28050-4\_11},
  doi          = {10.1007/978-3-642-28050-4\_11}
  }

@inproceedings{DN24,
  author       = {David G. Harris and
                  N. S. Narayanaswamy},
  editor_       = {Olaf Beyersdorff and
                  Mamadou Moustapha Kant{\'{e}} and
                  Orna Kupferman and
                  Daniel Lokshtanov},
  title        = {A Faster Algorithm for Vertex Cover Parameterized by Solution Size},
  booktitle    = {Proceedings of the 41st International Symposium on Theoretical Aspects of Computer Science ({STACS}~'24)},
  series       = {LIPIcs},
  volume       = 289,
  pages        = {40:1--40:18},
  publisher    = {Schloss Dagstuhl - Leibniz-Zentrum f{\"{u}}r Informatik},
  year         = 2024,
}

@book{Diestel16,
  author    = {Reinhard Diestel},
  title     = {Graph Theory, 5th Edition},
  series    = {Graduate Texts in Mathematics},
  volume    = {173},
  publisher = {Springer},
  year      = {2016}
}

@article{IK21,
  author       = {Yoichi Iwata and
                  Yusuke Kobayashi},
  title        = {Improved Analysis of Highest-Degree Branching for Feedback Vertex
                  Set},
  journal      = {Algorithmica},
  volume       = {83},
  number       = {8},
  pages        = {2503--2520},
  year         = {2021},
  url          = {https://doi.org/10.1007/s00453-021-00815-w},
  doi          = {10.1007/S00453-021-00815-W},
}

@article{LN22,
  author       = {Jason Li and
                  Jesper Nederlof},
  title        = {Detecting Feedback Vertex Sets of Size \emph{k} in \emph{O}\({}^{\mbox{{\(\star\)}}}\)(2.7\emph{k}) Time},
  journal      = {{ACM} Transactions on Algorithms},
  volume       = {18},
  number       = {4},
  pages        = {34:1--34:26},
  year         = {2022},
  url          = {https://doi.org/10.1145/3504027},
  doi          = {10.1145/3504027},
  timestamp    = {Mon, 05 Dec 2022 13:35:20 +0100},
  biburl       = {https://dblp.org/rec/journals/talg/LiN22.bib},
  bibsource    = {dblp computer science bibliography, https://dblp.org}
}

@article{BBF99,
  author       = {Vineet Bafna and
                  Piotr Berman and
                  Toshihiro Fujito},
  title        = {A 2-Approximation Algorithm for the Undirected Feedback Vertex Set
                  Problem},
  journal      = {{SIAM} Journal on Discrete Mathematics},
  volume       = {12},
  number       = {3},
  pages        = {289--297},
  year         = {1999},
  url          = {https://doi.org/10.1137/S0895480196305124},
  doi          = {10.1137/S0895480196305124},
  timestamp    = {Mon, 26 Oct 2020 08:48:11 +0100},
  biburl       = {https://dblp.org/rec/journals/siamdm/BafnaBF99.bib},
  bibsource    = {dblp computer science bibliography, https://dblp.org}
}

@article{BKKS11,
  author       = {Bostjan Bresar and
                  Frantisek Kardos and
                  J{\'{a}}n Katrenic and
                  Gabriel Semanisin},
  title        = {Minimum $k$-path vertex cover},
  journal      = {Discrete Applied Mathematics},
  volume       = {159},
  number       = {12},
  pages        = {1189--1195},
  year         = {2011},
  url          = {https://doi.org/10.1016/j.dam.2011.04.008},
  doi          = {10.1016/J.DAM.2011.04.008}
}

@inproceedings{GLO13,
  author       = {Jakub Gajarsk{\'{y}} and
                  Michael Lampis and
                  Sebastian Ordyniak},
  title        = {Parameterized Algorithms for Modular-Width},
  booktitle    = {Proceedings of the  8th International Symposium on Parameterized and Exact Computation ({IPEC}~'13)},
  series       = {Lecture Notes in Computer Science},
  volume       = {8246},
  pages        = {163--176},
  publisher    = {Springer},
  year         = {2013},
  url          = {https://doi.org/10.1007/978-3-319-03898-8\_15},
  doi          = {10.1007/978-3-319-03898-8\_15}
}

@article{DDH16,
  author       = {P{\aa}l Gr{\o}n{\aa}s Drange and
                  Markus S. Dregi and
                  Pim van 't Hof},
  title        = {On the Computational Complexity of Vertex Integrity and Component
                  Order Connectivity},
  journal      = {Algorithmica},
  volume       = {76},
  number       = {4},
  pages        = {1181--1202},
  year         = {2016}
}

@article{goddard1990integrity,
  title={Integrity in graphs: bounds and basics},
  author={Goddard, Wayne and Swart, Henda C},
  journal={Journal of Combinatorial Mathematics and Combinatorial Computing},
  volume={7},
  pages={139--151},
  year={1990}
}

@inproceedings{GHK+24,
  author       = {Tatsuya Gima and
                  Tesshu Hanaka and
                  Yasuaki Kobayashi and
                  Ryota Murai and
                  Hirotaka Ono and
                  Yota Otachi},
  title        = {Structural Parameterizations of Vertex Integrity},
  booktitle    = {Proceedings of the 18th International Conference and Workshops on Algorithms and Computation ({WALCOM}~'24)},
  series       = {Lecture Notes in Computer Science},
  volume       = {14549},
  pages        = {406--420},
  publisher    = {Springer},
  year         = {2024}
}

@article{Cun82,
  title={Decomposition of directed graphs},
  author={Cunningham, William H},
  journal={SIAM Journal on Algebraic Discrete Methods},
  volume={3},
  number={2},
  pages={214--228},
  year={1982},
  publisher={SIAM}
}

@article{CMR12,
  author       = {Pierre Charbit and
                  Fabien de Montgolfier and
                  Mathieu Raffinot},
  title        = {Linear Time Split Decomposition Revisited},
  journal      = {{SIAM} Journal on Discrete Mathematics},
  volume       = {26},
  number       = {2},
  pages        = {499--514},
  year         = {2012},
  url          = {https://doi.org/10.1137/10080052X},
  doi          = {10.1137/10080052X},
  timestamp    = {Sun, 02 Oct 2022 15:48:51 +0200},
  biburl       = {https://dblp.org/rec/journals/siamdm/CharbitMR12.bib},
  bibsource    = {dblp computer science bibliography, https://dblp.org}
}

@inproceedings{DBLP:conf/icalp/TedderCHP08,
  author       = {Marc Tedder and
                  Derek G. Corneil and
                  Michel Habib and
                  Christophe Paul},
  title        = {Simpler Linear-Time Modular Decomposition Via Recursive Factorizing
                  Permutations},
  booktitle    = {Proceedings of the 35th International Colloquium on Automata, Languages and Programming ({ICALP}~'08)},
  series       = {Lecture Notes in Computer Science},
  volume       = {5125},
  pages        = {634--645},
  publisher    = {Springer},
  year         = {2008}
}

@inproceedings{Kunegis13,
  author       = {J{\'{e}}r{\^{o}}me Kunegis},
  editor_       = {Leslie Carr and
                  Alberto H. F. Laender and
                  Bernadette Farias L{\'{o}}scio and
                  Irwin King and
                  Marcus Fontoura and
                  Denny Vrandecic and
                  Lora Aroyo and
                  Jos{\'{e}} Palazzo M. de Oliveira and
                  Fernanda Lima and
                  Erik Wilde},
  title        = {{KONECT:} the {K}oblenz network collection},
  booktitle    = {Proceedings of the 22nd International World Wide Web Conference ({WWW} '13)},
  pages        = {1343--1350},
  publisher    = {International World Wide Web Conferences Steering Committee / {ACM}},
  year         = {2013},
  url_          = {https://doi.org/10.1145/2487788.2488173},
  doi_          = {10.1145/2487788.2488173},
}

@inproceedings{DBLP:conf/iwpec/Tamaki23,
  author       = {Hisao Tamaki},
  title        = {A Contraction-Recursive Algorithm for Treewidth},
  booktitle    = {Proceedings of the 18th International Symposium on Parameterized and Exact Computation ({IPEC}~'23)},
  series       = {LIPIcs},
  volume       = {285},
  pages        = {34:1--34:15},
  publisher    = {Schloss Dagstuhl - Leibniz-Zentrum f{\"{u}}r Informatik},
  year         = {2023}
}

@Inbook{BPR+97,
author="Boisvert, Ronald F.
and Pozo, Roldan
and Remington, Karin
and Barrett, Richard F.
and Dongarra, Jack J.",
editor_="Boisvert, Ronald F.",
title="Matrix Market: a web resource for test matrix collections",
bookTitle="Quality of Numerical Software: Assessment and enhancement",
year="1997",
publisher="Springer US",
address="Boston, MA",
pages="125--137",
abstract="We describe a repository of data for the testing of numerical algorithms and mathematical software for matrix computations. The repository is designed to accommodate both dense and sparse matrices, as well as software to generate matrices. It has been seeded with the well-known Harwell-Boeing sparse matrix collection. The raw data files have been augmented with an integrated World Wide Web interface which describes the matrices in the collection quantitatively and visually. For example, each matrix has a Web page which details its attributes, graphically depicts its sparsity pattern, and provides access to the matrix itself in several formats. In addition, a search mechanism is included which allows retrieval of matrices based on a variety of attributes, such as type and size, as well as through free-text search in abstracts. The URL is http://math.nist.gov/MatrizMarket/.",
isbn="978-1-5041-2940-4",
doi="10.1007/978-1-5041-2940-4_9",
url="https://doi.org/10.1007/978-1-5041-2940-4_9"
}

@misc{snapnets,
  author       = {Jure Leskovec and Andrej Krevl},
  title        = {{SNAP Datasets}: {Stanford} Large Network Dataset Collection},
  howpublished = {\url{http://snap.stanford.edu/data}},
  month        = jun,
  year         = 2014
}

@inproceedings{RA15,
  author       = {Ryan A. Rossi and
                  Nesreen K. Ahmed},
  editor_       = {Blai Bonet and
                  Sven Koenig},
  title        = {The Network Data Repository with Interactive Graph Analytics and Visualization},
  booktitle    = {Proceedings of the Twenty-Ninth {AAAI} Conference on Artificial Intelligence (AAAI~'15)},
  pages        = {4292--4293},
  publisher    = {{AAAI} Press},
  year         = {2015},
  url          = {https://doi.org/10.1609/aaai.v29i1.9277},
  doi          = {10.1609/AAAI.V29I1.9277},
}

@inproceedings{Trim20,
  author       = {James Trimble},
  editor_       = {Yixin Cao and
                  Marcin Pilipczuk},
  title        = {{PACE} Solver Description: Bute-Plus: {A} Bottom-Up Exact Solver for
                  Treedepth},
  booktitle    = {Proceedings of the 15th International Symposium on Parameterized and Exact Computation ({IPEC}~'20)},
  series       = {LIPIcs},
  volume       = {180},
  pages        = {34:1--34:4},
  publisher    = {Schloss Dagstuhl - Leibniz-Zentrum f{\"{u}}r Informatik},
  year         = {2020},
  url          = {https://doi.org/10.4230/LIPIcs.IPEC.2020.34},
  doi          = {10.4230/LIPICS.IPEC.2020.34},
}

@inproceedings{KMN+20,
  author       = {Lukasz Kowalik and
                  Marcin Mucha and
                  Wojciech Nadara and
                  Marcin Pilipczuk and
                  Manuel Sorge and
                  Piotr Wygocki},
  editor_       = {Yixin Cao and
                  Marcin Pilipczuk},
  title        = {The {PACE} 2020 Parameterized Algorithms and Computational Experiments
                  Challenge: Treedepth},
  booktitle    = {Proceedings of the 15th International Symposium on Parameterized and Exact Computation ({IPEC}~'20)},
  series       = {LIPIcs},
  volume       = {180},
  pages        = {37:1--37:18},
  publisher    = {Schloss Dagstuhl - Leibniz-Zentrum f{\"{u}}r Informatik},
  year         = {2020},
  url          = {https://doi.org/10.4230/LIPIcs.IPEC.2020.37},
  doi          = {10.4230/LIPICS.IPEC.2020.37},
}

@article{GKO21,
  author       = {Robert Ganian and
                  Fabian Klute and
                  Sebastian Ordyniak},
  title        = {On Structural Parameterizations of the Bounded-Degree Vertex Deletion
                  Problem},
  journal      = {Algorithmica},
  volume       = {83},
  number       = {1},
  pages        = {297--336},
  year         = {2021},
  url          = {https://doi.org/10.1007/s00453-020-00758-8},
  doi          = {10.1007/S00453-020-00758-8},
}

@article{KHMN09,
  author       = {Christian Komusiewicz and
                  Falk H{\"{u}}ffner and
                  Hannes Moser and
                  Rolf Niedermeier},
  title        = {Isolation concepts for efficiently enumerating dense subgraphs},
  journal      = {Theoretical Computer Science},
  volume       = {410},
  number       = {38-40},
  pages        = {3640--3654},
  year         = {2009},
  url          = {https://doi.org/10.1016/j.tcs.2009.04.021},
  doi          = {10.1016/J.TCS.2009.04.021},
}

@article{BBNU12,
  author       = {Nadja Betzler and
                  Robert Bredereck and
                  Rolf Niedermeier and
                  Johannes Uhlmann},
  title        = {On Bounded-Degree Vertex Deletion parameterized by treewidth},
  journal      = {Discrete Applied Mathematics},
  volume       = {160},
  number       = {1-2},
  pages        = {53--60},
  year         = {2012},
  url          = {https://doi.org/10.1016/j.dam.2011.08.013},
  doi          = {10.1016/J.DAM.2011.08.013},
  timestamp    = {Thu, 11 Feb 2021 23:21:33 +0100},
  biburl       = {https://dblp.org/rec/journals/dam/BetzlerBNU12.bib},
  bibsource    = {dblp computer science bibliography, https://dblp.org}
}

@article{Lampis20,
  author       = {Michael Lampis},
  title        = {Finer Tight Bounds for Coloring on Clique-Width},
  journal      = {{SIAM} J. Discret. Math.},
  volume       = {34},
  number       = {3},
  pages        = {1538--1558},
  year         = {2020},
}

@article{FLM+09,
  author       = {Michael R. Fellows and
                  Daniel Lokshtanov and
                  Neeldhara Misra and
                  Matthias Mnich and
                  Frances A. Rosamond and
                  Saket Saurabh},
  title        = {The Complexity Ecology of Parameters: An Illustration Using Bounded Max Leaf Number},
  journal      = {Theory of Computing Systems},
  volume       = 45,
  number       = 4,
  pages        = {822--848},
  year         = 2009,
}

@article{BJ00,
  author       = {Hans L. Bodlaender and
                  Klaus Jansen},
  title        = {On the Complexity of the Maximum Cut Problem},
  journal      = {Nord. J. Comput.},
  volume       = {7},
  number       = {1},
  pages        = {14--31},
  year         = {2000},
  timestamp    = {Wed, 14 Jan 2004 15:33:49 +0100},
  biburl       = {https://dblp.org/rec/journals/njc/BodlaenderJ00.bib},
  bibsource    = {dblp computer science bibliography, https://dblp.org}
}

@inproceedings{DHJ+16,
  author       = {Holger Dell and
                  Thore Husfeldt and
                  Bart M. P. Jansen and
                  Petteri Kaski and
                  Christian Komusiewicz and
                  Frances A. Rosamond},
  editor_       = {Jiong Guo and
                  Danny Hermelin},
  title        = {The First Parameterized Algorithms and Computational Experiments Challenge},
  booktitle    = {Proceedings of the 11th International Symposium on Parameterized and Exact Computation ({IPEC}~'16)},
  series       = {LIPIcs},
  volume       = {63},
  pages        = {30:1--30:9},
  publisher    = {Schloss Dagstuhl - Leibniz-Zentrum f{\"{u}}r Informatik},
  year         = {2016},
  url          = {https://doi.org/10.4230/LIPIcs.IPEC.2016.30},
  doi          = {10.4230/LIPICS.IPEC.2016.30},
}

@inproceedings{DBLP:conf/icalp/Iwata17,
  author       = {Yoichi Iwata},
  title        = {Linear-Time Kernelization for Feedback Vertex Set},
  booktitle    = {Proceedings of the 44th International Colloquium on Automata, Languages, and Programming ({ICALP}~'17)},
  series       = {LIPIcs},
  volume       = {80},
  pages        = {68:1--68:14},
  publisher    = {Schloss Dagstuhl - Leibniz-Zentrum f{\"{u}}r Informatik},
  year         = {2017}
}

@misc{KMSS25,
author={Christian Komusiewicz  and Nils Morawietz  and Frank Sommer and Luca Pascal Staus},
title ={Code, Data and Experimental Results for "{T}he Parameter Report: An Orientation Guide for Data-Driven Parameterization"},
note = {Zenodo, \url{https://doi.org/10.5281/zenodo.19882369}},
year = 2025
}

@article{IWY16,
  author       = {Yoichi Iwata and
                  Magnus Wahlstr{\"{o}}m and
                  Yuichi Yoshida},
  title        = {Half-integrality, {LP}-branching, and {FPT} Algorithms},
  journal      = {{SIAM} Journal on Computing},
  volume       = 45,
  number       = 4,
  pages        = {1377--1411},
  year         = 2016,
  url_          = {https://doi.org/10.1137/140962838},
  doi_          = {10.1137/140962838},
}

@inproceedings{networkx,
	author = {Aric Hagberg and Pieter J. Swart and Daniel Schult},
	year = {2008},
	pages = {},
	title = {Exploring Network Structure, Dynamics, and Function Using {N}etwork{X}},
	booktitle = {Proceedings of the 7th Python in Science Conference}
}

@misc{gurobi,
  author = {{Gurobi Optimization, LLC}},
  title = {{Gurobi Optimizer Reference Manual}},
  year = 2023,
  url = "https://www.gurobi.com"
}

@article{DBLP:journals/jco/Tamaki19,
  author       = {Hisao Tamaki},
  title        = {Positive-instance driven dynamic programming for treewidth},
  journal      = {Journal of Combinatorial Optimization},
  volume       = {37},
  number       = {4},
  pages        = {1283--1311},
  year         = {2019}
}

@inproceedings{HK23,
  author       = {Falko Hegerfeld and
                  Stefan Kratsch},
  title        = {Tight Algorithms for Connectivity Problems Parameterized by Modular-Treewidth},
  booktitle    = {Proceedings of  the  49th International
                  Workshop on Graph-Theoretic Concepts in Computer Science ({WG}~'23)},
  pages        = {388--402},
  url_          = {https://doi.org/10.1007/978-3-031-43380-1\_28},
  doi_          = {10.1007/978-3-031-43380-1\_28},
  series       = {Lecture Notes in Computer Science},
  volume       = {14093},
  publisher    = {Springer},
  year         = {2023},
 }

\appendix
\newpage
\section{Instance Properties}
\label{sec:instances}
\begin{table}[h!]
\tiny
\caption{Sizes and some additional parameter values for all 59 \emph{easy} graphs for which all parameters could be calculated. A complete overview is available in the online repository.} 
\label{table-instance-spec-easy}
\centering

\begin{tabular}{l r r r r r}
\toprule
name & $n$ & $m$ & $\Delta$ & $\vc$ & $\tw$  \\
\midrule
nd\_7\_10 & 7 & 10 & 6 & 3 & 2 \\
florentine\_families & 15 & 20 & 6 & 8 & 3 \\
zebra & 27 & 111 & 14 & 20 & 12 \\
davis\_southern\_women & 32 & 89 & 14 & 14 & 8 \\
montreal & 33 & 78 & 18 & 12 & 6 \\
karate-club & 34 & 78 & 17 & 14 & 5 \\
insecta-ant-colony4-day35 & 35 & 501 & 33 & 31 & 28 \\
ceo\_club & 40 & 95 & 21 & 15 & 8 \\
elite & 44 & 99 & 12 & 20 & 9 \\
macaque\_neural & 47 & 313 & 27 & 33 & 16 \\
USA\_Mixed\_2016 & 48 & 103 & 8 & 30 & 6 \\
bcspwr02 & 49 & 59 & 6 & 22 & 3 \\
contiguous-usa & 49 & 107 & 8 & 30 & 6 \\
eco-stmarks & 54 & 350 & 48 & 36 & 22 \\
ENZYMES\_g103 & 59 & 115 & 9 & 33 & 5 \\
dolphins & 62 & 159 & 12 & 34 & 10 \\
terrorists\_911 & 62 & 152 & 22 & 29 & 7 \\
eco-everglades & 69 & 880 & 64 & 42 & 31 \\
mammalia-voles-bhp-trapping-49 & 78 & 119 & 9 & 44 & 6 \\
world-trade & 80 & 875 & 77 & 50 & 22 \\
baseball & 84 & 84 & 44 & 12 & 2 \\
tols90 & 90 & 1\,449 & 89 & 18 & 18 \\
bn-macaque-rhesus\_brain\_2 & 91 & 582 & 87 & 11 & 11 \\
bn-macaque-rhesus\_cerebral-cortex & 91 & 1\,401 & 87 & 29 & 28 \\
GD99\_c & 105 & 120 & 5 & 47 & 3 \\
centrality-literature & 118 & 613 & 66 & 59 & 24 \\
PACE\_2021\_jena\_3 & 121 & 1\,191 & 50 & 96 & 36 \\
PACE\_2021\_jena\_4 & 123 & 433 & 18 & 91 & 11 \\
PACE\_2021\_jena\_1 & 125 & 1\,319 & 58 & 96 & 31 \\
PACE\_2021\_jena\_2 & 134 & 2\,877 & 85 & 112 & 58 \\
revolution & 141 & 160 & 59 & 5 & 3 \\
PACE\_2021\_jena\_6 & 173 & 1\,039 & 35 & 126 & 29 \\
world\_Mixed\_2016 & 173 & 328 & 14 & 91 & 6 \\
interactome\_pdz & 212 & 242 & 21 & 56 & 6 \\
PACE\_2021\_strings\_1 & 236 & 1\,045 & 23 & 155 & 14 \\
hyves\_n & 278 & 302 & 28 & 77 & 4 \\
PACE\_2021\_colors\_2 & 293 & 573 & 24 & 168 & 13 \\
oryza\_sativa\_japonica & 327 & 341 & 234 & 25 & 3 \\
marvel\_partnerships & 350 & 346 & 12 & 144 & 4 \\
oryctolagus\_cuniculus & 350 & 333 & 81 & 66 & 4 \\
as\_skitter\_n & 375 & 439 & 155 & 10 & 4 \\
gallus\_gallus & 457 & 480 & 111 & 101 & 5 \\
human\_papillomavirus\_16 & 484 & 511 & 185 & 6 & 2 \\
hepatitus\_C\_Virus & 493 & 491 & 490 & 2 & 1 \\
human\_herpesvirus\_4 & 505 & 543 & 204 & 20 & 4 \\
nd\_511\_1830 & 511 & 1\,830 & 90 & 255 & 4 \\
diseasome & 516 & 1\,188 & 50 & 285 & 10 \\
canis\_familiaris & 529 & 538 & 406 & 30 & 3 \\
danio\_rerio & 581 & 602 & 244 & 87 & 3 \\
bos\_taurus & 605 & 585 & 81 & 164 & 4 \\
human\_Herpesvirus\_8 & 904 & 914 & 140 & 65 & 4 \\
middle-east\_coronavirus & 987 & 1\,051 & 540 & 24 & 4 \\
PhD & 1\,025 & 1\,043 & 46 & 259 & 3 \\
candida\_albicans\_SC5314 & 1\,191 & 1\,708 & 427 & 159 & 11 \\
board\_directors & 1\,217 & 1\,130 & 26 & 204 & 3 \\
moreno\_crime & 1\,380 & 1\,476 & 25 & 451 & 6 \\
human\_Immunodeficiency\_Virus\_1 & 1\,446 & 1\,732 & 477 & 10 & 7 \\
netscience & 1\,461 & 2\,742 & 34 & 899 & 19 \\
severe\_acute\_coronavirus & 1\,546 & 1\,906 & 259 & 30 & 16 \\
\bottomrule
\end{tabular}

\end{table}

\begin{table}[h!]
\tiny
\caption{Sizes and some additional parameter values for all 85 \emph{hard} graphs for which at least one parameter could not be calculated. A complete overview is available in the online repository.} 
\label{table-instance-spec-hard}
\centering

\begin{tabular}{l r r r r r}
\toprule
name & $n$ & $m$ & $\Delta$ & $\vc$ & $\tw$  \\
\midrule
flying & 48 & 284 & 22 & 36 & 20 \\
ar\_m8x8 & 64 & 112 & 4 & 32 & 8 \\
moreno\_highschool & 70 & 274 & 19 & 48 & 13 \\
polbooks & 105 & 441 & 25 & 62 & 13 \\
adjnoun\_adjacency & 112 & 425 & 49 & 59 & 28 \\
infect-hyper & 113 & 2\,196 & 98 & 90 & 73 \\
football & 115 & 613 & 12 & 94 & 34 \\
flower\_4\_1 & 129 & 372 & 8 & 80 & 36 \\
email-enron-only & 143 & 623 & 42 & 86 & 21 \\
digg\_n & 166 & 1\,375 & 70 & 86 & 41 \\
rel5 & 172 & 646 & 44 & 33 & 15 \\
PACE\_2021\_jena\_5 & 184 & 1\,367 & 53 & 136 & 23 \\
arenas-jazz & 198 & 2\,742 & 100 & 158 & 51 \\
scotland & 228 & 358 & 13 & 89 & 14 \\
can\_229 & 229 & 774 & 12 & 162 & 21 \\
saylr1 & 238 & 445 & 4 & 119 & 14 \\
dwt\_245 & 245 & 608 & 12 & 147 & 11 \\
PACE\_2021\_strings\_2 & 248 & 2\,451 & 38 & 209 & 23 \\
ar\_m16x16 & 256 & 480 & 4 & 128 & 16 \\
econ-wm1 & 258 & 2\,389 & 106 & 115 & 57 \\
c260 & 260 & 390 & 3 & 140 & 20 \\
as\_internet\_topology\_n & 266 & 6\,300 & 223 & 110 & 66 \\
PACE\_2021\_colors\_1 & 330 & 2\,256 & 43 & 260 & 33 \\
usair97 & 332 & 2\,126 & 139 & 149 & 33 \\
maier\_facebook\_friends & 348 & 1\,988 & 63 & 218 & 26 \\
poisson2D & 367 & 1\,025 & 8 & 248 & 20 \\
blogcatalog\_n & 372 & 2\,627 & 188 & 152 & 71 \\
BNU1\_0025884\_1\_DTI\_DS00446 & 425 & 4\,817 & 95 & 314 & 67 \\
celegans\_metabolic & 453 & 2\,025 & 237 & 249 & 32 \\
n3c6-b2 & 455 & 1\,313 & 16 & 91 & 67 \\
youtube\_n & 479 & 2\,428 & 80 & 245 & 107 \\
power-494-bus & 494 & 586 & 9 & 216 & 7 \\
us\_patents\_n & 502 & 1\,189 & 24 & 234 & 49 \\
NL\_Mixed\_2016 & 514 & 1\,618 & 26 & 341 & 17 \\
wikiconflicts\_n & 586 & 5\,704 & 297 & 228 & 66 \\
dwt\_607 & 607 & 2\,262 & 13 & 405 & 32 \\
social\_location\_gowalla\_n & 654 & 2\,104 & 212 & 190 & 52 \\
nos6 & 675 & 1\,290 & 4 & 337 & 15 \\
ppi\_interolog\_human\_n & 677 & 1\,611 & 231 & 131 & 30 \\
wordnet\_n & 692 & 1\,711 & 517 & 383 & 27 \\
can\_715 & 715 & 2\,975 & 104 & 507 & 41 \\
internet\_top\_pop\_kdl & 754 & 895 & 7 & 369 & 7 \\
socfb-Caltech36 & 769 & 16\,656 & 248 & 536 & 325 \\
BNU1\_0025913\_1\_DTI\_DS00833 & 780 & 9\,336 & 113 & 587 & 116 \\
reptilia-tortoise-network-fi & 787 & 1\,197 & 17 & 396 & 13 \\
copenhagen\_fb\_friends & 800 & 6\,418 & 101 & 531 & 249 \\
cpan\_perl\_module\_users & 839 & 2\,112 & 327 & 116 & 34 \\
kegg\_metabolic\_aae & 926 & 2\,334 & 188 & 441 & 34 \\
flixster\_n & 941 & 7\,745 & 150 & 600 & 158 \\
flickr\_n & 955 & 9\,448 & 207 & 510 & 228 \\
socfb-Reed98 & 962 & 18\,812 & 313 & 687 & 380 \\
bn-mouse-kasthuri\_graph\_v4 & 1\,029 & 1\,559 & 123 & 171 & 35 \\
arenas-email & 1\,133 & 5\,451 & 71 & 594 & 197 \\
euroroad & 1\,174 & 1\,417 & 10 & 571 & 13 \\
plasmodium\_falciparum\_3D7 & 1\,229 & 2\,448 & 51 & 450 & 119 \\
escherichia\_coli\_K12\_MG1655 & 1\,258 & 1\,875 & 58 & 452 & 48 \\
jagmesh9 & 1\,349 & 3\,876 & 6 & 899 & 21 \\
socfb-Simmons81 & 1\,518 & 32\,988 & 300 & 1\,106 & 576 \\
xenopus\_leavis & 1\,596 & 2\,054 & 640 & 212 & 12 \\
collins\_yeast & 1\,622 & 9\,070 & 127 & 1\,004 & 67 \\
bcspwr09 & 1\,723 & 2\,394 & 14 & 775 & 10 \\
biological\_pathways\_kegg & 1\,729 & 1\,924 & 18 & 782 & 12 \\
ppi-yeast & 1\,846 & 2\,203 & 56 & 626 & 43 \\
plant\_pol\_robertson & 1\,884 & 15\,255 & 300 & 411 & 273 \\
blckhole & 2\,121 & 6\,370 & 19 & 1\,423 & 75 \\
nd\_2453\_47659 & 2\,453 & 47\,659 & 444 & 1\,429 & 117 \\
socfb-Trinity100 & 2\,613 & 111\,996 & 404 & 2\,155 & 1\,456 \\
dwt\_2680 & 2\,680 & 11\,173 & 18 & 1\,977 & 43 \\
sstmodel & 2\,730 & 9\,702 & 17 & 1\,911 & 37 \\
route\_views & 3\,015 & 5\,156 & 590 & 566 & 39 \\
lshp3466 & 3\,466 & 10\,215 & 6 & 2\,310 & 72 \\
california & 4\,271 & 8\,909 & 175 & 892 & 235 \\
nyc\_restaurant\_checkin & 4\,936 & 13\,472 & 88 & 2\,010 & 853 \\
opsahl-powergrid & 4\,941 & 6\,594 & 19 & 2\,203 & 17 \\
grqc & 5\,241 & 14\,484 & 81 & 2\,783 & 174 \\
c6000 & 6\,000 & 9\,000 & 3 & 3\,176 & 135 \\
rattus\_norvegicus & 6\,147 & 10\,387 & 1\,305 & 847 & 100 \\
nyc\_restaurant\_tips & 6\,410 & 9\,472 & 196 & 2\,491 & 340 \\
erdos992 & 6\,927 & 11\,850 & 507 & 474 & 148 \\
ppi-dmela & 7\,393 & 25\,569 & 190 & 2\,630 & 1\,037 \\
us\_roads\_DC & 9\,559 & 14\,841 & 6 & 5\,068 & 34 \\
hepth & 9\,875 & 25\,973 & 65 & 4\,981 & 725 \\
hepph & 12\,006 & 118\,489 & 491 & 7\,012 & 1\,230 \\
astroph & 18\,771 & 198\,050 & 504 & 12\,012 & 3\,329 \\
condmat & 23\,133 & 93\,439 & 279 & 13\,521 & 2\,004 \\
\bottomrule
\end{tabular}

\end{table}

\section{Algorithms}
\label{sec:algos}

In this appendix we briefly explain what algorithms we use to compute the various parameters and how they are implemented. 
The algorithms for most of the simple parameters that can be computed in polynomial time are directly implemented in Python with the help of the NetworkX package~\cite{networkx}.
These algorithms are described in~\Cref{sec:simple_params_algos}. 
For some of the parameters that are NP-hard to compute, we used existing solvers written in Java, C, or C++. 
We will list these solvers in~\Cref{sec:existing_solvers_params_algos}. Finally, to compute the remaining parameters, we use ILP-formulations that are then solved by the ILP-solver Gurobi~\cite{gurobi}. 
These will be explained in~\Cref{sec:ilp_params_algos}.

\subsection{Polynomial-time Computable Parameters}
\label{sec:simple_params_algos}

To compute the parameters $n$, $m$, the maximum degree~$\Delta$, the degeneracy~$d$, and the $k$-core size we use existing implementations that are part of the NetworkX package~\cite{networkx}. 
For the parameters $h$-index, closure number, and neighborhood diversity we implemented simple and straightforward algorithms that directly follow from the definition of these parameters. 
To compute the Dilworth number, we use the known reduction of the problem to a bipartite maximum matching instance~\cite{fulkerson1956,FRS03}. This instance is solved by a method provided by NetworkX.

To compute the weak closure number, we first fix some number~$\gamma$. 
We then continuously remove any vertex~$v$ from~$G$ that does not share at least~$\gamma$ neighbors with any non-neighbor of~$v$. 
A graph~$G$ is weakly~$\gamma$-closed if and only if the graph is empty after this process~\cite{FRSWW20}. 
Hence, we can compute the weak closure number of~$G$ by performing a binary search for~$\gamma$. 
We use the minimum of $d+1$ and the closure number~$c$ of~$G$ as an initial upper bound for~$\gamma$.

\subsubsection{Split-Width}

Our implementation for computing the split-width is based on an algorithm developed by Cunningham~\cite{Cun82} for directed graphs that runs in $\Oh(n^3)$~time and which is trivially adaptable to undirected graphs.
Even though there are linear time algorithms to compute the split-width of an undirected graph~\cite{CMR12}, we decided to implement the comparably slower algorithm by Cunningham~\cite{Cun82} based on the simplicity of the algorithm.

Intuitively, Cunningham's algorithm~\cite{Cun82} works as follows:
For a given graph~$G$, the algorithm tries to find a simple decomposition of~$G$, if one exists.
If no such simple decomposition exists, that is, if~$G$ is prime, the algorithm terminates.
Otherwise, if there is a simple decomposition~$(G_1,G_2)$ of~$G$, the algorithm performs the above procedure for both~$G_1$ and~$G_2$ recursively. 
To determine, whether there is a simple decomposition~$(G_1,G_2)$ of~$G$, the algorithm iterates over all edges~$e$ of~$G$ and checks whether there is a split~$(V_1,V_2)$ of~$G$ that is ``crossed'' by~$e$.
Here, we say that an edge~$e$~\emph{crosses} a split~$(V_1,V_2)$ of~$G$ if each of the parts of the split contains exactly one endpoint of~$e$.
Checking whether there is a split~$(V_1,V_2)$ of~$G$ that is crossed by~$e$ (and if so, finding one) can be done efficiently~\cite{Cun82}.

To speed up our implementation, we prevent the algorithm from checking whether an edge~$e$ crosses a split in any subgraph, if we already checked that~$e$ does not cross any split of the supergraph.
This is based on the fact that, if an edge~$e$ of~$G$ crosses no split of~$G$, then for each simple decomposition~$(G_1,G_2)$ of~$G$, $e$ does not cross any split of~$G_1$ and $e$ does not cross any split of~$G_2$. Below, we give a proof of this fact.
\begin{lemma}
Let~$G$ be a graph, let~$(G_1,G_2)$ be a simple decomposition of~$G$, and let~$\{u,v\} \in E(G) \cap E(G_1)$.
If there is a split~$(W_1,W_2)$ of~$G_1$ with~$u\in W_1$ and~$v\in W_2$, then there is a split of~$G$ that is crossed by~$\{u,v\}$. 
\end{lemma}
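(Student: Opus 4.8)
Recall the setup: $(G_1,G_2)$ is a simple decomposition of $G$ with respect to some split $(V_1,V_2)$, where $G_i = (W_i, E_i)$ with $W_i = V_i \cup \{x\}$ for a marker vertex $x \notin V(G)$, $G_i[V_i] = G[V_i]$, and $x$ is adjacent in $G_i$ to exactly the vertices of $V_i'$ (the ``boundary'' of $V_i$ towards the other side). We are given an edge $\{u,v\} \in E(G) \cap E(G_1)$ and a split $(W_1, W_2)$ of $G_1$ with $u \in W_1$, $v \in W_2$. The plan is to directly construct a split of $G$ that $\{u,v\}$ crosses, by transporting $(W_1,W_2)$ across the decomposition and handling the marker vertex.

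**Main construction.** The natural candidate: since $\{u,v\} \in E(G_1)$, both $u$ and $v$ lie in $V_1$ (the marker $x$ is not incident to this edge, as $x \notin V(G) \supseteq \{u,v\}$). Consider where the marker $x$ sits in the split $(W_1,W_2)$ of $G_1$; say WLOG $x \in W_2$ (the case $x \in W_1$ is symmetric). Now ``expand'' $W_1$ and $W_2$ back to a partition of $V(G)$: set $U_1 \coloneqq W_1$ and $U_2 \coloneqq (W_2 \setminus \{x\}) \cup V_2$. Then $(U_1, U_2)$ is a partition of $V(G)$ with $u \in U_1$, $v \in U_2$, so $\{u,v\}$ crosses it. The heart of the argument is to verify that $(U_1,U_2)$ is indeed a split of $G$, i.e.\ (i) $|U_1| \ge 2$ and $|U_2| \ge 2$, and (ii) all vertices in $U_1$ that have a neighbor in $U_2$ share the same neighborhood in $U_2$.

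For (ii), I would argue as follows. A vertex $w \in U_1 = W_1$ has neighbors in $U_2$ coming from two sources: neighbors inside $W_2 \setminus \{x\}$ (these are honest $G_1$-edges, since $W_1 \subseteq V_1$ and $G_1[V_1] = G[V_1]$), and — only if $w$ was adjacent to the marker $x$ in $G_1$, i.e.\ $w \in V_1'$ — neighbors in $V_2$, namely exactly the set $V_2'$ by the composition rule. Since $(W_1,W_2)$ is a split of $G_1$ with $x \in W_2$, any $w \in W_1$ with a neighbor in $W_2$ has neighborhood in $W_2$ equal to some fixed set $S \subseteq W_2$; and whether $x \in S$ is the same for all such $w$ (it's either $S \ni x$ always or never, by definition of the split). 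Splitting into cases: if the common set $S$ contains $x$, then every boundary vertex $w$ (those with neighbors in $W_2$) satisfies $w \in V_1'$, so its neighborhood in $U_2$ is $(S \setminus \{x\}) \cup V_2'$ — a fixed set. If $S$ does not contain $x$, no $w \in W_1$ touches $x$, hence no $w \in W_1 \cap V_1'$ has a neighbor in $V_2$ that matters... wait, $w \in V_1'$ means $w$ IS adjacent to $x$, contradiction; so no boundary vertex of $G$ inside $W_1$ lies in $V_1'$, and its neighborhood in $U_2$ is just $S$. Either way we get a single fixed neighborhood, giving (ii). I should double-check the edge case where $W_1$ contains vertices both in and out of $V_1'$: the point is that membership in $V_1'$ is equivalent to adjacency to $x$ in $G_1$, which is governed by $S$, so consistency is automatic.

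**Remaining points and the main obstacle.** For (i): $|U_1| = |W_1| \ge 2$ is immediate from $(W_1,W_2)$ being a split of $G_1$. For $|U_2| \ge 2$: $U_2 = (W_2\setminus\{x\}) \cup V_2$ and $|V_2| \ge 2$ since $(V_1,V_2)$ is a split of $G$ — so this holds regardless of whether $W_2 \setminus \{x\}$ is empty. I expect the main obstacle to be the careful bookkeeping in (ii): making sure the ``same neighborhood in $U_2$'' condition is checked against \emph{all} of $U_1$ (not just the part of $W_1$ that was a boundary in $G_1$), and confirming that vertices of $W_1$ which had no neighbor in $W_2$ in $G_1$ also have no neighbor in $U_2$ in $G$ — this is true because such a vertex is non-adjacent to $x$ in $G_1$ (else it'd have a neighbor in $W_2$), hence by the composition rule it's non-adjacent to all of $V_2$ in $G$, and it has no neighbor in $W_2 \setminus \{x\}$ either. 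Once these verifications are assembled, the claimed split $(U_1, U_2)$ of $G$ is crossed by $\{u,v\}$, completing the proof. The symmetric case $x \in W_1$ is handled identically by swapping roles and expanding $U_1$ instead.
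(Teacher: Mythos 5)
Your proof is correct and takes essentially the same route as the paper: both transfer the split $(W_1,W_2)$ of $G_1$ to a split of $G$ by replacing the marker vertex $x$ on its side with the vertices of the other half of the simple decomposition, and both obtain the size bounds from $|W_i|\ge 2$ and the decomposition. The only difference is cosmetic: you verify the ``same neighborhood'' condition from the unchanged side $W_1$ using the common neighborhood $S$ (splitting on whether $x\in S$), whereas the paper runs a three-case analysis over pairs of boundary vertices on the expanded side; the two verifications are equivalent.
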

\begin{proof}
Let~$x$ be the unique vertex of~$V(G_1) \setminus V(G)$.
Without loss of generality assume that~$x \in W_1$.
We define a partition~$(V_1',V_2')$ of~$V(G)$ with~$u\in V'_1$ and~$v\in V'_2$ as follows:
We set~$V_1' \coloneqq (W_1 \cup V(G_2)) \setminus \{x\}$ and~$V_2' \coloneqq V(G) \setminus V_1' = W_2$.
In the following, we show that~$(V_1',V_2')$ is a split of~$G$.
Since~$(W_1,W_2)$ is a split of~$G_1$, $|W_1| \geq 2$ and~$|W_2| \geq 2$.
Hence, $V_2'$ has size at least two.
Moreover, since~$(G_1, G_2)$ is a simple decomposition of~$G$, $V(G_2)$ contains at least three vertices.
Consequently, $V_1'$ has size at least two.
Hence, to show that~$(V_1',V_2')$ is a split of~$G$, it remains to show that each vertex of~$Y \coloneqq \{w\in V_1' : \exists w' \in V_2'.\, \{w, w'\} \in E(G)\}$ has the same neighborhood in~$V_2' = W_2$.
Let~$y$ and~$z$ be distinct vertices of~$Y$.
We show that~$N(y) \cap V_2' = N(z) \cap V_2'$.
To this end, we distinguish three cases about the location of~$y$ and~$z$.

\textbf{Case 1:} Both~$y$ and~$z$ are contained in~$V(G_1)$\textbf{.}
Then, the statement holds by the fact that~$(W_1,W_2)$ is a split of~$G_1$ with~$\{y,z\} \subseteq V_1'$ and~$V_2' = W_2$.

\textbf{Case 2:} Both~$y$ and~$z$ are contained in~$V(G_2)$\textbf{.}
The, the statement holds by the fact that~$W_2$ contains no vertex of~$V(G_2)$ and each vertex of~$V(G_2)$ with at least one neighbor in~$V(G_1) \setminus\{x\}$ has the same neighborhood in~$V(G_1) \setminus\{x\}$.
The latter holds, since $(G_1, G_2)$ is a simple decomposition of~$G$.

\textbf{Case 3:} Without loss of generality~$y$ is contained in~$V(G_1)$ and~$z$ is contained in~$V(G_2)$\textbf{.}
Since $(G_1, G_2)$ is a simple decomposition of~$G$, $z$ has the same neighborhood in~$V(G_1)$ as~$x$.
Moreover, since~$z$ is contained in~$Y$, $z$ has at least one neighbor in~$W_2$.
Hence, $x$ and~$z$ have the same neighborhood in~$W_2 = V'_2$ as~$y$ since~$(W_1,W_2)$ is a split of~$G_1$.

Hence, each vertex of~$Y$ has the same neighborhood in~$W_2$.
This implies that, $(V_1', V_2')$ is a split of~$G$.
\end{proof}

\subsection{Existing Solvers}
\label{sec:existing_solvers_params_algos}

To compute the parameters modular-width, feedback vertex set number, treedepth and treewidth we use existing solvers.
For modular-width we use a solver by Mizutani who implemented a linear-time algorithm by Tedder et al.~\cite{DBLP:conf/icalp/TedderCHP08}. The original code can be found at \url{https://github.com/mogproject/modular-decomposition}.
For feedback vertex set we use the winning solver from the 2016 PACE challenge~\cite{DHJ+16} by Iwata and Imanishi. Their solver first computes a linear-time kernel~\cite{DBLP:conf/icalp/Iwata17} and then uses an FPT branch and bound algorithm~\cite{IWY16}. The original code can be found at \url{https://github.com/wata-orz/fvs}.
For treedepth we use an updated version of the winning solver from the 2020 PACE challenge~\cite{KMN+20} by Trimble~\cite{Trim20}. The original code can be found at \url{https://github.com/jamestrimble/bute}.
Finally, for treewidth we use a solver by Tamaki~\cite{DBLP:conf/iwpec/Tamaki23}. The original code can be found at \url{https://github.com/twalgor/RTW}.

\subsection{ILP-Formulations for the Remaining Parameters}
\label{sec:ilp_params_algos}

We use standard ILP-formulations of the respective optimization problems to compute the vertex subset parameters from  \Cref{sec:subset_params} with the exception of feedback vertex set for which we use the above-mentioned solver by Iwata and Imanishi. To solve these formulations we then use  Gurobi~\cite{gurobi}. In this section we will explain these ILP-formulations.

Generally, the ILP-formulations for all these problems work similarly, since each of these problems involves finding a minimum-size modulator~$S$ to some graph property~$\Pi$:
For each vertex~$v\in V(G)$, we introduce a binary variable~$x_v$.
The variable~$x_v$ should be set to~$1$ by the ILP if and only if~$v$ is contained in the sought minimum-size modulator~$S$. 

We now just need to add constraints to ensure that~$G-S$ fulfills property~$\Pi$.
In the following, we describe how these constraints look for the individual parameters.

\begin{itemize}
\item For the vertex cover number, we add for each~$\{ u, v \}\in E(G)$ the constraint~$x_u + x_v \geq 1$.
\item For the~$r$-bounded-degree deletion number, we need to ensure that for each vertex~$v\in V(G)$ we include~$v$ in~$S$ or we include at least~$\deg_G(v) - r$ neighbors of~$v$ in~$S$. 
To this end, we add for each vertex~$v\in V(G)$ the constraint
\[ \deg_G(v) - r \leq \deg_G(v) \cdot x_v + \sum_{n \in N_G(v)} x_n. \]
\item For the cluster vertex deletion number, we add the constraint~$1 \leq x_u + x_v + x_w$ for each induced path~$(u, v, w)$ in~$G$.
\item For distance to cographs, we add the constraint~$1 \leq \sum_{v\in V(P)} x_v$ for every induced path~$P$ of length~$3$.
\item For~$4$-path vertex cover we add the constraint~$1 \leq \sum_{v\in V(P)} x_v$ for every (not necessarily induced) path~$P$ of length~$3$. 
\end{itemize}

Additionally, for cluster vertex deletion and $4$-path vertex cover we used some straightforward reduction rules and added some additional constraints to the ILP for small-degree vertices to slightly speed up the computation of these parameters.

Note that for~$4$-path vertex cover, we might add $\Theta(n^{4})$~constraints to the ILP.
To avoid this, we add most of the constraints in a lazy fashion. 
That is, initially we only add a small subset of the constraints.
Afterwards, we let Gurobi find an optimal solution~$S$ to the currently added constraints. 
If~$S$ is not a modulator to the respective graph property~$\Pi$, then at least one (currently not added) constraint is violated.
In this case, we add a small amount of new constraints to the ILP that are currently violated.
This procedure is repeated until, eventually, a minimum-size modulator is found.

\subsubsection{Vertex Integrity}\label{sec:vi_algos}
We compute the vertex integrity via ILPs for a related parameter, the~\emph{$r$-component order connectivity number} of a graph~$G$. This parameter, denoted by~$\coc[r](G)$, is the size of a smallest modulator of~$G$ to the graph property that consists of all graphs $G'$ with $\cc(G') \leq r$ where $\cc(G')$ is the number of vertices in the largest connected component of $G'$~\cite{DDH16}.
Recall that~$\vi(G) \coloneqq \min\{|X|+ \cc(G-X) : X \subseteq V(G)\}$~\cite{BES87,DDH16,GHK+24}. An equivalent way to define the vertex integrity via $\coc[r](G)$ is $\vi(G)\coloneqq\min\{r + \coc[r](G):r\in \mathds{N}\}$.
Hence, to compute $\vi(G)$, we simply compute $\coc[r](G)$ for all values of $r\in [ub]$ in increasing order where $ub$ is an upper bound for $\vi(G)$. As initial upper bound we set $ub \coloneqq \cc(G)$~\cite{goddard1990integrity}.

To compute $\coc[r](G)$, we use the above-desribed general ILP approach and add the constraint~$1 \leq \sum_{v\in C} x_v$ for every connected subgraph~$C$ of~$G$ with exactly $r + 1$~vertices. Since the number of constraints becomes prohibitively large already for intermediate values of~$r$, we add the constraints in a lazy fashion as described above for the 4-path vertex cover number.

After computing $\coc[r](G)$ for some~$r$, we can update $ub$ by setting~$ub:=\min(ub,r+\coc[r](G))$. Once we have $r = ub$, we know that $ub$ must be equal to $\vi(G)$. We can also use $ub - 1 - r$ as an upper bound for the objective value of the ILPs. Since we only care about improving $ub$ until it is equal to $\vi(G)$ it is fine if this leads to some of the ILP formulations being infeasible. We also use $h(G) + 1 - r$ as a lower bound for the objective value of the ILPs~\cite{goddard1990integrity}.
We denote this algorithm as \texttt{vi}~\texttt{basic}.

Our experiments show that \texttt{vi}~\texttt{basic} is not fast enough to compete with the solvers for the other parameters. 
Because of this, we developed several improvements for our implementation in the form of an initial graph reduction or additional constraints for the ILP formulations. We will use this subsection to explain and show the correctness of these improvements.

The first two improvements let us identify some vertices that are either always part of a minimum size modulator for $\coc[r](G)$ or never. 
We can use these to reduce the number of variables we add to each ILP instance.

\begin{lemma}[Folklore]
	Let $G$ be a graph and let~$r$ be some integer. Any vertex $v\in V(G)$ that is contained in a connected component $C$ with $|C| \leq r$ is never part of a minimum-size modulator for $\coc[r](G)$.
\end{lemma}

For Lemma~\ref{lem:vi_large_set_neigh}, recall that a vi-set of a graph~$G$ is any vertex set~$X$ with~$\vi(G) = |X| + \cc(G - X)$.

\begin{lemma}\label{lem:vi_large_set_neigh}
	Let $G$ be a graph and let~$C$ be a connected subset of $V(G)$ with $|N[C]| > \vi(G)$. Then each vi-set of $G$ contains at least one vertex of $N[C]$.
\end{lemma}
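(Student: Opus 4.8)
The plan is to argue by contradiction: suppose some vi-set $X$ of $G$ avoids $N[C]$ entirely, i.e.\ $X \cap N[C] = \emptyset$, and derive that $\cc(G-X)$ is too large to be consistent with $X$ being a vi-set.

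First I would observe that $G[N[C]]$ is connected. Indeed, $C$ is connected by hypothesis, and every vertex $v \in N(C)$ has, by definition of the open neighborhood, at least one neighbor $u \in C$, so $v$ is attached to the connected set $C$ inside $G[N[C]]$; hence $N[C] = C \cup N(C)$ induces a connected subgraph. Next, since $X \cap N[C] = \emptyset$, none of the vertices of $N[C]$ and none of the edges of $G[N[C]]$ are destroyed when we delete $X$, so $(G-X)[N[C]] = G[N[C]]$ is connected. Consequently the connected component of $G-X$ that contains $N[C]$ has at least $|N[C]|$ vertices, which gives $\cc(G-X) \ge |N[C]| > \vi(G)$ by the assumption on $C$.

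Finally I would plug this into the defining equation of a vi-set: $\vi(G) = |X| + \cc(G-X) \ge \cc(G-X) > \vi(G)$, a contradiction. Hence no vi-set can avoid $N[C]$, i.e.\ every vi-set contains at least one vertex of $N[C]$, which is exactly the claim.

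There is essentially no hard step here; the only thing to be careful about is the (easy) verification that $G[N[C]]$ is connected and that deleting a set disjoint from $N[C]$ leaves this induced subgraph untouched, so that $\cc(G-X)$ genuinely jumps above $\vi(G)$. Everything else is a one-line arithmetic contradiction using only the definition $\vi(G) = \min\{|X| + \cc(G-X) : X \subseteq V(G)\}$ and the fact that a vi-set attains this minimum.
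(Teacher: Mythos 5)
Your proof is correct, and it follows the same basic strategy as the paper's: assume a vi-set avoids the relevant set, use connectivity to force a large surviving component, and derive the arithmetic contradiction $\vi(G) \geq |N[C]| > \vi(G)$. The one difference worth noting is the contradiction hypothesis. You assume the literal negation $X \cap N[C] = \emptyset$, so that $G[N[C]]$ survives intact in $G-X$ and $\cc(G-X) \geq |N[C]|$. The paper instead assumes only $X \cap C = \emptyset$; then $C$ lies in a single component $C'$ of $G-X$, and every vertex of $N(C)$ is either in $X$ or in $C'$, giving $|X| + |C'| \geq |N[C]| > \vi(G)$, again a contradiction. The paper's counting therefore proves a stronger statement: every vi-set must contain a vertex of $C$ itself, not merely of $N[C]$, and that stronger form is what the subsequent ILP speed-up actually exploits when it adds the covering constraint only over the set $C$. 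Your argument establishes the lemma exactly as stated, but it does not yield this strengthening, because once $X$ is allowed to meet $N(C)\setminus C$ you can no longer say that all of $N[C]$ sits in one component; handling that case is precisely what the paper's split of $N[C]$ into $X$ and $C'$ accomplishes.
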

\begin{proof}
	Assume towards a contradiction that there is a vi-set~$X$ of~$G$ with $X\cap C = \emptyset$. 
	Note that since~$C$ is connected and since~$X\cap C=\emptyset$, all vertices in $C$ must be in the same connected component~$C'$ of~$G-X$.
	Now, any vertex in $N[C]$ is either in~$C'$ or in~$X$ which implies~$|X| + |C'| \geq |N[C]| > \vi(G)$. 
	But since~$X$ is a vi-set of~$G$, $\vi(G) = |X| + \cc(G -X) \geq |X| + |C'|$; a contradiction.
\end{proof}

The special case of~\Cref{lem:vi_large_set_neigh} for~$|C| = 1$ was originally shown by Drange et al.~\cite[Theorem 6]{DDH16}. 
We can use this special case to identify vertices that are part of any vi-set.
 Since we do not know the exact value of $\vi(G)$ while the algorithm is running, we instead use the current best upper bound. 
 When we would add constraints for a connected vertex set~$D$ of size at most~$r + 1$ to the ILP, we can keep track of the size of smaller connected subsets~$C\subsetneq D$. 
 If the size of the closed neighborhood of such a vertex set~$C$ exceeds the current upper bound for~$\vi(G)$, we add the constraint only for the vertex set~$C$ to the ILP.
 Due to \Cref{lem:vi_large_set_neigh}, this is correct.

Next, we use a result by Gima et al.~\cite[Lemma~2.2]{GHK+24}. 
For any vertex set~$S\subseteq V(G)$, a vertex~$v \in S$ is \emph{redundant} if at most one connected component of~$G - S$ contains neighbors of~$v$. 
Gima et al.~\cite{GHK+24} showed that there always exists a vi-set that contains no redundant vertices. 
The correctness of this statement follows from the observation that if~$v$ is a redundant vertex of~$S$, then removing~$v$ from~$S$ increases the size of~$\cc(G - S)$ by at most one, since all neighbors of~$v$ are in at most one connected component of~$G - S$. 
In fact, since we know that~$S$ is a vi-set, the size of~$\cc(G - S)$ must increase by exactly one.
\Cref{lem:vi_non_redundant_props} states some properties that directly follow from this result.

\begin{lemma}\label{lem:vi_non_redundant_props}
	Let $G$ be a graph and let~$S$ a vi-set that contains no redundant vertex. 
	Then, 
	\begin{enumerate}
		\item\label{lem:vi_non_redundant_props_1} there is no vertex~$v\in V(G)$ with~$N[v] \subseteq S$,
		\item\label{lem:vi_non_redundant_props_2} 
	 the vertex set~$S$ contains no simplicial vertex, and
		\item\label{lem:vi_non_redundant_props_3} 
		for each pair of vertices~$v_1, v_2\in V(G)$ with~$N(v_1)\setminus S\subseteq N[v_2]$, we have $v_1\in S \Rightarrow v_2\in S$.
	\end{enumerate}
\end{lemma}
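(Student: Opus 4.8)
The plan is to read off all three items from a single observation about $S$: since no vertex of $S$ is redundant, for every $v \in S$ the set $N(v) \setminus S$ is nonempty and meets at least two connected components of $G - S$. Stated contrapositively: if $v \in S$ and all of $N(v) \setminus S$ lies inside one connected component of $G - S$ (in particular if $N(v)\setminus S = \emptyset$), then $v$ is redundant, contradicting the choice of $S$. Each of the three parts will be a short application of this principle, so I would isolate it first.

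For the first part, if some $v$ satisfied $N[v] \subseteq S$, then $v \in S$ and $N(v) \setminus S = \emptyset$, so $v$ would be redundant; hence no such $v$ exists. For the second part, suppose $v \in S$ is simplicial, so $N(v)$ induces a clique; then $N(v) \setminus S$ also induces a clique, so its vertices are pairwise adjacent and therefore all lie in a single connected component of $G - S$ (or there are none at all). Either way $v$ is redundant, a contradiction, so $S$ contains no simplicial vertex.

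For the third part, assume $v_1 \in S$ with $N(v_1) \setminus S \subseteq N[v_2]$, and suppose for contradiction that $v_2 \notin S$; let $C$ be the connected component of $G - S$ containing $v_2$. Any $u \in N(v_1) \setminus S$ satisfies $u \in N[v_2]$ and $u \notin S$, so either $u = v_2$ or $u$ is a neighbour of $v_2$ lying outside $S$; in both cases $u$ is joined to $v_2$ inside $G - S$, hence $u \in C$. Thus $N(v_1) \setminus S \subseteq C$, making $v_1$ redundant — a contradiction, so $v_2 \in S$. The only point that needs a word of care is the degenerate case $N(v_1)\setminus S = \emptyset$, but this is already excluded by the hypothesis that $v_1 \in S$ is not redundant, so it is no real obstacle; overall the argument is entirely routine once the redundancy principle above is extracted.
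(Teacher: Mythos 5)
Your proposal is correct and follows essentially the same route as the paper: each of the three items is derived by contradiction from the definition of a redundant vertex, exactly as in the paper's proof. The only (harmless) difference is that you treat the case $u = v_2$ in item 3 and the empty-neighborhood degeneracy a bit more explicitly than the paper does.
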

\begin{proof}
	We prove the statement by contradiction. 
	To this end, we show that~$S$ contains a redundant vertex, if at least one of the three properties is not fulfilled.
	
	\Cref{lem:vi_non_redundant_props_1}: 
	Let $v$ be a vertex with $N[v] \subseteq S$. In this case the neighborhood of $v$ in~$G - S$ is empty which implies that by definition, $v$ is a redundant vertex.
	
	\Cref{lem:vi_non_redundant_props_2}: 
	Let $v$ be a simplicial vertex in~$S$. 
	Since the neighborhood of $S$ in~$G$ is a clique, the neighbors of~$v$ that remain in~$G - S$ are still a clique. 
	Hence, $v$ is a redundant vertex.
	
	\Cref{lem:vi_non_redundant_props_3}: 
	Let $v_1, v_2 \in V(G)$ be a pair of vertices fulfilling (i)~$N(v_1)\setminus S\subseteq N[v_2]$, (ii)~$v_1\in S$, and (iii)~$v_2\notin S$. 
	Then each neighbor of $v_1$ in $G - S$ is also a neighbor of $v_2$ in $G - S$. 
	Thus, the entire neighborhood of $v_1$ in $G - S$ is part of the same connected component.
	Consequently, $v_1$ is a redundant vertex.
\end{proof}

It would be difficult to add constraints to the ILP enforcing the property that the solution set is not allowed to contain redundant vertices. Instead, we add constraints that enforce the three properties of \Cref{lem:vi_non_redundant_props}. 
For~\Cref{lem:vi_non_redundant_props_1}, for every vertex $v\in V(G)$ we add the constraint 
$$\deg(v) \geq \sum_{w\in N[v]} x_w.$$ 
We can further use~\Cref{lem:vi_non_redundant_props_2} to reduce the number of variables in the ILP by not including simplicial vertices. 
Finally, for~\Cref{lem:vi_non_redundant_props_3}, we do the following:
For each two distinct vertices~$v_1, v_2\in V(G)$, let $R \coloneqq N(v_1) \setminus N[v_2]$. 
\Cref{lem:vi_non_redundant_props_3} states that if the vi-set~$S$ contains at least all vertices of~$R \cup \{v_1\}$, then~$S$ should also contain vertex~$v_2$.
This can be expressed by the constraint
$$x_{v_1} \leq x_{v_2} + |R| - \sum_{v\in R} x_v.$$ 
We could theoretically add this constraint for every pair of vertices. 
However, we decided to only add the constraint for pairs for which~$R$ has size at most three.

Our final improvement is based on the following observation. Let~$r$ be some number and let~$v$ be a cut vertex in~$G$, that is, a vertex that if removed from~$G$ increases the number of connected components by at least one. 
If removing~$v$ from~$G$ creates a new connected component~$C$ with~$|C| \leq r$ then it is always better to include~$v$ in a modulator for~$\coc[r](G)$ instead of including any vertex in~$C$. 
To generalize this idea, we use the following definition.
To this end, recall that for a vertex set~$C$, $\conn(C)$ denotes the smallest integer~$k$ for which there is a vertex set~$F\subseteq C$ of size~$k$, such that~$G[C]-F$ is not connected.

\begin{definition}\label{def:cut_decomposition}
	For a graph~$G$ and an integer~$r$, a family~$\mathcal{C}$ of pairs of vertex sets is called an \emph{$r$-cut decomposition} of $G$, if each pair~$(C,\cut(C)) \in \mc$ fulfills the following properties: 
	\begin{enumerate}
	\item[a)] $C$ has size at most~$r$, 
	\item[b)] $\cut(C)\subseteq N(C)$, 
	\item[c)] $\conn(C\cup\cut(C))\ge|\cut(C)|$, and 
	\item[d)] for each two pairs~$(C_1,\cut(C_1)), (C_2,\cut(C_2))\in \mc$, we have~$C_1 \cap \cut(C_2) = \emptyset$ or~$C_2\subseteq C_1$.
	\end{enumerate}
\end{definition}

One possible $r$-cut decomposition consists of the pairs~$(\{v\}, D)$ for some fixed vertex~$v\in V(G)$ combined with all sets~$D=\cut(\{v\})$ which fulfill the requirements of  \Cref{def:cut_decomposition}. 
However, in most graphs this will only cover a small part of the graph. 
Ideally we would like to find an $r$-cut decomposition, where each vertex of the graph is part of some set $C$ or some set $\cut(C)$. 
For each pair~$(C, \cut(C))\in \mc$, we ideally also want that~$N(C) \setminus \cut(C)$ is as small as possible. 
The reason for this is the following lemma, which describes how an $r$-cut decomposition can be exploited in the ILP formulation.

\begin{lemma}\label{lem:vi_cut_sets}
Let $\mathcal{C}$ be an $r$-cut decomposition of~$G$.
There is a minimum-size modulator~$S$ for $\coc[r](G)$ such that for each pair $(C, \cut(C))\in \mathcal{C}$, the following conditions hold.
\begin{enumerate}
	\item\label{lem:vi_cut_sets_prop_1} $N(C) \setminus S \subseteq \cut(C) \Rightarrow C\cap S = \emptyset$	
	\item\label{lem:vi_cut_sets_prop_2} $(N(C) \setminus S \subseteq \cut(C) \wedge |C| = r) \Rightarrow \cut(C) \subseteq S$
\end{enumerate}
\end{lemma}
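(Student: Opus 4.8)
The plan is to prove the statement by a standard exchange argument. Start with any minimum-size modulator $S$ for $\coc[r](G)$. We want to transform $S$ into another minimum-size modulator satisfying both conditions for every pair in $\mathcal{C}$, without increasing the size. I would process the pairs of $\mathcal{C}$ and, whenever a pair $(C,\cut(C))$ violates one of the two conditions, locally modify $S$ by swapping out the vertices of $C$ that lie in $S$ and (for condition~2) swapping in the vertices of $\cut(C)$.

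First I would analyze the effect of a single swap for a fixed pair $(C,\cut(C))$ assuming the hypothesis $N(C)\setminus S\subseteq \cut(C)$ holds. Set $S' \coloneqq (S\setminus C)\cup \cut(C)$. The key observations: by property~(a), $|C|\le r$; by property~(b), $\cut(C)\subseteq N(C)$; by property~(c), $\conn(C\cup \cut(C))\ge |\cut(C)|$. I need to show (i) $S'$ is still a modulator for $\coc[r](G)$, i.e.\ every connected component of $G-S'$ has at most $r$ vertices, and (ii) $|S'|\le |S|$. For (ii): since the hypothesis says all neighbors of $C$ outside $S$ lie in $\cut(C)$, removing $C$ from the modulator and inserting $\cut(C)$ is size-neutral or better precisely when $|C\cap S|\ge |\cut(C)\setminus S|$. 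This is where property~(c) enters: if $|C\cap S|<|\cut(C)\setminus S|$, then $|C\cap S| < |\cut(C)|$, but then $S\cap(C\cup\cut(C))$ is a set of fewer than $\conn(C\cup\cut(C))$ vertices whose removal (as part of $S$) must still disconnect... actually the cleaner route: after deleting $S$, the set $C\cup\cut(C)$ minus $S$ must be "resolved" — I would argue that because $|C|\le r$ and the component containing $C$ in $G-S$ has size $\le r$, and because $\cut(C)$ is a small cut, deleting $\cut(C)$ instead of the $C$-vertices in $S$ cannot create a larger component.

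For (i), the main point is that $C\cup\cut(C)$ becomes "sealed off": in $G-S'$ all of $\cut(C)$ is deleted, so by property~(b), $N_{G-S'}(C)\subseteq \cut(C)\cup S' = \emptyset$ relative to $C$ — meaning $C$ forms its own union of components, each of size $\le |C|\le r$. Vertices outside $C\cup\cut(C)$ see no new neighbors (we only removed vertices of $C$ from the modulator, and those are now isolated from the rest by the deletion of $\cut(C)$, and we added $\cut(C)$ to the modulator which only shrinks components elsewhere). So no component of $G-S'$ exceeds the bound $r$. This handles condition~1 (after the swap, $C\cap S'=\emptyset$); condition~2 follows because when additionally $|C|=r$, the swap also puts $\cut(C)\subseteq S'$.

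**The main obstacle** I anticipate is ensuring the modifications for different pairs do not interfere — this is exactly what property~(d) of \Cref{def:cut_decomposition} is for. When I fix pair $(C_1,\cut(C_1))$ and later process $(C_2,\cut(C_2))$, property~(d) says $C_1\cap\cut(C_2)=\emptyset$ or $C_2\subseteq C_1$; I would use this to argue the swaps are either disjoint in their "active region" or nested, so processing pairs in order of decreasing size of $C$ (or via a careful nesting argument) makes all modifications compatible and the final $S$ satisfies both conditions simultaneously for every pair. A secondary subtlety is that the hypothesis $N(C)\setminus S\subseteq\cut(C)$ is on the \emph{current} $S$, which changes as we process pairs — so I would phrase the argument as: fix the final modulator's behavior greedily, or equivalently show that the set of "bad" pairs can only shrink under swaps. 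I would also double-check the size bound using property~(c) carefully, as that inequality $\conn(C\cup\cut(C))\ge|\cut(C)|$ is the nonobvious ingredient guaranteeing the swap is not size-increasing; if $\cut(C)$ had more vertices than $C\cap S$, property~(c) forces a contradiction with $S$ being a valid modulator of minimum size, since then $G[C\cup\cut(C)]-S$ would still be connected and — combined with $|C\cup\cut(C)|$ possibly exceeding $r$ — could violate minimality or the component bound.
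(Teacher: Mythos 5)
Your overall strategy --- an iterative exchange argument, with property~(c) securing feasibility of the swap and property~(d) preventing different pairs from interfering --- is the same as the paper's, but there is a genuine gap in the size-preservation step. Your swap is $S'=(S\setminus C)\cup\cut(C)$, and you correctly observe that this is size-non-increasing only when $|C\cap S|\ge|\cut(C)\setminus S|$. Your fallback for the other case, namely that property~(c) then ``forces a contradiction'' with $S$ being a minimum-size modulator, is false. If $|(C\cup\cut(C))\cap S|<|\cut(C)|\le\conn(C\cup\cut(C))$, then $G[(C\cup\cut(C))\setminus S]$ is connected, so $(C\cup\cut(C))\setminus S$ lies inside a single connected component of $G-S$, which has size at most $r$ simply because $S$ is a modulator; the size of $C\cup\cut(C)$ itself is irrelevant since part of it lies in $S$. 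So this situation is perfectly consistent with minimality (e.g.\ $S$ may contain exactly one vertex of $C$ while $|\cut(C)\setminus S|=2$), and in it your full swap strictly increases $|S|$, destroying minimality. The paper's remedy is a partial swap: remove $C\cap S$ and add only $\min(|\cut(C)|,|C\cap S|)$ vertices of $\cut(C)\setminus S$; the observation above (all of $(C\cup\cut(C))\setminus S$ sits in one component of size at most $r$) then shows that trading equally many modulator vertices keeps every component of $G-S'$ of size at most $r$, while $|S'|\le|S|$ holds by construction.

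This repair has a knock-on effect you would also have to address: with the partial swap, $\cut(C)\subseteq S'$ is no longer automatic, so condition~2 cannot be read off the swap as in your sketch. Instead it must be derived from condition~1: if $N(C)\setminus S\subseteq\cut(C)$, $C\cap S=\emptyset$, $|C|=r$ and some vertex of $\cut(C)$ avoided $S$, then by property~(c) the set $C\cup(\cut(C)\setminus S)$ is connected in $G-S$ and has more than $r$ vertices, contradicting that $S$ is a modulator for $\coc[r](G)$ --- this is exactly how the paper obtains condition~2 from condition~1. The remaining part of your plan (ordering the pairs and showing the set of ``bad'' pairs only shrinks) is only sketched, but the idea you name is the paper's: fix one violating pair, show every pair $(D,\cut(D))$ with $D\cap S=\emptyset$ still has $D\cap S'=\emptyset$ via the case distinction $C\subseteq D$, $D\subsetneq C$, or incomparable (where property~(d) yields $D\cap\cut(C)=\emptyset$), and iterate; that part is acceptable as a sketch, while the size issue above is the real gap.
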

\begin{proof}
Let~$S$ be a minimum-size modulator for $\coc[r](G)$.

We first show that \Cref{lem:vi_cut_sets_prop_2} directly follows from \Cref{lem:vi_cut_sets_prop_1}. For this let~$(C,\cut(C))\in \mc$ be a pair. If $|C| < r$ or $N(C) \setminus S \not\subseteq \cut(C)$ then \Cref{lem:vi_cut_sets_prop_2} trivially holds. This means we can now assume that the left side of \Cref{lem:vi_cut_sets_prop_2} is true. This also means that the left side of \Cref{lem:vi_cut_sets_prop_1} must be true which implies that the right side of \Cref{lem:vi_cut_sets_prop_1} is true. If~$C$ is a connected vertex set in~$G-S$, then we must have~$N(C)\subseteq S$ and in particular~$\cut(C) \subseteq S$ since we know that~$S$ is a modulator for~$\coc[r](G)$ and~$|C| = r$. If~$C$ is not connected in~$G-S$, then we know that~$C\cup \cut(C)$ is connected in~$G$ with~$\conn(C\cup \cut(C)) \geq |\cut(C)|$ (see Property c)). This implies that~$\cut(C) \subseteq S$ since at least that many vertices must be removed in order to disconnect~$C$ in~$G-S$.
Hence, \Cref{lem:vi_cut_sets_prop_2} follows from \Cref{lem:vi_cut_sets_prop_1} as long as $S$ is a modulator for $\coc[r](G)$.

If~\Cref{lem:vi_cut_sets_prop_1} 
 holds for each pair~$(C,\cut(C))\in \mc$ with respect to~$S$, the statement follows.
 Hence, assume that this is not the case and let~$(C,\cut(C))$ be a pair of~$\mc$ that does not fulfill~\Cref{lem:vi_cut_sets_prop_1} with respect to~$S$.

Let~$S'$ be a vertex set obtained from~$S$ by firstly removing all vertices of~$C$ and secondly adding~$\min(|\cut(C)|, |C\cap S|)$~vertices of $\cut(C)\setminus S$ if $N(C) \setminus S \subseteq \cut(C)$.
In the following, we show that~$S'$ is a minimum-size modulator for~$\coc[r](G)$ such that strictly more pairs of~$\mc$ fulfill~\Cref{lem:vi_cut_sets_prop_1}.
First, we show that~$S'$ is a minimum-size modulator for~$\coc[r](G)$.
By construction, $|S'|\leq |S|$.
Hence, we only have to show that~$S'$ is a modulator for~$\coc[r](G)$.

	Without loss of generality, we assume that $N(C) \setminus S \subseteq \cut(C)$ holds since this is the only case where~$S \not= S'$. 	
	If $|(C\cup \cut(C)) \cap S| \geq |\cut(C)|$ all vertices of $\cut(C)$ are contained in~$S'$ and all vertices of $C$ are removed from $S$, that is, $S'\cap C=\emptyset$. 
	Observe that~$|S'|\le |S|$ in this case and $\cc(G - S) \leq r$ still holds since $|C| \leq r$. 
	Also, notice if we have $|C| = r$ the condition~$|(C\cup \cut(C)) \cap S| \geq |\cut(C)|$ must always be true because $\conn(C \cup \cut(C)) \geq |\cut(C)|$. Hence, we have $\cut(C) \subseteq S$.
	
	Otherwise, if~$|(C\cup \cut(C)) \cap S| < |\cut(C)|$, we know that the remaining vertices in~$(C\cup \cut(C)) \setminus S$ are all part of the same connected component in $G - S$. Hence, adding any $|C \cap S|$~vertices from $\cut(C) \setminus S$ to $S$ and then removing all vertices in $C$ from $S$ preserves the property $\cc(G - S) \leq r$ and~$|S'|=|S|$.

By definition of~$S'$, $(C,\cut(C))$ trivially fulfills~\Cref{lem:vi_cut_sets_prop_1} with respect to~$S'$.
Next, we show that for each pair~$(D,\cut(D))\in \mc$, $(D,\cut(D))$ fulfills the right side of~\Cref{lem:vi_cut_sets_prop_1}
with respect to~$S'$ if $(D,\cut(D))$ fulfills the right side of~\Cref{lem:vi_cut_sets_prop_1} 
with respect to~$S$.
This then implies that the above described procedure can be recursively applied and eventually constructs a minimum-size modulator~$S^*$ for $\coc[r](G)$ such that each pair of~$\mathcal{C}$ fulfills~\Cref{lem:vi_cut_sets_prop_1} (and thus also~\Cref{lem:vi_cut_sets_prop_1})  
with respect to~$S^*$, since with each application, at least one additional pair of~$\mc$ fulfills~\Cref{lem:vi_cut_sets_prop_1} with respect to the newly constructed modulator.

Let~$(D,\cut(D))\in \mc$, such that $(D,\cut(D))$ fulfills the right side of~\Cref{lem:vi_cut_sets_prop_1}
with respect to~$S$.
We show that~$(D,\cut(D))$ fulfills~\Cref{lem:vi_cut_sets_prop_1} 
with respect to~$S'$. 
Assume towards a contradiction that this is not the case.
We distinguish three cases.

\textbf{Case $1$:} $C\subseteq D$\textbf{.}
Recall that~$(C,\cut(C))$ does not fulfill the right side of \Cref{lem:vi_cut_sets_prop_1} with respect to~$S$.
Hence, $(D,\cut(D))$ does not fulfill the right side of \Cref{lem:vi_cut_sets_prop_1} with respect to~$S$, since~$C\subseteq D$; a contradiction to the choice of~$(D,\cut(D))$.

	\textbf{Case $2$:} $D\subsetneq C$\textbf{.} 
Since~$S'$ contains no vertex of~$C$, $S'$ contains no vertex of~$D$.
This implies that~$(D,\cut(D))$ fulfills the right side of \Cref{lem:vi_cut_sets_prop_2} with respect to~$S'$; a contradiction.
	
	\textbf{Case $3$:} $D\not\subseteq C$ and $C\not\subseteq D$\textbf{.} 
	In this case, $C \cap \cut(D) = \emptyset$ and $D \cap \cut(C) = \emptyset$ hold (see Property~d)). 
	Hence, we never add vertices in $D$ to $S$, which implies that~$(D,\cut(D))$ fulfills the right side of~\Cref{lem:vi_cut_sets_prop_2} with respect to~$S'$, since~$(D,\cut(D))$ fulfills the right side of~\Cref{lem:vi_cut_sets_prop_2} with respect to~$S$; a contradiction to the choice of~$(D,\cut(D))$.
\end{proof}

To compute an $r$-cut decomposition we do the following:
Let $\mathcal{D}$, $\mathcal{S}$ and $\mathcal{Q}$ be families of subsets of $V(G)$. 
Initially, $\mathcal{D}$ and $\mathcal{S}$ both contain the vertex set of every connected component in $G$, and~$\mathcal{Q}$ is empty. 
We now choose some subset $S\in \mathcal{S}$ such that $G[S]$ is not a clique and find a set of vertices $Q$ that is a minimum vertex cut of $G[S]$. We then remove $S$ from~$\mathcal{S}$, add the connected components of $G[S \setminus Q]$ to $\mathcal{D}$ and $\mathcal{S}$, and add~$Q$ to~$\mathcal{Q}$. 
We repeat this procedure until all subsets in $\mathcal{S}$ induce a clique in~$G$. 
Finally, for each two vertex sets~$D\in \mathcal{D}, Q\in \mathcal{Q}$, we add~$(D,Q)$ to~$\mathcal{C}$ if (i)~$|D| \leq r$, (ii)~$Q\subseteq N_G(D)$, and (iii)~$\conn(D \cup Q) \geq |Q|$. 
A composition created in that way is called an \emph{$r$-min-cut decomposition}.
Next, we verify that an \emph{$r$-min-cut decomposition} is indeed an $r$-cut decomposition.

\begin{lemma}
	An $r$-min-cut decomposition is an $r$-cut decomposition.
\end{lemma}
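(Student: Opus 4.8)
The plan is to observe that properties a), b), and c) of \Cref{def:cut_decomposition} are built into the construction: a pair $(D,Q)$ is placed in $\mathcal{C}$ only when $|D|\le r$, $Q\subseteq N_G(D)$, and $\conn(D\cup Q)\ge|Q|$, which are exactly the three requirements on $(C,\cut(C))$ with $C=D$ and $\cut(C)=Q$. So the whole content of the lemma is property d), and that is where I would spend the proof. Before attacking d) I would record the structural invariants of the construction: throughout the process the sets in $\mathcal{S}$ stay pairwise disjoint and connected, the sets in $\mathcal{Q}$ stay pairwise disjoint (each newly added $Q$ lies inside the set $S$ it is removed from, and $S$ is currently in $\mathcal{S}$, hence disjoint from everything already in $\mathcal{Q}$), and $\bigcup\mathcal{S}$ and $\bigcup\mathcal{Q}$ partition $V(G)$. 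Consequently $\mathcal{D}$ is a \emph{laminar} family: every member of $\mathcal{D}$ is obtained from a connected component of $G$ by repeatedly passing to a connected component of $G[S\setminus Q_S]$, where $Q_S$ is the minimum cut chosen for $S$, and two such sets are either nested (if one refinement sequence is a prefix of the other) or disjoint (if they descend from different components of $G$, or first split into two distinct components of some $G[S\setminus Q_S]$). Regarding $\mathcal{D}$ as a forest under inclusion, the children of a processed set $S$ are precisely the connected components of $G[S\setminus Q_S]$; in particular every set of $\mathcal{D}$ strictly contained in $S$ lies inside one such component and is therefore disjoint from $Q_S$. Finally, each $Q\in\mathcal{Q}$ is the cut $Q_T$ of a uniquely determined $T=T_Q\in\mathcal{D}$, and $Q\subseteq T$.

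With this scaffolding, here is how I would prove d). Take $(D_1,Q_1),(D_2,Q_2)\in\mathcal{C}$ with $D_1\cap Q_2\neq\emptyset$; the goal is $D_2\subseteq D_1$. Set $T:=T_{Q_2}$, so $Q_2\subseteq T$ and $Q_2=Q_T$. Since $T$ is connected and was processed only because $G[T]$ is not a clique, its minimum vertex cut is nonempty, so $Q_2\neq\emptyset$; and since $(D_2,Q_2)\in\mathcal{C}$ we have $Q_2\subseteq N_G(D_2)$, hence $Q_2\cap D_2=\emptyset$. \textbf{Step 1 ($D_2\subsetneq T$).} As $\mathcal{D}$ is laminar, $D_2$ and $T$ are disjoint or nested. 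If $T\subseteq D_2$, then $\emptyset\neq Q_2\subseteq T\subseteq D_2$ contradicts $Q_2\cap D_2=\emptyset$. If $D_2\cap T=\emptyset$, choose $q\in Q_2\subseteq T$ and $d\in D_2$ with $\{q,d\}\in E(G)$; this edge places $D_2$ and $T$ in the same connected component of $G$, so they have a common ancestor in $\mathcal{D}$, and I take the smallest one, $A$. Because $D_2\neq T$ and neither contains the other, $A$ properly contains both, hence is not a leaf and was processed with a cut $Q_A$, and $D_2$ and $T$ lie in two \emph{distinct} components $A_1\supseteq D_2$ and $A_2\supseteq T$ of $G[A\setminus Q_A]$ (otherwise a smaller common ancestor would exist). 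Then $q\in A_2$ and $d\in A_1$ both lie in $A\setminus Q_A$ yet in different components of $G[A\setminus Q_A]$, contradicting $\{q,d\}\in E(G)$. So $D_2\subseteq T$, and since $q\in T\setminus D_2$ the inclusion is strict.

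\textbf{Step 2 ($T\subseteq D_1$).} From $D_1\cap Q_2\neq\emptyset$ and $Q_2\subseteq T$ we get $D_1\cap T\neq\emptyset$, so by laminarity $D_1\subseteq T$ or $T\subseteq D_1$. If $D_1\subsetneq T$, then $D_1$ sits inside one child of $T$, i.e.\ inside a connected component of $G[T\setminus Q_2]$, which is disjoint from $Q_2$; this contradicts $D_1\cap Q_2\neq\emptyset$. Hence $T\subseteq D_1$. Combining, $D_2\subsetneq T\subseteq D_1$, so $D_2\subseteq D_1$, which is property d). Together with the trivial verification of a), b), c), this shows every $r$-min-cut decomposition is an $r$-cut decomposition.

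The main obstacle is not any individual inequality but making the laminar/recursive bookkeeping airtight: checking that the invariants on $\mathcal{S}$ and $\mathcal{Q}$ are preserved under each processing step, that $\mathcal{D}$ really is laminar with the refinement forest as its Hasse diagram — so that both "the children of $T$ are exactly the components of $G[T\setminus Q_T]$" and "every proper subset of $T$ in $\mathcal{D}$ avoids $Q_T$" hold — and that each $Q\in\mathcal{Q}$ can be unambiguously attached to a single $T_Q\in\mathcal{D}$. Once this is nailed down, the two steps above are short, and everything ultimately rests on the one elementary observation that two vertices lying in distinct connected components of $G[A\setminus Q_A]$ are non-adjacent in $G$.
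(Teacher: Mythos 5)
Your proposal is correct and follows essentially the same route as the paper: properties a)–c) are immediate from the construction, and d) is proved via the recursion forest of the min-cut process (the paper makes the cut sets explicit leaf nodes, you attach each cut $Q$ to its processed set $T_Q$, but the laminarity-plus-ancestor argument showing $C_2$ lies strictly inside the parent of $\cut(C_2)$ while $C_1$ contains that parent is the same). Your write-up is somewhat more detailed about the laminar bookkeeping than the paper's, but there is no substantive difference in approach.
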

\begin{proof}
	The Properties a), b) and c) of~\Cref{def:cut_decomposition} are trivially correct based on the three conditions under which we add a pair~$(D,Q)$ to the construction of the~$r$-min-cut decomposition~$\mc$.
	
	In the following, we show Property~d) of~\Cref{def:cut_decomposition}.
	That is, we have to show that~$C_1 \cap \cut(C_2) = \emptyset$ or~$C_2\subseteq C_1$ for any two distinct vertex pairs~$(C_1, \cut(C_1))$, and~$(C_2, \cut(C_2))$ in~$\mc$.
	To this end, we consider an auxiliary rooted forest using~$\mathcal{D}$ and~$\mathcal{Q}$.\footnote{This forest is not computed by our algorithm and just serves the purpose to prove the desired statement.}
	Each node~$N$ in any tree of the forest corresponds to a vertex set.
	More precisely, the nodes of this forest are exactly the vertex sets from~$\mathcal{D}\cup \mathcal{Q}$.
	For each connected component~$X$ of~$G$, the forest contains a rooted tree with root~$C$.
	Furthermore, the children of a node~$N$ are the connected components of~$G[N]- Q$ and the minimum vertex cut~$Q\in \mathcal{Q}$ of~$G[N]$. 
	Observe that if some node~$P$ is an ancestor of some other node~$N$ then we have~$N\subseteq P$ and if they are unrelated we must have~$N \cap P = \emptyset$.

	Let~$(C_1, \cut(C_1))$ and~$(C_2, \cut(C_2))$ be distinct vertex pairs of~$\mc$.
	We now show that~$C_1\cap \cut(C_2)= \emptyset$ or~$C_2 \subseteq C_1$. 
	Assume towards a contradiction that this is not the case, that is, $C_1\cap \cut(C_2)\neq \emptyset$ and~$C_2 \not\subseteq C_1$. 
	By construction of the $r$-min-cut decomposition~$\mc$, $\cut(C_2)$ is contained in~$\mathcal{Q}$ and thus has no children in its unique tree.
	Thus, $C_1$ is an ancestor of~$\cut(C_2)$. 
	Since~$\cut(C_2)\subseteq N_G(C_2)$, $C_2$ was added to~$\mathcal{D}$ at the earliest when~$\cut(C_2)$ was first discovered as a minimum vertex cut. 
	This implies that~$C_2$ is contained in a subtree which is rooted at some sibling of~$\cut(C_2)$.
	Consequently, $C_2\subseteq C_1$; a contradiction. 
	
	This then implies that an~$r$-min-cut decomposition is an $r$-cut decomposition.
\end{proof}

Note that the computations of the families $\mathcal{D}$ and $\mathcal{Q}$ is independent of the choice of~$r$. 
Consequently, we have to compute these families only once and can reuse them for each value of~$r$.
For each concrete value of~$r$, where we want to compute an $r$-cut decomposition, we can then just select the sets from $\mathcal{D}$ of size at most~$r$. 
To actually use an~$r$-cut decomposition as a speed-up for our implementation, we add the following constraint representing~\Cref{lem:vi_cut_sets_prop_1} from~\Cref{lem:vi_cut_sets} to the ILP for each pair~$(C,\cut(C))\in \mathcal{C}$:
\[ n\cdot \left( |N(C) \setminus \cut(C)| - \sum_{v\in N(C) \setminus \cut(C)} x_v \right) \geq \sum_{v\in C} x_v.\]
If~$|C| = r$ we also add the following constraint representing~\Cref{lem:vi_cut_sets_prop_2}:
\[ n\cdot \left( |N(C) \setminus \cut(C)| - \sum_{v\in N(C) \setminus \cut(C)} x_v \right) \geq |\cut(C)| - \sum_{v\in \cut(C)} x_v.\]
Additionally, if~$N(C) \setminus \cut(C) = \emptyset$, then we do not need to add these constraints and can instead use~\Cref{lem:vi_cut_sets_prop_1,lem:vi_cut_sets_prop_2} during the initial graph reduction.

Finally, to ensure that we can use both our main speed-up strategies simultaneously, we need to verify that there exists a vi-set that not just fulfills the properties of \Cref{lem:vi_cut_sets} but also does not contain redundant vertices. 
In the following, we show that this is the case.

\begin{theorem}\label{thm:vi_all_together}
	Let $G$ be a graph.
	Moreover, for each $r\in [|V(G)|]$, let $\mathcal{C}_r$ be an~$r$-cut decomposition.
	There exists a vi-set~$S$ that fulfills the following conditions:
	\begin{enumerate}
		\item $S$ contains no redundant vertices.
		\item There exists an integer~$r$, such that (i)~$S$ is a minimum-size modulator for $\coc[r](G)$ and (ii)~$S$ together with the~$r$-cut decomposition~$\mathcal{C}_r$ fulfill \Cref{lem:vi_cut_sets_prop_1,lem:vi_cut_sets_prop_2} from \Cref{lem:vi_cut_sets}.
	\end{enumerate}
\end{theorem}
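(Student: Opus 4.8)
The plan is to produce a single vi-set that witnesses both algorithmic speed-ups at once. The two requirements look like they might clash: the swap construction behind \Cref{lem:vi_cut_sets} replaces vertices of a modulator and could in principle re-create redundant vertices, while the redundant-vertex removal underlying \Cref{lem:vi_non_redundant_props} changes the size of the modulator. I would reconcile them by applying \Cref{lem:vi_cut_sets} for a component size~$r$ chosen as large as possible, which makes redundancy-freeness automatic.

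Concretely, I would use the identity $\vi(G)=\min\{r+\coc[r](G):r\in\mathds{N}\}$ and let~$r^*$ be the \emph{largest} integer attaining this minimum; this is well defined, and one checks that $r^*\in[|V(G)|]$, so the decomposition~$\mathcal{C}_{r^*}$ is among the ones given in the statement. Applying \Cref{lem:vi_cut_sets} with $r=r^*$ and~$\mathcal{C}_{r^*}$ then yields a minimum-size modulator~$S$ for~$\coc[r^*](G)$ that, together with~$\mathcal{C}_{r^*}$, satisfies \Cref{lem:vi_cut_sets_prop_1,lem:vi_cut_sets_prop_2}. I claim that this~$S$ together with $r=r^*$ already satisfies the theorem; condition~2 holds by construction, so it remains to check that~$S$ is a vi-set and that it contains no redundant vertex.

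For the first point I would show $\cc(G-S)=r^*$ exactly: it is at most~$r^*$ because~$S$ is a modulator for~$\coc[r^*](G)$, and if it were strictly smaller then $|S|+\cc(G-S)<|S|+r^*=\coc[r^*](G)+r^*=\vi(G)$, contradicting that~$\vi(G)$ is a minimum over all vertex sets; hence $|S|+\cc(G-S)=\vi(G)$ and~$S$ is a vi-set. For the second point I would combine the maximality of~$r^*$ with the fact (observed just before \Cref{lem:vi_non_redundant_props}) that deleting a redundant vertex from a vi-set increases the largest component size by \emph{exactly} one: if some $v\in S$ were redundant, then $S\setminus\{v\}$ would be a modulator for~$\coc[r^*+1](G)$ of size $\coc[r^*](G)-1$, so $(r^*+1)+\coc[r^*+1](G)\le r^*+\coc[r^*](G)=\vi(G)$; by minimality this is an equality, so $r^*+1$ also attains the minimum, contradicting the choice of~$r^*$. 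Thus~$S$ has no redundant vertices, completing the proof.

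The step I expect to be the crux is precisely this reconciliation of the two speed-ups. A naive attempt that applies them in sequence --- first deleting redundant vertices, then running the swaps of \Cref{lem:vi_cut_sets}, or the other way around --- has to verify that the second operation does not destroy the property secured by the first, and the \Cref{lem:vi_cut_sets} swaps are delicate because they change both~$|S|$ and, potentially, $\cc(G-S)$, which is exactly the quantity governing which~$\coc[r]$ we are working with. Choosing~$r^*$ maximal removes the difficulty entirely: every minimum-size modulator for~$\coc[r^*](G)$ is then automatically redundancy-free, so no repair step is ever needed after invoking \Cref{lem:vi_cut_sets}.
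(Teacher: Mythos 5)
Your argument is correct, and it is built from the same ingredients as the paper's proof: an extremal choice of the component-size bound, the observation that deleting a redundant vertex from a modulator for $\coc[r](G)$ yields a modulator for $\coc[r+1](G)$ that is one vertex smaller, and an application of \Cref{lem:vi_cut_sets} to obtain \Cref{lem:vi_cut_sets_prop_1,lem:vi_cut_sets_prop_2}. The difference lies in where the extremality sits. The paper picks the largest $r$ for which \emph{some vi-set} is a minimum-size modulator for $\coc[r](G)$, assumes (via \Cref{lem:vi_cut_sets}, ``without loss of generality'') that this particular vi-set also satisfies the cut-decomposition properties, and then derives a contradiction by deleting a redundant vertex, which produces a vi-set that is a minimum-size modulator for $\coc[r+1](G)$. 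You instead pick $r^*$ as the largest minimizer of $r+\coc[r](G)$ and prove that at this value \emph{every} minimum-size modulator for $\coc[r^*](G)$ is forced to be a vi-set (since $|S|+\cc(G-S)\le \coc[r^*](G)+r^*=\vi(G)$) and forced to be redundancy-free (a redundant vertex would make $r^*+1$ a minimizer as well, contradicting maximality). What this buys is that the modulator delivered by \Cref{lem:vi_cut_sets} needs no repair and no compatibility check afterwards, whereas the paper's phrasing has to reconcile the lemma's output with the requirement of being a vi-set before the redundant-vertex step; your formulation makes that reconciliation automatic. So the route is not fundamentally different --- same key lemma, same redundancy mechanics --- but your placement of the extremal choice at the level of the function $r\mapsto r+\coc[r](G)$ is the cleaner and slightly more robust version of the argument.
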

\begin{proof}
Let~$r$ be the largest integer such that there exists a vi-set~$S$, such that~$S$ is a minimum-size modulator for~$\coc[r](G)$.
Due to the definition of vertex integrity and~\Cref{lem:vi_cut_sets}, we can assume without loss of generality that~$\mathcal{C}_r$ fulfills \Cref{lem:vi_cut_sets_prop_1,lem:vi_cut_sets_prop_2} from \Cref{lem:vi_cut_sets} with respect to~$S$.

	If~$S$ contains no redundant vertex, we are done.
	Hence, let us assume that there is a redundant vertex~$v$ in~$S$.
	Consider the vertex set~$S'\coloneqq S \setminus \{v\}$.
	Since~$v$ is a redundant vertex, there is at most one connected component~$C_v$ in~$G-S$ where~$v$ has neighbors in.
	Moreover, since~$S$ is a modulator for~$\coc[r](G)$, $C_v$ has size at most~$r$.
	Hence, $S'$ is a modulator for~$\coc[r+1](G)$, since no component of~$G-S'$ has size more than~$r+1$.
	This implies that~$|S'| + \cc(G-S') \leq |S| + \cc(G-S) = \vi(G)$.
	Recall that by definition of~$\vi$, $\vi(G) \leq |S'| + \cc(G-S')$.
	Hence, $|S'| + \cc(G-S') = \vi(G)$, which implies that~$S'$ is (i)~a vi-set and (ii)~a minimum-size modulator for~$\coc[r+1](G)$.
\end{proof}

\section{Running Time Comparison}

\begin{table}[t]
\tiny
\caption{Average, median, and 90th percentile running times (in ms) for computing the parameters.}
\label{table-para-statistic-times}
\centering

\begin{tabular}{l r r r}
\toprule
$k$ & avg. time & median time & 90th p. time \\
\midrule
$\Delta$ & 0.0 & 0.0 & 0.0 \\
$h$-index & 0.0 & 0.0 & 0.0 \\
$d$ & 0.4 & 0.0 & 1.0 \\
$c$ & 1.6 & 0.0 & 4.0 \\
$\nd$ & 3.6 & 0.0 & 9.2 \\
$\core$ & 2.0 & 1.0 & 4.0 \\
$\core[3]$ & 1.5 & 1.0 & 4.0 \\
$\mw$ & 2.8 & 2.0 & 5.0 \\
$\gamma$ & 43.3 & 9.0 & 122.2 \\
$\vc$ & 11.8 & 9.0 & 22.4 \\
$\bdd[2]$ & 145.6 & 9.0 & 192.6 \\
$\bdd$ & 193.5 & 12.0 & 211.2 \\
$\sw$ & 122.3 & 17.0 & 284.0 \\
$\nabla$ & 46.2 & 20.0 & 146.2 \\
$\cvd$ & 641.6 & 22.0 & 712.4 \\
$\tdp$ & 154\,108.7 & 43.0 & 16\,446.2 \\
$\fvs$ & 126.3 & 103.0 & 161.8 \\
$\dco$ & 11\,738.8 & 125.0 & 9\,082.4 \\
$\pvc$ & 925.4 & 385.0 & 1\,599.8 \\
$\tw$ & 3\,402.6 & 1\,284.0 & 10\,962.0 \\
$\vi$ & 2\,529\,051.8 & 2\,183.0 & 189\,970.0 \\
\bottomrule
\end{tabular}

\end{table}

Finally, we briefly discuss the time needed to compute the parameters. 
\Cref{table-para-statistic-times} shows the average, median, and 90th percentile running times for the considered parameter on the $59$ instances where all parameters could be computed. Unsurprisingly, the running time for the polynomial-time computable parameters is negligible with the exception of the splitwidth and the Dilworth number which are still easily computable in the given time frame. Among the parameters that are NP-hard to compute, the vertex cover number can be computed quite fast compared to the other parameters of which the currently hardest parameters seem to be~distance to cographs,~treewidth,~tree depth, and vertex integrity.

\begin{figure}
\centering
\includegraphics[scale=0.6]{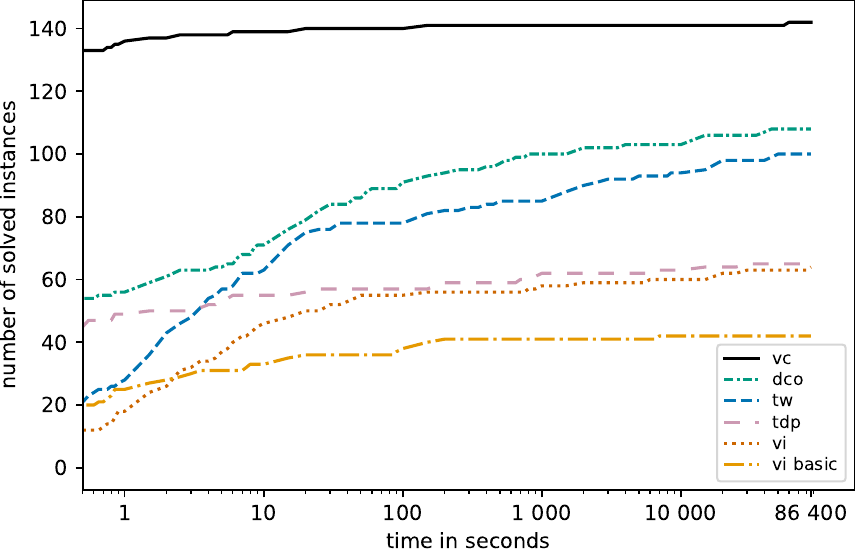}
\caption{Running time comparison between the parameters $\vc$, $\tw$, $\tdp$ and the basic and improved version of the $\vi$ algorithm.  
For each parameter and each time~$t$ the figure shows on how many of the $87$~instances that parameter could be computed in at most $t$~seconds.}
\label{fig:running_time_plot}
\end{figure}

In \Cref{fig:running_time_plot} we compare these parameters on all $144$ graphs. We also included vertex cover as a contrast since it is the easiest of the parameters that are NP-hard to compute. The plot shows for each time $t$ on how many graphs each parameter could be calculated in at most $t$ seconds.
We can see that it is a lot more difficult to compute~$\dco$,~$\tw$,~$\tdp$ and~$\vi$ compared to~$\vc$ where almost all instances could be solved in less than one second. However, even the~$\dco$, $\tw$ algorithms are quite fast compared to the algorithms for~$\tdp$ and~$\vi$. This shows that it is important to develop faster algorithms for computing these parameters.

In \Cref{fig:running_time_plot} we also compare the two different algorithms for~$\vi$ described in \Cref{sec:vi_algos}. Recall that the algorithm denoted as \texttt{vi}~\texttt{basic} uses a simple ILP-formulation while the algorithm denoted as \texttt{vi} is an improved version with additional reduction rules and ILP constraints.
We see that the improved version of the algorithm is significantly faster than the basic version which could only solve 42 instances. 
Moreover, each instance solved by \texttt{vi}~\texttt{basic} within the time limit is solved by \texttt{vi} in less than 800~seconds.

\end{document}